\numberwithin{equation}{section}
\newcommand{\be}{\begin{equation}}
\newcommand{\ee}{\end{equation}}			
\newcommand{\ben}{\begin{equation*}}
\newcommand{\een}{\end{equation*}}
\newcommand{\mc}{\mathcal}
\newcommand{\mbf}{\mathbf}
\newcommand{\msf}{\mathsf}
\newtheorem{lemma}{Lemma}[section]
\newtheorem{thm}{Theorem}
\newtheorem{fact}{Fact}
\newtheorem{note}{Note}
\newtheorem{rem}{Remark}
\newcommand{\e}{\epsilon}
\newcommand{\abs}[1]{\left \vert#1\right \vert}
\newcommand{\norm}[1]{\lVert#1\rVert}
\newcommand{\ex}{\mathbb{E}}
\newcommand{\wh}[1]{\widehat{#1}}
\newcommand{\prb}{\mathbb{P}}
\newcommand{\by}{\mathbf{y}}
\newcommand{\bw}{\mathbf{w}}
\newcommand{\bay}{\textsf{Bayes}}
\newcommand{\bst}{\boldsymbol{\theta}}
\newcommand{\bsth}{\hat{\boldsymbol{\theta}}}
\DeclareMathOperator*{\argmin}{arg\,min}
\title{\vspace{-20pt} Empirical Bayes Estimators for High-Dimensional Sparse Vectors}
\author{{Pavan Srinath}\\{University of Cambridge}
  \\{\small {srinath.pavan@gmail.com}}
 \and Ramji Venkataramanan\\ {University of Cambridge}
  \\ {\small{ramji.v@eng.cam.ac.uk}}
}
\begin{document}
\maketitle
\thispagestyle{empty}	

\begin{abstract}
{The problem of estimating a high-dimensional sparse vector $\bst \in \mathbb{R}^n$ from an observation in i.i.d. Gaussian noise is considered. The performance is measured using squared-error loss. An empirical Bayes shrinkage estimator, derived using a Bernoulli-Gaussian prior, is analyzed and compared with the well-known soft-thresholding estimator.  We obtain  concentration inequalities for the Stein's unbiased risk estimate and the loss function of both  estimators. The results show that for large $n$, both the risk estimate and the loss function concentrate on deterministic values close to the true risk.

Depending on the underlying $\bst$, either the proposed empirical Bayes (eBayes) estimator or soft-thresholding may have smaller loss. We consider a hybrid estimator that attempts to pick the better of the soft-thresholding estimator and the eBayes estimator by comparing  their risk estimates. It is shown that: i) the loss of the hybrid estimator concentrates on the minimum of the losses of the two competing estimators, and ii) the risk of the hybrid estimator is within order $\frac{1}{\sqrt{n}}$ of  the minimum of the two risks. Simulation results are provided to support the theoretical results. Finally, we use the eBayes and hybrid estimators as  denoisers in the  approximate message passing (AMP) algorithm for compressed sensing,  and show that their performance is superior to the soft-thresholding denoiser in a wide range of settings.
}
{Sparse estimation, Shrinkage estimators, Stein's unbiased risk estimate, Soft-thresholding,  Large deviations, Concentration inequalities}
\end{abstract}


\section{Introduction} \label{sec_intro}
Consider the problem of estimating a sparse vector  $\bst \in \mathbb{R}^n$ from a noisy observation 
$\mbf{y}$ of the form
\be \mbf{y} = \bst + \mbf{w}. \label{eq:obs_model} \ee
The noise vector $\mbf{w} \in \mathbb{R}^n$ is  distributed as $\mc{N}(\mbf{0},  \mbf{I})$, i.e., its components are i.i.d. Gaussian  with mean zero and  unit variance.\footnote{The case where $\mbf{w} \sim \mathcal{N}(\mbf{0},  \sigma^2\mbf{I})$ with $\sigma$ known reduces to the above form by rescaling $\mbf{y}$ by $1/\sigma$, so that $\bst/\sigma$ is to be estimated.}

In this paper, the performance of an estimator $\bsth$ is measured using the squared-error loss function given by 
 $L(\boldsymbol{\theta}, \hat{\boldsymbol{\theta}}(\mbf{y}) ) \vcentcolon= \norm{\hat{\boldsymbol{\theta}}(\mbf{y})- \boldsymbol{\theta}}^2$,
 where $\norm{\cdot}$ denotes the Euclidean norm. The \emph{risk} of the estimator for a given $\bst$ is the expected value of the loss function:
 \[ R(\boldsymbol{\theta}, \hat{\boldsymbol{\theta}} ) \vcentcolon=  \mathbb{E}_{\bst}\left[ \norm{ \bsth(\mbf{y}) - \boldsymbol{\theta}}^2 \right]. \]
 We emphasize that $\bst$ is deterministic, so the expectation above is computed over $\mbf{y} \sim \mc{N}(\mbf{\bst}, \mbf{I})$. In the remainder of the paper, for brevity we drop the subscript on the expectation. 
 We assume that $\bst$ has $k$ non-zero entries out of $n$, where $k$ may not be known to the estimator. Though our results are general, they are most interesting for the case where $k = \Theta(n)$. Thus as $n$ gets large, the sparsity level $\eta:=k/n$ is bounded above and below by arbitrary constants in $(0,1]$.
 
 The sparse estimation problem has been widely studied  \cite{donoho1994ideal,donoho1994minimax,donoho_johnstoneWav,johnstone2004needles,johnstone2005empirical, LeungBarron06, leungthesis04, carvalho_horseshoe,johnstoneBook} due to its fundamental role in non-parametric function estimation (see, e.g., \cite[Sec. 1.10]{tsybakov09Book}). Indeed, if the function has a sparse representation in an orthogonal basis (e.g., a Fourier or wavelet basis),  then \eqref{eq:obs_model} models the problem of estimating the function from a noisy measurement of $n$ basis coefficients. 
 Another motivation for  constructing good sparse estimators  comes from Approximate Message Passing (AMP)  algorithms for compressed sensing. Recall that the goal in compressed sensing \cite{CandesTaoLP, donohoCS, candesRT06} is to recover a sparse vector $\bst$ from a noisy linear measurement of the form $\mbf{A}\bst  + \text{noise}$, where $\mbf{A} \in \mathbb{R}^{m \times n}$ is a measurement matrix with $m < n$. AMP \cite{DonMalMont09,BayMont11,bayMontLASSO,BayatiLM15, krz12,Rangan11} refers to  a class of low-complexity iterative algorithms that can be used to estimate $\bst$, under certain conditions on the measurement matrix $\mbf{A}$. In each iteration,  AMP produces an effective observation vector  that is well-approximated  as the sum of the desired signal $\bst$ and a Gaussian noise vector, i.e.,  the effective observation is well-represented by the model \eqref{eq:obs_model}.   Then, the AMP algorithm uses a sparse estimator  to generate an updated estimate of $\bst$  from the effective observation in each iteration. We discuss the application of the  sparse estimators proposed in this paper to compressed sensing AMP in Sec. \ref{subsec_AMP}.

Thresholding estimators are a popular class of estimators for the model \eqref{eq:obs_model} when $\bst$ is assumed to be sparse \cite{johnstoneBook, donoho1994ideal,donoho1994minimax, donoho_johnstoneWav, johnstone2004needles,johnstone2005empirical}. In these estimators, the entries of $\mbf{y}$ whose absolute value falls below a threshold $\lambda >0$ are set to zero. The remaining entries  of $\mbf{y}$ may either be retained without modification (hard-thresholding), or  shrunk towards the origin by an amount $\lambda$ (soft-thresholding). The soft thresholding estimator $\bsth_{ST}$ with threshold $\lambda$  is given by 
\begin{equation}\label{eq_ST}
 \hat{\theta}_{ST,i}(y_i;\lambda) = \left\{\begin{array}{cl}
                               y_i - \lambda & \textrm{if }y_i > \lambda \\
                               0, & \textrm{if } -\lambda \leq y_i \leq \lambda \\
                               y_i + \lambda & \textrm{if }y_i < -\lambda.\\                               
                              \end{array}\right., \quad i \in [n].
\end{equation}
Thresholding estimators have many attractive properties. For example, when $n$ is large and the sparsity level $\eta =k/n \to 0$,  the worst-case risk  over the set of $\eta$-sparse vectors is  $2 \eta \log \eta^{-1}(1 + o(1))$. This is close to minimax over the set since only the $o(1)$ term can be improved by a better estimator \cite[Chapter 8]{johnstoneBook}. However, no sharp theoretical bounds exist for the risk of thresholding estimators for moderate or large values of $\eta$.

In this paper, alongside soft-thresholding, we consider an empirical Bayes estimator derived using a Bernoulli-Gaussian prior. This estimator is motivated by the empirical Bayes derivation of James-Stein (shrinkage) estimators by Efron and Morris\cite{efron_morris}. For the observation model given by \eqref{eq:obs_model}, if we assume a Gaussian prior $ \mathcal{N}(\mu \textbf{1},  \xi^2\mbf{I})$ on $\bst$ (where $\mbf{1}$ denotes the all-ones vector), then the Bayes estimator is 
\be
\bsth_{\bay} = \mu \mbf{1} + \left(1 - \frac{1}{1+\xi^2} \right)\left( \by - \mu \mbf{1} \right).
\label{eq:Bayes_est}
\ee
 In \cite{efron_morris}, Efron and Morris use plug-in estimates for $\mu$ and  $1/(1 + \xi^2)$, based on 
\ben
\ex \left[ \frac{1}{n} \sum_{i=1}^n y_i \right] = \mu \quad \text{ and } \quad  \ex \left[\frac{n-3}{\norm{\mbf{y} - \mu \mbf{1}}^2} \right] = \frac{1}{1+\xi^2},
\een
to obtain the following shrinkage estimator:
\be 
\bsth_{\msf{L}} = \bar{y} \mbf{1} + \left(1 - \frac{n-3}{ \norm{\mbf{y} - \bar{y} \mbf{1}}^2}  \right)_+\left( \by - \bar{y} \mbf{1} \right). 
\label{eq:lindley_est}
\ee
Here $\bar{y}=\sum_i y_i/n$, and the notation $x_+$ denotes $\max (x,0)$.  The estimator $\bsth_{\msf{L}}$ in \eqref{eq:lindley_est} is the positive-part version of Lindley's estimator \cite{lindley,baranchik}, which shrinks each element of $\mbf{y}$ towards the empirical mean 
$\bar{y}$. Taking the positive-part of the shrinkage term ensures that it is always non-negative, as in the underlying Bayes estimator \eqref{eq:Bayes_est}. When there no  assumptions on the structure of $\bst$, the shrinkage estimator $\bsth_{\msf{L}}$ has several attractive properties including uniform dominance of the maximum-likelihood estimator (see, for example, \cite[Chapter $5$]{lehmannCas98}).

 In our model, it is  known that $\bst$ is sparse, though the sparsity level $\eta$ may be unknown.  To incorporate this knowledge, we consider an empirical Bayes estimator derived using a prior  for each element of $\bst$ that is a mixture of a point mass at $0$ and a continuous distribution with density $\psi(\theta; \mu,\xi)$, where $\mu$ is a location parameter (mean) and $\xi$ is a scale parameter. With a mixture weight $\e \in [0,1]$ to control the sparsity, the prior is given by
 \be
f(\theta; \e, \mu, \xi) = (1- \e) \delta(\theta) + \e \,  \psi(\theta; \mu, \xi), \quad \theta \in \mathbb{R}.
\label{eq:eb_prior}
\ee
As above, assuming $\psi$ to be the Gaussian density, we can derive an empirical Bayes estimator using plug-in estimates $\hat{\mu}, \wh{\xi^2}$ for the location and scale parameters, respectively. The resulting empirical Bayes (eBayes) estimator is given in  \eqref{eq:eBayes} in the next section. The mixture weight $\e$, which determines the sparsity of the prior, is treated as a fixed parameter that could be optimized. In particular, it need not be the true sparsity $\eta$ (which may be unknown).

\begin{figure*}[t]
    \centering
    \begin{subfloat}[\label{fig1}]{
       \includegraphics[trim= 0.9in 0 1.22in 0, clip=true,width= 2.7in,height=2.2in]{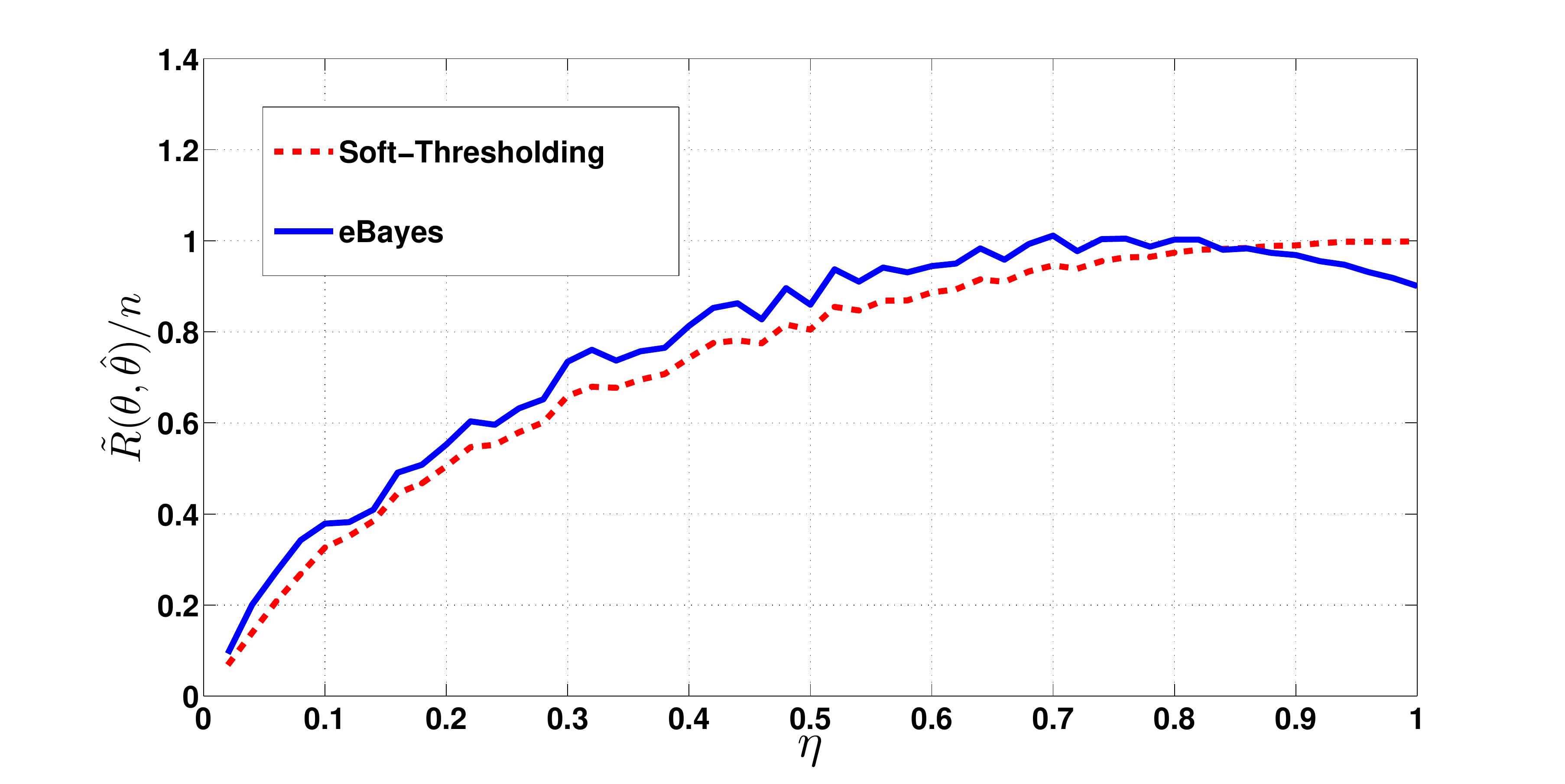}   
       }
    \end{subfloat}
  \quad
    \begin{subfloat}[\label{fig2}]{
       \includegraphics[trim= 0.9in 0 1.22in 0, clip=true,width= 2.7in,height=2.2in]{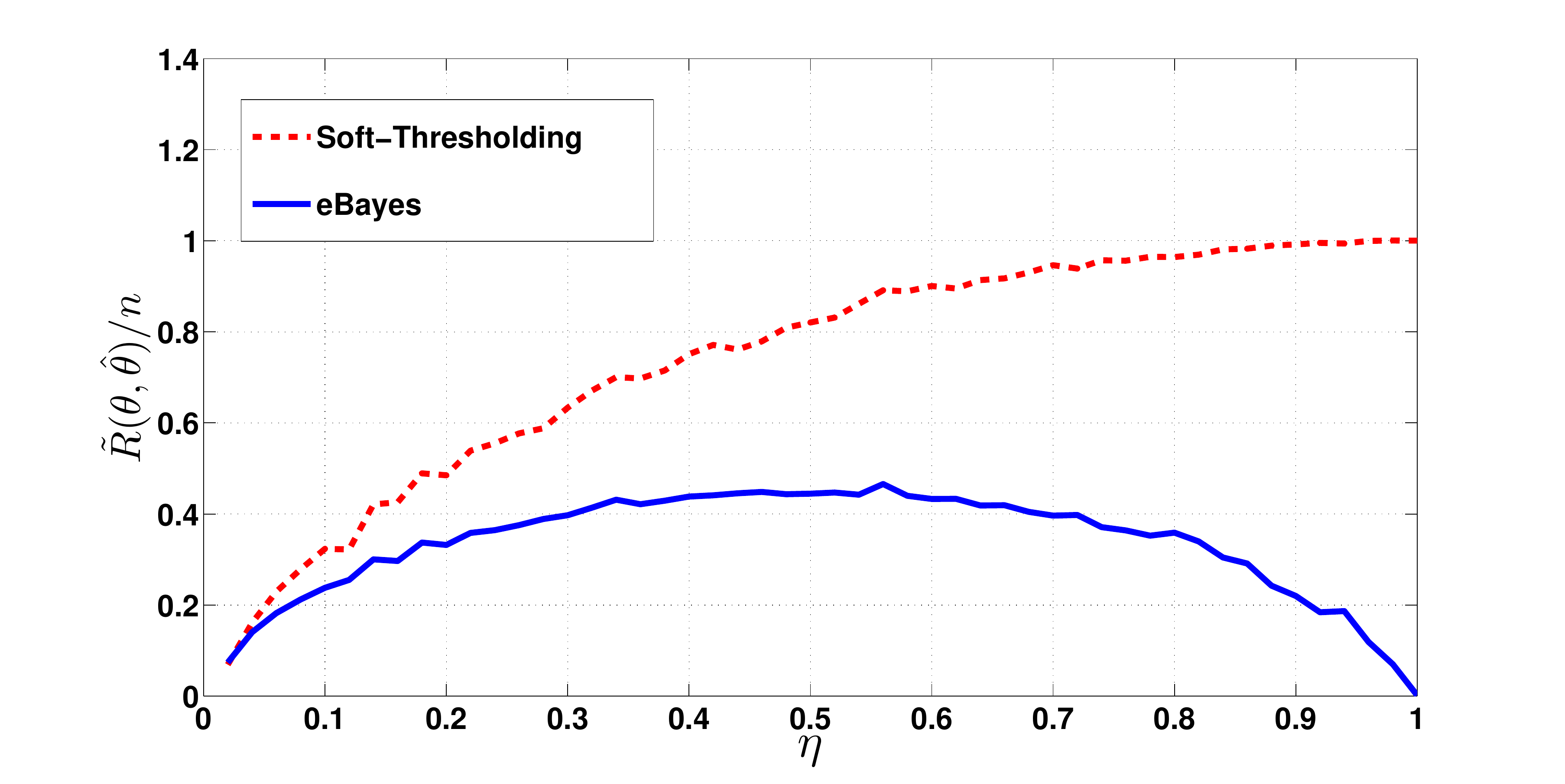}
       }
    \end{subfloat}    
    \caption{\small Average normalized loss $\tilde{R}(\bst,\bsth)/n$ with $n=1000$ for the following cases: a) Half the non-zero entries in $\bst$ equal $3$ and the other half $-3$. b) All the non-zero entries in $\bst$ equal $3$. In each case, the average normalized loss  is computed over 1000 independent realizations of the noise vector $\bw$.}
    \label{fig_12}
\end{figure*}

Depending on the underlying $\bst$ and the noise realization $\mbf{w}$, either the soft-thresholding estimator or the eBayes estimator may have the smaller loss. This is illustrated in Fig.  \ref{fig_12}, which compares the performance  of the two estimators for two different kinds of $\bst$ of length $n=1000$. The average normalized losses  for the two cases are shown Figs. \ref{fig1} and \ref{fig2} as a function of the true sparsity level $\eta = \e$, which is assumed to be known for both estimators. In the figures, the threshold $\lambda^*$ for $\bsth_{ST}$ is chosen as  \cite[Sec. 3]{montanari_graphical_models}
\begin{align*}
 \lambda^*& = \argmin_{\lambda  \geq 0}\left\{\e(1 + \lambda^2) + (1-\e)\left[2(1+\lambda^2)\Phi(-\lambda) - 2 \lambda \phi(\lambda) \right] \right \},
\end{align*}
where $\phi$ is the standard normal density, and $\Phi(x) \vcentcolon= \int_{-\infty}^{x} \phi(u) du$.  This choice $\lambda^*$ minimizes the worst-case soft-thresholding risk over the class of all  
$\bst$  with sparsity level $\e$ \cite{donoho1994minimax}. 

The plots indicate that depending on the underlying $\bst$, either $\bsth_{ST}$ or $\bsth_{EB}$ may have smaller loss.  The goal is to construct an estimator that reliably chooses the estimator with lower loss. Noting that the loss depends on the underlying $\bst$ as well as the noise realization, we propose a hybrid estimator that chooses one of the two  competing estimators by comparing their Stein's unbiased risk estimates (SURE) \cite{stein2}. These risk estimates are given in Section \ref{sec_risk_estimator}. A key result of this paper is that for any $\bst$, the loss of the hybrid estimator (which chooses one of the two estimators based on SURE) concentrates on the smaller of the two losses. In particular,  the probability of the actual normalized loss deviating from the smaller one by more than $t$  decays exponentially in $n\min\{t, t^2\}$ for $t >0$. 

The contributions of this paper are as follows:
\begin{itemize}
\item   We derive the eBayes estimator in Sec. \ref{sec_ebayes}, and  a risk function estimator based on Stein's unbiased risk estimate (SURE)  in  Sec. \ref{sec_risk_estimator}.
 
 \item  Sec. \ref{sec_loss_conc}  contains the main theoretical results of the paper. The first result (Theorem \ref{thm1}) is a concentration inequality for the SURE of eBayes, which shows that for large $n$, the risk estimate concentrates on a deterministic value which is within $\mc{O}({1}/{\sqrt{n}})$ of the true risk.  We remark that unlike the soft-thresholding estimator, the concentration of the SURE for eBayes cannot be established directly via readily available Gaussian concentration results as it does not satisfy Lipschitz or similar conditions.
 
We then show in Theorem \ref{thm1a}  that the loss of the eBayes estimator concentrates on a deterministic value that is within $\mc{O}({1}/{\sqrt{n}})$ of the eBayes risk. Theorem \ref{thm1b}  shows that soft-thresholding  loss concentrates on the true risk of soft-thresholding.  Finally, we use the above concentration results to analyze the performance of a hybrid estimator which chooses the estimator (soft-thresholding or eBayes) with the smaller risk estimate. Theorem \ref{thm2} shows that  for  the  hybrid estimator, the loss concentrates on the minimum of the losses of the two rival estimators, and its risk is within $\mc{O}({1}/{\sqrt{n}})$ of the minimum of the two risks.  Thus, the hybrid estimator uses the data to reliably choose an estimator tailored to the underlying $\bst$. 

  \item Sec. \ref{sec_simulation}  provides simulation results  to validate the theoretical results. The simulation results suggest that the proposed eBayes estimator is superior to soft-thresholding in a variety of cases, including the case where the non-zero entries come from a distribution with heavier-than-Gaussian tails, e.g., the Laplace distribution.   In Sec. \ref{subsec_AMP}, we use the eBayes and the hybrid estimators as denoisers in the AMP algorithm for compressed sensing, and compare their performance with that of the soft-thresholding denoiser. 
\end{itemize}

The approach taken in this paper of obtaining concentration inequalities for risk estimates can be used to bound the risk of a hybrid estimator that picks one among several estimators,  provided one has concentration bounds for the risk estimates of each of the competing estimators.  This suggests that an interesting direction for future work  is to obtain concentration bounds for the risk estimates and loss functions of other useful estimators whose parameters depend on the data, e.g., an empirical Bayes estimator based on a Bernoulli-Laplace prior. 

\subsection{Related Work}

In the context  of wavelets, several works have considered estimators based on a signal prior  that is a mixture of a point mass at $0$ and a Gaussian distribution,  see e.g., \cite{abramovich1998wavelet,clyde1998multiple}. In most of these works,  the hyperparameters of the prior are chosen based on some prior information about the signal.    Martin and Walker \cite{martin2014asymptotically} propose an  estimator based on a data-dependent prior, and show that the resulting empirical Bayes estimator is asymptotically minimax.  Johnstone and Silverman \cite{johnstone2004needles, johnstone2005empirical} proposed empirical Bayes estimators  based on a prior that is a mixture of a point mass at $0$ and a distribution with a heavy-tailed density. The weights of the mixture  are first determined using marginal log-likelihood; the estimator then uses  a thresholding rule based on the posterior median.  It was shown that the risk of  this estimator over the class of $\eta$-sparse vectors is within a constant factor of the minimax risk when the sparsity level $\eta$ is small enough. 

In this paper, we fix the mixture weight for the eBayes estimator and then empirically estimate the location and scale parameters of the continuous part of the prior. This allows us to obtain concentration inequalities for the risk estimates, which then lead to concentration results for the loss and bounds for the risk for both the eBayes and the hybrid estimator. 

Our previous work \cite{SrinathV16} also used concentration inequalities to characterize the performance of estimators with data-driven parameters. However, the estimators  proposed in that paper were for general $\bst$, as opposed to the sparse $\bst$ considered here.  Moreover, the loss function estimates in  \cite{SrinathV16} are not based on SURE as the estimators are not smooth. Consequently, the techniques required to obtain the concentration results in  \cite{SrinathV16} are quite different from those used here.

A recent paper by Zhang and Bhattacharya \cite{ZB17}  also considers an empirical Bayes estimator defined via the prior in \eqref{eq:eb_prior}. The parameters of the prior are estimated by maximizing the marginal likelihood using the EM algorithm,  and the properties of the posterior median and the posterior mean are studied. When the  density $\psi$ in the prior is unimodal and satisfies certain conditions, it is shown in \cite[Theorem 2.2]{ZB17} that the SURE corresponding to the posterior mean is within $\mathcal{O}(\frac{(\log n)^{3/2}}{\sqrt{n}})$ of the true risk with high probability. We  comment on the differences between this result and our SURE concentration result  (Theorem \ref{thm1}) in Note \ref{note:ZBcomparison} on p.\pageref{note:ZBcomparison}.

As an alternative to using a hybrid estimator that picks one of several  estimators based on risk estimates, George \cite{george1} and Leung and Barron \cite{LeungBarron06, leungthesis04} have proposed combining the estimators using exponential mixture weights  based on the risk estimates. We note that in high dimensions, the weight assigned to the estimator with the smallest risk estimate is exponentially larger (in $n$) than the others, so it is effectively equivalent to picking the estimator with the smallest risk estimate.

\emph{Notation}: The set $\{1,2,\cdots,n\}$ is denoted by $[n]$. Bold lowercase (uppercase) letters are used to denote vectors (matrices), and plain lowercase letters for their entries.  For example, the entries of  $\mathbf{y}$ are  $y_i$, $i=1,\cdots,n$. All vectors have length $n$ and are column vectors. The transpose of $\by$ is denoted by $\by^T$. The complement of an event $\mc{E}$ is denoted by $\mathcal{E}^c$, and its indicator function by $\mathsf{1}_{\{\mathcal{E}\}}$. For a random variable $X$, $X_+$ denotes $\max(0,X)$. For positive-valued functions $f(n)$ and $g(n)$, the notation $f(n) = \mathcal{O}(g(n))$ means that $\exists k>0$ such that $\forall n>n_{0}$, $f(n) \leq k g(n)$. Also, for a sequence of random variables $\{ X_n, n=1,2,\cdots \}$ and a sequence of deterministic numbers $\{ a_n, n=1,2,\cdots \}$, the notation $X_n = \mathcal{O}_P(a_n)$ implies that for any $\delta > 0$, there exists a finite $M > 0$ and a finite $N > 0$ such that $\prb\left(\vert X_n/a_n\vert \geq M \right) \leq \delta$, $ \forall n > N$.


\section{Empirical Bayes Estimator}\label{sec_ebayes}

If the $\{\theta_i \}, i \in [n]$ were generated i.i.d. according to the distribution $f(\theta; \e, \mu, \xi)$ in \eqref{eq:eb_prior}, then  the conditional mean  of $\theta$ given $y$  is the optimal estimator for squared-error loss. The empirical Bayes estimator for a \emph{fixed} $\e \in [0,1]$ is this conditional mean, with the values of $\mu, \xi$ estimated from the data $\mbf{y}$. Hence, $ \forall i \in [n]$,
\be
\bsth_{EB, i}(\mbf{y;\e}) = \frac{\int_{\mathbb{R}}  x f(x; \e, \hat{\mu}, \hat{\xi}) \phi(y_i-x) dx}{ \int_{\mathbb{R}} f(x; \e, \hat{\mu}, \hat{\xi}) \phi(y_i-x) dx}.
\label{eq:EB_estimator}
\ee 
In \eqref{eq:EB_estimator}, $\phi(x):=\frac{1}{\sqrt{2\pi}}e^{-x^2/2}$ is the standard normal density, and $\hat{\mu}, \hat{\xi}$ are the estimates of  $\mu, \xi$ from $\mbf{y}$. A consistent estimator for the location parameter $\mu$ (converging in probability to $\mu$) is
\be \hat{\mu}(\mbf{y}) = {\bar y}/ {\e},   \label{eq:mean_estimate} \ee
where  the empirical  mean $\bar{y} = \sum_{i} y_i /n$. The scale parameter can be estimated using the second moment 
$\overline{y^2} := \norm{\mbf{y}}^2/n$ and the mean $\bar{y}$. In this paper, we consider the Gaussian density for $\psi$ in \eqref{eq:eb_prior} so that \[\psi(\theta; \mu, \xi) = \frac{1}{\sqrt{2 \pi \xi^2}} \exp(-(\theta-\mu)^2 / 2 \xi^2).\]
The mean $\mu$ is estimated as in \eqref{eq:mean_estimate}, and $\xi^2$, being the variance, is estimated as 
\be \widehat{\xi^2}(\mbf{y}) = \frac{1}{\e}\left( \overline{y^2} - \frac{(\bar{y})^2}{\e}  - 1\right)_+. \label{eq:xi_estimate} \ee
The resulting empirical Bayes estimator is 
\be \label{eq:eBayes}
\bsth_{EB, i}(\mbf{y;\e})  =  \frac{\hat{\mu} + \left( 1 - \frac{1}{ 1 + \widehat{\xi^2}}\right)(y_i - \hat{\mu})}{1 + \frac{(1-\epsilon)}{\e} \sqrt{ 1 + \widehat{\xi^2}}\,
\exp\left(- \frac{y_i^2}{2} + \frac{(y_i-\hat{\mu})^2}{2(1+\widehat{\xi^2})}\right)}, \ i \in [n].
\ee
For $\e=1$, $\bsth_{EB}$ reduces to the  positive-part Lindley's estimator given in \eqref{eq:lindley_est}. 

Note that $\bsth_{EB}$ is a shrinkage estimator --- the numerator shrinks each $y_i$ towards a common element $\hat{\mu}$. There are two terms that determine the overall shrinkage, the first being the term $ \left[ 1 - \frac{1}{ 1 + \widehat{\xi^2}}\right]$ which is common for all the $y_i$. The second term influencing the shrinkage  is the exponential in the denominator which depends on $y_i$. To get intuition about the role of these terms,   assume that the location parameter is zero, i.e., $\hat{\mu} = 0$ in \eqref{eq:eBayes}. Then the estimator is given by
\be \label{eq:eBayes_zero_location}
\bsth_{EB, i}(\mbf{y};\e)  =  \frac{ \left(  \frac{ \widehat{\xi^2}}{ 1 + \widehat{\xi^2}}\right)y_i }{1 + \frac{(1-\epsilon)}{\e} \sqrt{ 1 + \widehat{\xi^2}}
\exp\left(- \frac{\widehat{\xi^2}y_i^2 }{2(1+\widehat{\xi^2})} \right) }
\ee
with $\widehat{\xi^2} = (1/\e)(\overline{y^2}-1)_+$. When the magnitude of $\theta_i$ is large ($\gg 1$), $y_i$ is likely to have  large magnitude as well; hence, the amount of shrinkage due to the denominator is smaller. On the other hand, for $\theta_i$ with smaller magnitude, $y_i$ is also likelier to have smaller magnitude and the amount of shrinkage  is correspondingly larger. 

To further understand the role of the shrinkage factor in the numerator, suppose that an oracle provided us with the values  $\{ \theta_i^2 \},  i\in[n]$. Then, the ideal linear minimax estimator is \cite{candes}
\begin{align*}
 \hat{\theta}_i = \frac{\theta_i^2} {(1 + \theta_i^2)} y_i, ~~~~i \in [n].
\end{align*}
 Noting that $\norm{\bst}^2/(n\e)$ is the mean of the $\{ \theta_i^2 \}$ for $\theta_i \neq 0$, in the absence of the oracle, the estimator attempts to approximate the term ${\theta_i^2}/{(1 + \theta_i^2)}$ via the ratio $\frac{\norm{\bst}^2/(n\e)}{1 + \norm{\bst}^2/(n\e)}$.  This ratio in  turn is well-approximated for large $n$ by $\widehat{\xi^2}/(1+\widehat{\xi^2})$ ---  this  can be seen from \eqref{eq:xi_estimate} by observing that $\overline{y^2} = \norm{\by}^2/n$ is close to its mean $\norm{\bst}^2/n + 1$ (when $\bar{y} = 0$). This is the significance of the common shrinkage factor in the numerator.  The denominator further shrinks the estimate if it believes that the $\theta_i$ has a small magnitude. 

To summarize,  in \eqref{eq:eBayes_zero_location}, the $y_i$ corresponding to the large non-zero components of $\bst$ are shrunk by approximately $\frac{\norm{\bst}^2/(n\e)}{1+ \norm{\bst}^2/(n\e)}$, while those corresponding to the zero components of $\bst$ are made even closer to $0$.



\section{Risk Estimates and the Hybrid Estimator}\label{sec_risk_estimator}

Recall from Fig. \ref{fig_12} that  depending on the underlying $\bst$, either $\bsth_{ST}$ or $\bsth_{EB}$ may have smaller loss. To construct a hybrid estimator that reliably chooses the better estimator, we use Stein's unbiased risk estimate (SURE) \cite{stein2} to estimate the losses of each estimator.

\begin{fact}
  \cite{stein2} If an estimator $\bsth(\by)$ is almost everywhere differentiable, then
 \begin{equation*}
  \hat{R}(\bst,\bsth(\by)) \vcentcolon = -n + \norm{\by - \bsth}^2 + 2 \sum_{i=1}^n \frac{\partial \hat{\theta}_i}{\partial y_i}
 \end{equation*}
is an unbiased estimate of the risk $R(\bst,\bsth)$, i.e., $\ex \left[\hat{R}(\bst,\bsth(\by))\right] = R(\bst,\bsth)$ where the expectation is again over $\by \sim \mathcal{N}(\bst,\mathbf{I})$. $\hat{R}(\bst,\bsth(\by))$ is called the SURE of $\bsth$. 
\end{fact}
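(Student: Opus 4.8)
The plan is to prove this by the classical route: expand the loss around the observation $\by$ and apply Stein's lemma (Gaussian integration by parts) coordinatewise. Writing $\bsth(\by) - \bst = (\bsth(\by) - \by) + \bw$, where $\bw = \by - \bst \sim \mc{N}(\mbf{0},\mbf{I})$, we get
\[
\norm{\bsth(\by) - \bst}^2 = \norm{\by - \bsth(\by)}^2 + 2\,(\bsth(\by) - \by)^T \bw + \norm{\bw}^2 .
\]
Taking expectations over $\by \sim \mc{N}(\bst,\mbf{I})$, we have $\ex[\norm{\bw}^2] = n$, so the whole task reduces to evaluating the cross term $\ex[(\bsth(\by) - \by)^T \bw] = \sum_{i=1}^n \ex[(\hat\theta_i(\by) - y_i)\,w_i]$.

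Next I would treat each summand with the one-dimensional Stein identity in the $i$-th coordinate. For fixed $\{y_j\}_{j\neq i}$, set $g(y_i) := \hat\theta_i(\by) - y_i$; since $w_i = y_i - \theta_i$ is the score direction of the Gaussian in coordinate $i$, absolute continuity of $g$ (together with the integrability assumption implicit in the statement) gives $\ex[(y_i - \theta_i)\,g(\by)\mid \{y_j\}_{j\neq i}] = \ex[\partial g/\partial y_i \mid \{y_j\}_{j\neq i}]$. Because $\partial g/\partial y_i = \partial\hat\theta_i/\partial y_i - 1$, integrating over the remaining coordinates (Fubini, justified by the integrability of each summand) yields $\ex[(\hat\theta_i(\by) - y_i)\,w_i] = \ex[\partial\hat\theta_i/\partial y_i] - 1$. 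Summing over $i$ gives $\ex[(\bsth(\by) - \by)^T \bw] = \ex\big[\sum_{i=1}^n \partial\hat\theta_i/\partial y_i\big] - n$. Substituting back,
\[
R(\bst,\bsth) = \ex[\norm{\by - \bsth}^2] + 2\Big(\ex\big[{\textstyle\sum_i}\,\partial\hat\theta_i/\partial y_i\big] - n\Big) + n = \ex\big[-n + \norm{\by - \bsth}^2 + 2{\textstyle\sum_i}\,\partial\hat\theta_i/\partial y_i\big],
\]
which is exactly $\ex[\hat R(\bst,\bsth(\by))]$, as claimed.

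The main obstacle is making the Stein-identity step rigorous: one must check that almost-everywhere differentiability of $\bsth$, together with the (implicitly assumed) integrability $\ex|\partial\hat\theta_i/\partial y_i| < \infty$ and enough growth control that $\hat\theta_i(\by)\,\phi(y_i-\theta_i) \to 0$ as $|y_i|\to\infty$ for a.e. fixed $\{y_j\}_{j\neq i}$, is sufficient to integrate by parts with no surviving boundary term. The clean way to handle this is to invoke the standard statement of Stein's lemma for weakly differentiable functions of at most polynomial (or subexponential) growth, and to observe that the two estimators studied in this paper satisfy the hypotheses easily: $\bsth_{ST}$ is $1$-Lipschitz, and $\bsth_{EB}$ in \eqref{eq:eBayes} is smooth with a bounded derivative outside a set of measure zero. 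A minor secondary point is the Fubini reduction of the $n$-dimensional expectation to iterated one-dimensional ones, which is licensed by the same integrability.
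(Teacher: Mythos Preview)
Your argument is the standard derivation of SURE and is correct; the only caveats you raise (integrability and decay for Stein's lemma, Fubini) are exactly the right ones and are handled by the usual hypotheses. Note, however, that the paper does not actually prove this statement: it is presented as a \emph{Fact} with a citation to Stein's original paper, so there is no ``paper's own proof'' to compare against. Your write-up supplies precisely the classical proof that the citation points to.
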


Using SURE, the normalized risk estimate for $\bsth_{ST}$ with threshold $\lambda$ is given by
\begin{equation}\label{sure_st}
  \hat{R}(\bst,\bsth_{ST};\lambda) = -1 + \frac{\norm{\by - \bsth_{ST}}^2}{n} + \frac{2}{n} \sum_{i=1}^n \mathsf{1}_{\{ y_i^2 > \lambda^2 \}}. 
\end{equation}

 To keep the  exposition simple, for our concentration results we assume that the location parameter $\hat{\mu}$ in $\bsth_{EB}$ is zero, so that  $\bsth_{EB}$ is given by \eqref{eq:eBayes_zero_location}. Extending the results to the case with a general $\hat{\mu}$ is straightforward, though a bit cumbersome. Let  
 \begin{equation}
 \label{eq:aydy_def}
 \begin{split}
 a_\by \vcentcolon &=  \frac{\wh{\xi}^2}{1 + \wh{\xi}^2} =\left[ 1 - \frac{\e}{(\norm{\by}^2/n -1)_+ + \e}\right],\\
 d_\by \vcentcolon &=  1 + \wh{\xi}^2 = 1+  \frac{1}{\e} \left(\frac{\norm{\mathbf{y}}^2}{n} - 1\right)_+, \\
 c_\by \vcentcolon & = \frac{1-\e}{\e} \sqrt{1 + \wh{\xi}^2} =  \frac{1-\e}{\e}\sqrt{d_\by}, \\
 b_i(\by) \vcentcolon&= 1 + c_\by e^{-\frac{a_\by y_i^2}{2}}. 
 \end{split}
\end{equation} 

\noindent Using SURE, the normalized risk estimate  for $\bsth_{EB}$ with $\hat{\mu}=0$ is  
 \begin{align*}
&   \frac{\hat{R}(\bst,\hat{\bst}_{EB}(\by); \e)}{n} 
 = -1 + \frac{1}{n}\norm{\mathbf{y} - \bsth_{EB}}^2 + \frac{2}{n} \sum_{i=1}^n \frac{\partial\hat{\theta}_i}{\partial y_i} \\
 &  = -1 + \frac{\norm{\by}^2}{n} + \frac{a^2_\by}{n}\sum_{i=1}^n\frac{y_i^2}{b^2_i(\by)} - \frac{2a_\by}{n}\sum_{i=1}^n\frac{y_i^2}{b_i(\by)} \\ 
 &\quad + \frac{2}{n}  \sum_{i=1}^n\left[\frac{a_\by}{b_i(\by)} + \frac{a_\by'(i) y_i}{b_i(\by)} + \frac{\left[ a_\by'(i)  c_\by  (y_i^2/2) + a_\by c_\by y_i - c_\by'(i)  \right]a_\by y_ie^{-\frac{a_\by y_i^2}{2}}}{b_i^2(\by)}  \right] \\
 & =  \left(\frac{\norm{\by}^2}{n} - 1 \right)+ \frac{a^2_\by}{n}\sum_{i=1}^n\frac{y_i^2}{b^2_i(\by)} - \frac{2a_\by}{n}\sum_{i=1}^n\frac{y_i^2}{b_i(\by)}\\
 &\quad + \frac{2}{n}  \sum_{i=1}^n\left[\frac{a_\by}{b_i(\by)} + \frac{2 y_i^2}{n\e d_\by^2 b_i(\by)}\mathsf{1}_{\{\norm{\by}^2 > n\}} + \left(\frac{(1-\e)a_\by}{n \e^2 d_\by^{3/2}}\right)\frac{  y_i^4  e^{-\frac{a_\by y_i^2}{2}}}{b_i^2(\by)}  \right] \\
 & \quad +\frac{2}{n}  \sum_{i=1}^n\left[ \frac{  a_\by^2 c_\by y_i^2  e^{-\frac{a_\by y_i^2}{2}}}{b_i^2(\by)} - \left( \frac{(1-\e)a_\by}{\e^{2} \sqrt{d_\by}} \right)\frac{  y_i^2  e^{-\frac{a_\by y_i^2}{2}}}{n b_i^2(\by)}  \right] 
 \end{align*}
where $a_\by'(i) = \frac{\partial a_\by}{\partial y_i}$, $c_\by'(i) =  \frac{\partial c_\by}{\partial y_i}$. Rearranging terms and simplifying, we obtain
\begin{equation}
\label{sure_eb}
\begin{split}
 &\frac{\hat{R}(\bst,\hat{\bst}_{EB}(\by); \e)}{n} =  \left(\frac{\norm{\by}^2}{n} - 1 \right)+ \frac{a^2_\by}{n}\sum_{i=1}^n\frac{y_i^2\big(1+  2c_\by  e^{-\frac{a_\by y_i^2}{2}}\big)}{b^2_i(\by)}- \frac{2a_\by}{n}\sum_{i=1}^n\frac{y_i^2-1}{b_i(\by)}\\ 
  &  +\frac{4 }{d_\by^2 \e n^2}  \sum_{i=1}^n\frac{ y_i^2}{b_i(\by)}  \mathsf{1}_{\{\norm{\by}^2 > n\}} 
   +   \frac{2(1-\e)a_\by}{ d_\by^{3/2}  \e^2 n^2}  \sum_{i=1}^n \frac{y_i^4  e^{-\frac{a_\by y_i^2}{2}}}{b_i^2(\by)}     
  -   \frac{2(1-\e)a_\by}{ \sqrt{d_\by} \e^{2} n^2} \sum_{i=1}^n \frac{y_i^2  e^{-\frac{a_\by y_i^2}{2}}}{b_i^2(\by)} .  
\end{split}
\end{equation}

For large $n$, the last three terms of \eqref{sure_eb} with $n^2$ in the denominator are very small and  can be neglected in a practical application of the risk estimate. More precisely, the proof of Theorem \ref{thm1} in the next section shows that the last three terms  concentrate around deterministic constants of order $\frac{1}{n}$.

We use the risk estimates in  \eqref{sure_st} and \eqref{eq:aydy_def} to define a hybrid estimator that aims to select the estimator with smaller loss for the $\boldsymbol{\theta}$ in context.  The hybrid estimator is defined as
\begin{align}\label{comb_estimator}
  \hat{\bst}_{H} & = 
\gamma_{\mbf{y}}  \hat{\bst}_{EB} + (1-\gamma_{\mbf{y}}) 
 \hat{\bst}_{ST}, \\
\label{eq_gamma}
\gamma_{\mbf{y}} & = \left\{ \begin{array}{ccc}
                    1 & \textrm{if } & \hat{R}(\bst,\hat{\bst}_{EB}(\by)) \leq 
                    \hat{R}(\bst,\hat{\bst}_{ST}(\by)), \\
                    0 & \textrm{otherwise.}&\\
                  \end{array} \right.
\end{align}

In the next section, we obtain concentration results for the risk estimates and loss functions of $\bsth_{ST}$ and $\bsth_{EB}$, and use these to show that the loss of the hybrid estimator concentrates on the minimum of the losses of the two estimators.


\section{Main Results} \label{sec_loss_conc}

\subsection{Concentration Results for the Empirical Bayes Estimator} \label{subsec:eb_conc}

The constants in our concentration  results for the eBayes estimator depend on $\bst$ via $\frac{1}{n}\sum_{i=1}^n \theta_i^4$. In order to make these constants universal, we assume that the fourth moment of $\bst$ is bounded.

\textbf{Assumption A}: There exists a finite constant $\Lambda > 0$ such that $\frac{1}{n}\sum_{i=1}^n \theta_i^4 \leq \Lambda$.

When Assumption A is satisfied, the constants in the concentration results depend only on $\Lambda$ (and not on the underlying $\bst$ or $n)$.  For brevity, we henceforth do not explicitly indicate the dependence on $\lambda$ and $\e$ in the notation for the risk estimates on the LHS of \eqref{sure_st} and \eqref{sure_eb}, respectively.

\begin{thm}\label{thm1}
Consider a sequence of $\bst$ with increasing dimension $n$ and satisfying Assumption A. Then the risk estimate  $\hat{R}(\bst,\hat{\bst}_{EB}(\by))$ satisfies the following for any $t > 0$:
\begin{equation}\label{eq_conc_thm1}
 \mathbb{P}\left( \frac{1}{n}\left \vert \hat{R}(\bst,\hat{\bst}_{EB}(\by)) - R_1(\bst,\hat{\bst}_{EB}) \right \vert \geq t \right) \leq Ke^{-nk\min(t,t^2)}
\end{equation}
where $0 < K \leq 24$ and $k>0$ are absolute constants, and $R_1(\bst,\hat{\bst}_{EB})$ is a deterministic quantity such that  
\begin{align}\label{eq_convergence_thm1}
\left \vert \frac{R_1(\bst,\hat{\bst}_{EB})}{n} - \frac{{R}(\bst,\hat{\bst}_{EB})}{n} \right \vert = \mathcal{O}\left(\frac{1}{\sqrt{n}} \right).
\end{align}
\end{thm}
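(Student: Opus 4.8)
The plan is to exploit the key structural fact that, once the location parameter is fixed at $\hat\mu=0$, every data-dependent coefficient occurring in the SURE expression \eqref{sure_eb} --- namely $a_\by$, $c_\by$, $d_\by$ --- is a deterministic, Lipschitz function of the single scalar statistic $Z_n:=\norm{\by}^2/n$. (As the excerpt notes, a direct appeal to Gaussian concentration fails because $\by\mapsto\hat R(\bst,\hat{\bst}_{EB}(\by))$ is not Lipschitz in $\by$ --- it contains $\norm{\by}^2$ --- so the argument must be done by hand.) Concretely, I would (i) prove a sharp concentration bound for $Z_n$ via Bernstein's inequality; (ii) introduce a ``frozen-coefficient'' surrogate of \eqref{sure_eb}, which is a genuine sum of independent sub-exponential terms and whose mean will play the role of $R_1$; (iii) bound the gap between the true SURE and the surrogate in terms of $|Z_n-\mu_Z|$; and (iv) assemble these with a union bound.

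For (i): since the $y_i\sim\mathcal N(\theta_i,1)$ are independent, $Z_n=\frac1n\sum_iy_i^2$ is an average of independent sub-exponential variables whose $\psi_1$-norms are bounded by $C(1+\theta_i^2)$, and $\frac1n\sum_i\theta_i^2\le\sqrt{\Lambda}$ by Cauchy--Schwarz and Assumption~A; Bernstein then gives $\prb(|Z_n-\mu_Z|\ge s)\le2e^{-c_1n\min(s,s^2)}$ with $\mu_Z:=1+\frac1n\sum_i\theta_i^2$ and $c_1$ universal. On $\mathcal A_\delta:=\{|Z_n-\mu_Z|\le\delta\}$, the coefficients are within $\mathcal O(\delta)$ of their frozen values $a^\star,c^\star,d^\star$ obtained by setting $Z_n=\mu_Z$ in \eqref{eq:aydy_def}; the Lipschitz constants are universal since $d_\by\ge1$, $a_\by\in[0,1)$, and the positive part is $\tfrac1\e$-Lipschitz (the kink at $Z_n=1$ needs only one-sided bounds, and the degenerate case $\bst=\mbf{0}$ --- where $\mu_Z=1$, $a^\star=0$ --- is immediate). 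For (ii): define $\hat R^\star/n$ by replacing $(a_\by,c_\by,d_\by)$ with $(a^\star,c^\star,d^\star)$ throughout \eqref{sure_eb} while keeping the leading term $Z_n-1$ intact, and set $R_1:=\ex[\hat R^\star]$. Then $\hat R^\star/n=\frac1n\sum_i\rho(y_i)$ is an average of independent summands; using $b_i\ge1$, $a^\star\le1$, the boundedness of $c^\star,d^\star$, and elementary bounds such as $(a^\star y^2)^je^{-a^\star y^2/2}\le C_j$ and $y^4e^{-a^\star y^2/2}\le\tfrac{2y^2}{a^\star}$ (from $e^{-x}\le(1+x)^{-1}$), one checks that each $\rho(y_i)$ is sub-exponential with $\psi_1$-norm bounded by a universal function of $\Lambda$ and $\e$, and that the three $\tfrac1{n^2}$-terms of \eqref{sure_eb} are dominated by $\tfrac{C}{n}(1+Z_n)$. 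Bernstein then yields $\prb(|\hat R^\star/n-R_1/n|\ge t)\le2e^{-c_2n\min(t,t^2)}$.

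Step (iii) is the crux and the step I expect to be the main obstacle. On $\mathcal A_\delta$ I must show $\bigl|\hat R(\bst,\hat{\bst}_{EB}(\by))/n-\hat R^\star/n\bigr|\le C(1+Z_n)\,|Z_n-\mu_Z|$. Writing each term of \eqref{sure_eb} as $\frac1n\sum_iF_j(y_i;a(z),c(z),d(z))$ with $z=Z_n$, this reduces to the pointwise estimate $|\partial_zF_j(y,z)|\le C(1+y^2)$, uniformly in $y\in\mathbb R$ and $z\ge0$, with $C$ universal. The delicate point is that the parameter $a$ enters only through $a^2(\cdots)$ and $e^{-ay^2/2}$, so after the substitution $v=ay^2$ every derivative $\partial_a\bigl[a^jy^{2\ell}e^{-ay^2/2}\bigr]$ is a bounded function of $v$ times a nonnegative power of $a\le1$; combined with $b_i\ge1$ and the boundedness of $c_\by,d_\by$ on $\mathcal A_\delta$, this absorbs the apparent blow-ups as $a\to0$. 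A similar book-keeping shows $|\hat R/n|\le C(1+Z_n)$ outright (all terms collapse onto $Z_n$ after the above bounds) and $|R_1|/n\le C$. Given all this, \eqref{eq_conc_thm1} follows by a routine case split: for $t\le t_0$ take $\delta=\min\{t/(2C'),1\}$ so that on $\mathcal A_\delta$ the gap is $\le t/2$, and bound $\prb(|\hat R/n-R_1/n|\ge t)\le\prb(\mathcal A_\delta^c)+\prb(|\hat R^\star/n-R_1/n|\ge t/2)$; for $t>t_0$, $\{|\hat R/n-R_1/n|\ge t\}\subseteq\{Z_n\ge t/(2C'')\}$, whose probability is again exponentially small in $n\min(t,t^2)$ by Step (i) (here the genuine $\psi_1$-tail of $Z_n$, linear in $nt$, is what makes the bound hold for all $t>0$). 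Collecting constants gives $K\le24$, $k>0$ universal.

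Finally, \eqref{eq_convergence_thm1}: since the SURE is an unbiased estimate of the risk, $R(\bst,\hat{\bst}_{EB})/n=\ex[\hat R/n]$, while $R_1/n=\ex[\hat R^\star/n]$ by construction, so $\bigl|R_1/n-R(\bst,\hat{\bst}_{EB})/n\bigr|\le\ex\bigl|\hat R/n-\hat R^\star/n\bigr|$. Splitting off the exponentially small event $\mathcal A_\delta^c$ (bounded by Cauchy--Schwarz together with sub-exponential moment estimates for $\hat R/n$ and $\hat R^\star/n$, with $\delta$ a fixed constant) and using Step (iii) on $\mathcal A_\delta$, this is at most $C\,\ex\bigl[(1+Z_n)|Z_n-\mu_Z|\bigr]\le C\bigl(\ex|Z_n-\mu_Z|+\sqrt{\ex Z_n^2}\,\sqrt{\operatorname{Var}Z_n}\bigr)$ by Cauchy--Schwarz. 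Since $\operatorname{Var}(Z_n)=\frac1{n^2}\sum_i(4\theta_i^2+2)=\mathcal O(1/n)$ and $\ex Z_n^2=\mathcal O(1)$ --- both by Assumption~A --- this is $\mathcal O(1/\sqrt n)$, which is \eqref{eq_convergence_thm1}.
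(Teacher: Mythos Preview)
Your overall strategy---freeze the data-dependent coefficients $a_\by,c_\by,d_\by$ at their population values, concentrate the resulting independent sum, and control the gap via the concentration of $Z_n$---is exactly the paper's approach. The paper implements your step~(iii) by an explicit sandwich $a_L\le a_\by\le a_U$, $c_L\le c_\by\le c_U$ (Lemma~\ref{lem:fn} and Appendix~\ref{app:Delta12}) rather than a derivative bound, and it derives \eqref{eq_convergence_thm1} by integrating the tail bound (Lemma~\ref{lem3}) rather than your Cauchy--Schwarz route; these are cosmetic differences.

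There is, however, a real gap in step~(ii). The assertion that ``each $\rho(y_i)$ is sub-exponential with $\psi_1$-norm bounded by a universal function of $\Lambda$ and $\e$'' is false: the dominant pieces of $\rho(y_i)$ are of size $C(1+y_i^2)$, and the centered $y_i^2=2\theta_i w_i+(w_i^2-1)+\theta_i^2$ has $\psi_1$-norm of order $1+|\theta_i|$. Assumption~A bounds only $\tfrac{1}{n}\sum_i\theta_i^4$, so an individual $|\theta_i|$ may be as large as $(n\Lambda)^{1/4}$. The non-uniform Bernstein inequality then carries $\max_i\|X_i\|_{\psi_1}\asymp n^{-3/4}$ in the linear regime, yielding only $\exp(-c\,n^{3/4}t)$ for $t\gtrsim n^{-1/4}$, which is strictly weaker than the claimed $\exp(-cn\min(t,t^2))$ and cannot be absorbed into universal constants $K,k$. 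The paper avoids this by handling each frozen sum with a tailored argument (Lemma~\ref{lem1}): for the lower tail it exploits nonnegativity of the summands and uses Lemma~\ref{lem_chung1}, which needs only second moments $\sum_i\ex Z_i^2\le\tfrac{C}{n}(1+\Lambda)$; for the upper tail it observes that although $f(\by)=\tfrac{1}{n}\sum_i y_i^2/(1+ce^{-ay_i^2})^p$ is not Lipschitz, its square root $g=\sqrt{f}$ is, with $\|\nabla g\|^2\le C/n$ uniformly, so Gaussian concentration (Lemma~\ref{lem_gau_lipschitz}) gives the sub-Gaussian rate for $g$, which is then transferred back to $f$. This $\sqrt{f}$-Lipschitz device is precisely the ingredient your Bernstein step is missing.
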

\begin{proof}
The $i^{th}$ element of $\bsth_{EB}$ in \eqref{eq:eBayes_zero_location} is $\hat{\theta}_i = \frac{a_\by y_i}{b_i(\by)}$. The SURE of $\bsth_{EB}$ is as given in \eqref{sure_eb}.  We need to show concentration for each term on the RHS of \eqref{sure_eb}. In the following, $K, k, k_{0}, \ldots, k_{10}$ are universal positive constants that do not depend on $t$ or $n$.
 
Since  $\norm{\mbf{y}}^2$ is a non-central chi-squared random variable with mean $\norm{\bst}^2+n$,   we have the following large deviations bound \cite{birge}. For any $t > 0$ 
\begin{align}\label{eq4_thm1}
 \prb\left( \abs{\frac{\norm{\by}^2}{n} -1  - \frac{\norm{\bst}^2}{n}  } \geq t\right) \leq 2e^{-nk_0\min(t,t^2)}.
\end{align}
The concentration for the remaining terms of \eqref{sure_eb} is shown using two lemmas stated below. The proofs of the lemmas are given  Sec. \ref{subsec_proof_2}. The first lemma shows that the last three terms in \eqref{sure_eb} concentrate around their expectations.
\begin{lemma}
\label{lem:un}
Let
\begin{align*}
 u_n \vcentcolon  = \frac{4}{\e d_\by^2n^2} \sum_{i=1}^n \frac{y_i^2}{b_i(\by)} \mathsf{1}_{\{\norm{\by}^2 > n\}}, 
   v_n \vcentcolon =  \frac{2(1-\e )a_\by}{d_\by^{3/2}n^2\e^2} \sum_{i=1}^n\frac{  y_i^4  e^{-\frac{a_\by y_i^2}{2}}}{b_i^2(\by)}  , 
  x_n  \vcentcolon=  \frac{2(1-\e)a_\by}{n^2\e^2\sqrt{d_\by}}  \sum_{i=1}^n \frac{  y_i^2  e^{-\frac{a_\by y_i^2}{2}}}{  b_i^2(\by)}.
\end{align*}
Then for any $t > 0$,
\begin{equation} 
\label{eq5a_thm1}
\begin{split}
 \prb \left( \vert u_n - \ex u_n \vert \geq t \right) & \leq 2 e^{-n^2k_1t^2},  \qquad \prb \left( \vert v_n  -\ex v_n \vert \geq t \right)  \leq 2 e^{-n^2k_2t^2}, \\
 \prb \left( \vert x_n  - \ex x_n \vert \geq t \right) &  \leq 2 e^{-n^2k_3t^2}.
 \end{split}
\end{equation}
\end{lemma}

Establishing concentration inequalities for the second and third terms of \eqref{sure_eb} around their respective means is more challenging. This is because the summands are dependent random variables and it is not straightforward to prove that their sum satisfies Lipschitz or similar conditions for which Gaussian concentration results are readily available.  Hence,  in the following lemma, we  prove concentration of these terms around certain deterministic values, and then show that these deterministic values are close to the required means.
\begin{lemma}
\label{lem:fn}
Let
\begin{equation*}
\begin{split}
f_n \vcentcolon = & \frac{a^2_\by}{n}\sum_{i=1}^n\frac{y_i^2}{b^2_i(\by)} - \frac{a^2}{n}\sum_{i=1}^n\ex \left[\frac{y_i^2}{\left(1 + ce^{-ay_i^2/2}\right)^2}\right], \\
 g_n \vcentcolon  =&  \frac{2a_\by}{n}\sum_{i=1}^n\frac{y_i^2} {b_i(\by)} - \frac{2a}{n}\sum_{i=1}^n\ex \left[\frac{y_i^2}{1 + c e^{-ay_i^2/2}}\right], \\
  h_n \vcentcolon  =& \frac{2a_\by}{n}  \sum_{i=1}^n\frac{1}{b_i(\by)} -  \frac{2a}{n}  \sum_{i=1}^n\ex \left[\frac{1}{1 + c e^{-ay_i^2/2}}\right],\\  
 w_n \vcentcolon= & \frac{2a_\by^2 c_\by}{n}  \sum_{i=1}^n \frac{   y_i^2  e^{-\frac{a_\by y_i^2}{2}}}{b_i^2(\by)}  - \frac{2a^2 c}{n}  \sum_{i=1}^n \ex \left[  \frac{y_i^2  e^{-\frac{a y_i^2}{2}}}{\left(1 + c e^{-ay_i^2/2}\right)^2}\right],
\end{split}
\end{equation*}
where
\begin{equation}\label{eq:bc_defs}
 a \vcentcolon= \frac{\norm{\bst}^2/n}{\e + \norm{\bst}^2/n}, \quad c \vcentcolon = \frac{1- \e}{\e^{3/2}}\sqrt{\e  + \norm{\bst}^2/n }.
\end{equation}
Then, for any $t > 0$,
\begin{align}\label{eq6d_thm1}
\prb \left( \left \vert f_n \right \vert \geq t \right) & \leq 4e^{-nk_4\min(t,t^2)},\\ \label{eq6e_thm1}
 \prb \left( \vert g_n  \vert \geq t \right ) &\leq 4e^{-nk_5\min(t,t^2)},\\ \label{eq6f_thm1}
\prb \left( \vert h_n  \vert \geq t \right ) &\leq 4e^{-nk_6\min(t,t^2)},\\ \label{eq6g_thm1}
\prb \left( \vert w_n  \vert \geq t \right ) &\leq 4e^{-nk_7\min(t,t^2)}.
\end{align}
\end{lemma}
Using the results of Lemmas \ref{lem:un} and \ref{lem:fn}, we obtain, for any $t > 0$, 
\begin{align*}
 \mathbb{P}\left( \left \vert \frac{\hat{R}(\bst,\hat{\bst}_{EB}(\by))}{n} - \frac{R_1(\bst,\hat{\bst}_{EB})}{n} \right \vert \geq t \right) \leq 24e^{-nk\min(t,t^2)}
\end{align*}
where $k$ is an absolute positive constant and 
\begin{align*}
&  \frac{R_1(\bst,\hat{\bst}_{EB})}{n} \\
 & = b  + \frac{a^2}{n}\sum_{i=1}^n\ex \left[\frac{y_i^2}{\left(1 + ce^{-ay_i^2/2}\right)^2}\right]  - \frac{2a}{n}\sum_{i=1}^n\ex \left[\frac{y_i^2 - 1}{1 + c e^{-ay_i^2/2}}\right]  + \frac{2a^2 c}{n}  \sum_{i=1}^n \ex \left[\frac{  y_i^2  e^{-\frac{a y_i^2}{2}}}{ \left(1 + c e^{-ay_i^2/2}\right)^2}\right]  \\
& + \frac{4}{\e n^2}   \sum_{i=1}^n\ex \left[\frac{y_i^2}{d_\by^2b_i(\by)} \mathsf{1}_{\{\norm{\by}^2 > n\}}\right] 
 +    \frac{2(1-\e)}{n^2 \e^2}\sum_{i=1}^n\ex \left[\frac{a_\by y_i^4  e^{-\frac{a_\by y_i^2}{2}}}{ d_\by^{3/2}b_i^2(\by)} \right]  - \frac{2(1-\e)}{n^2\e^2 }  \sum_{i=1}^n \ex \left[\frac{ a_\by y_i^2  e^{-\frac{a_\by y_i^2}{2}}}{ \sqrt{d_\by} b_i^2(\by)}\right],
\end{align*}

\noindent with the constants $a,c$ as defined in \eqref{eq:bc_defs}.

Finally, to prove \eqref{eq_convergence_thm1}, we use Lemma \ref{lem3} to get 
\begin{align}\label{eq_risk_bound}
  \ex \left\vert  \frac{\hat{R}(\bst,\hat{\bst}_{EB}(\by))}{n} - \frac{R_1(\bst,\hat{\bst}_{EB})}{n} \right \vert \leq \frac{C}{\sqrt{n}}\left(1+ \frac{1}{\sqrt{n}}\right)
\end{align}
for some positive constant $C$. Since $\ex \vert X \vert  \geq \vert \ex X \vert$ and $\ex \left[ \hat{R}(\bst,\hat{\bst}_{EB}(\by))\right] ={R}(\bst,\hat{\bst}_{EB})$, \eqref{eq_convergence_thm1} follows.

\end{proof}

\begin{note}
 Theorem \ref{thm1} implies that $\frac{1}{n}\left [ \hat{R}(\bst,\hat{\bst}_{EB}(\by)) - R(\bst,\hat{\bst}_{EB}) \right ] = \mathcal{O}_P\left(\frac{1}{\sqrt{n}} \right)$. This is a slightly stronger result than \cite[Theorem 2.2]{ZB17} which states that 
$\frac{1}{n}\left [ \hat{R}(\bst,\hat{\bst}_{EB}(\by)) - R(\bst,\hat{\bst}_{EB}) \right ] = \mathcal{O}_P\left(\frac{(\log n)^{3/2}}{\sqrt{n}} \right)$. The result in \cite[Theorem 2.2]{ZB17} applies to a class of densities  $\psi$ in \eqref{eq:eb_prior} that are unimodal with $\log \psi$ satisfying certain Lipschitz conditions. Though this class is more general than the Gaussian, the result is derived assuming that the parameters defining $\psi$ are fixed and do not depend on the data. (In particular, the parameters can take on any fixed value in a specified range which grows logarithmically with $n$.) In contrast, our parameter estimates $\hat{\mu}(\mbf{y})$ and $\widehat{\xi^2}(\mbf{y})$ depend on the data. Obtaining concentration results for terms with these data-dependent parameters (e.g., those in Lemma \ref{lem:fn}) is the key technical challenge in proving Theorem \ref{thm1}.
\label{note:ZBcomparison}
 \end{note}

The next result shows that the normalized loss of the eBayes estimator concentrates on a deterministic value close to the true risk.

\begin{thm}\label{thm1a}
Consider a sequence of $\bst$ with increasing dimension $n$ and satisfying Assumption A. Then the loss function 
$L(\bst,\hat{\bst}_{EB}) = \norm{\bst - \hat{\bst}_{EB}}^2$ satisfies the following for any $t > 0$:
\begin{equation}\label{eq_conc_thm1a}
 \mathbb{P}\left( \frac{1}{n}\left \vert L(\bst,\hat{\bst}_{EB}(\by)) - R_2(\bst,\hat{\bst}_{EB}) \right \vert \geq t \right) \leq Ke^{-nk\min(t,t^2)}
\end{equation}
where $K \leq 10 $ and $k$ are absolute positive constants, and $R_2(\bst,\hat{\bst}_{EB})$ is a deterministic quantity such that  
\begin{align}\label{eq_convergence_thm1a}
\left \vert \frac{R_2(\bst,\hat{\bst}_{EB})}{n} - \frac{{R}(\bst,\hat{\bst}_{EB})}{n} \right \vert = \mathcal{O}\left(\frac{1}{\sqrt{n}} \right).
\end{align}
\end{thm}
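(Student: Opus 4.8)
The plan is to expand the squared loss directly. Since $\hat{\theta}_{EB,i} = a_\by y_i / b_i(\by)$, writing $L(\bst,\hat{\bst}_{EB})/n = \frac1n\sum_{i=1}^n(\hat\theta_i - \theta_i)^2$ and substituting gives the identity
\begin{equation*}
\frac{L(\bst,\hat{\bst}_{EB})}{n} = \frac{\norm{\bst}^2}{n} + \frac{a_\by^2}{n}\sum_{i=1}^n \frac{y_i^2}{b_i^2(\by)} - \frac{2a_\by}{n}\sum_{i=1}^n \frac{\theta_i y_i}{b_i(\by)}.
\end{equation*}
The first term is deterministic and, under Assumption A, bounded by $\sqrt{\Lambda}$ (since $\frac1n\sum_i\theta_i^2 \le (\frac1n\sum_i\theta_i^4)^{1/2}$). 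The second term is exactly the quantity $f_n$ of Lemma \ref{lem:fn} plus the deterministic constant $\frac{a^2}{n}\sum_i\ex[y_i^2/(1+ce^{-ay_i^2/2})^2]$, so by \eqref{eq6d_thm1} it concentrates around that constant at rate $e^{-nk\min(t,t^2)}$. Hence only the cross term $q_n \vcentcolon= \frac{2a_\by}{n}\sum_i \theta_i y_i/b_i(\by)$ needs new work.

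For $q_n$ I would mimic the two-step argument behind Lemma \ref{lem:fn}. First, by \eqref{eq4_thm1}, off an event of probability $\le 2e^{-nk_0\min(t,t^2)}$ the quantity $\norm{\by}^2/n$ sits in a small window around $\norm{\bst}^2/n+1$; on that window $a_\by$ and $c_\by$ (hence $b_i(\by) = 1 + c_\by e^{-a_\by y_i^2/2}$, which also satisfies $b_i(\by)\ge 1$) differ from their deterministic limits $a,c$ of \eqref{eq:bc_defs} by at most a constant times $\lvert\norm{\by}^2/n - \norm{\bst}^2/n - 1\rvert$, with uniformly bounded Lipschitz constants, so — just as in the proof of Lemma \ref{lem:fn} — $q_n$ can be replaced by $\tilde q_n \vcentcolon= \frac{2a}{n}\sum_i \theta_i y_i/\tilde b_i$, with $\tilde b_i\vcentcolon= 1+ce^{-ay_i^2/2}$, at the cost of an error bounded by a constant times $\lvert\norm{\by}^2/n-\norm{\bst}^2/n-1\rvert$ times a random prefactor that is $\mathcal{O}(1)$ with exponentially high probability (it is controlled by $\frac1n\sum_i\theta_i^2\le\sqrt{\Lambda}$ and $\frac1n\sum_i\lvert\theta_i\rvert\lvert w_i\rvert$, the latter concentrating under Assumption A). Second, since $a,c$ are deterministic and $\tilde b_i$ depends on the data only through $w_i$ (as $y_i = \theta_i+w_i$), the summands of $\tilde q_n$ are \emph{independent}; writing $\theta_i y_i = \theta_i^2 + \theta_i w_i$, the $\theta_i^2$-part is a sum of independent variables each bounded by $2a\theta_i^2/n$ (Hoeffding, via $\frac1n\sum_i\theta_i^4\le\Lambda$) and the $\theta_i w_i$-part is a sum of independent sub-Gaussians of parameter $\lesssim\lvert\theta_i\rvert/n$ (via $\frac1n\sum_i\theta_i^2\le\sqrt{\Lambda}$), so both concentrate around their means at rate $e^{-nk't^2}$. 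Hence $q_n$ concentrates around $\ex\tilde q_n = \frac{2a}{n}\sum_i\ex[\theta_i y_i/\tilde b_i]$.

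A union bound over these events gives \eqref{eq_conc_thm1a} with
\begin{equation*}
\frac{R_2(\bst,\hat{\bst}_{EB})}{n} \vcentcolon= \frac{\norm{\bst}^2}{n} + \frac{a^2}{n}\sum_{i=1}^n\ex\!\left[\frac{y_i^2}{(1+ce^{-ay_i^2/2})^2}\right] - \frac{2a}{n}\sum_{i=1}^n\ex\!\left[\frac{\theta_i y_i}{1+ce^{-ay_i^2/2}}\right],
\end{equation*}
and tallying the handful of events ($f_n$, the noncentral-$\chi^2$ event, the freezing error, the two independent-sum bounds) yields $K\le 10$; the large-$t$ regime is absorbed by the crude deterministic bound $\lvert L(\bst,\hat{\bst}_{EB})/n\rvert \le 2\norm{\bst}^2/n + 2\norm{\by}^2/n$ together with \eqref{eq4_thm1}. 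For \eqref{eq_convergence_thm1a} I would argue exactly as at the end of the proof of Theorem \ref{thm1}: integrating the tail bound (equivalently, invoking Lemma \ref{lem3}) gives $\ex\lvert L(\bst,\hat{\bst}_{EB})/n - R_2(\bst,\hat{\bst}_{EB})/n\rvert = \mathcal{O}(1/\sqrt{n})$, and combining with the unbiasedness $\ex[L(\bst,\hat{\bst}_{EB})] = R(\bst,\hat{\bst}_{EB})$ and $\ex\lvert X\rvert\ge\lvert\ex X\rvert$ shows $R_2$ is within $\mathcal{O}(1/\sqrt{n})$ of the true risk. The main obstacle is again the cross term $q_n$: its summands are dependent because $a_\by$ and $b_i(\by)$ are functionals of the whole of $\by$, so no off-the-shelf concentration inequality applies directly; the remedy — freezing these at their deterministic limits to recover independence and then controlling the freezing error through the $\chi^2$ concentration of $\norm{\by}^2/n$ — is precisely the mechanism developed for Lemma \ref{lem:fn}. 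The proof of Theorem \ref{thm1a} is therefore a shorter variant of that of Theorem \ref{thm1}, since the loss carries no divergence terms and only $q_n$ is genuinely new.
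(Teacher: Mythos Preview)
Your outline matches the paper's proof in its overall architecture: the same expansion of the loss, the same reuse of \eqref{eq6d_thm1} for the squared term, the same definition of $R_2$, and the same Lemma~\ref{lem3} argument for \eqref{eq_convergence_thm1a}. The one place your route genuinely differs is the concentration of the cross term.

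The paper does \emph{not} decompose $\theta_i y_i = \theta_i^2 + \theta_i w_i$. Instead it splits $s_n = s_n^+ + s_n^-$ according to the sign of $\theta_i y_i$, then on the event $\mathcal{E}$ replaces $(a_\by,c_\by)$ by the monotone envelope pairs $(a_U,c_L)$ and $(a_L,c_U)$ to get one-sided inequalities (rather than bounding $\lvert q_n-\tilde q_n\rvert$ directly). For each signed piece it uses two different tools: a Gaussian Lipschitz bound restricted to the nonnegative orthant (Lemma~\ref{lem2a}) for one tail, and the one-sided Chernoff bound for sums of nonnegative variables (Lemma~\ref{lem_chung1}) for the other. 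This sign-split plus sandwich avoids ever needing to control the raw freezing error $\lvert q_n-\tilde q_n\rvert$, which is where your sketch is loosest.

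Your alternative is viable but two points need tightening. First, the summand $\theta_i w_i/\tilde b_i$ is not sub-Gaussian with parameter $\lesssim \lvert\theta_i\rvert$ as stated, because $\tilde b_i$ depends on $w_i$; differentiating gives a Lipschitz constant of order $\lvert\theta_i\rvert(1+\lvert\theta_i\rvert)$ (the extra factor comes from $w_i y_i e^{-ay_i^2/2} = (y_i^2-\theta_i y_i)e^{-ay_i^2/2}$, which is bounded by a constant plus $\lvert\theta_i\rvert$). Under Assumption~A this still yields the desired $e^{-nkt^2}$ rate, but the dependence on $\Lambda$ is through $\sum_i\theta_i^4/n$, not just $\sum_i\theta_i^2/n$. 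Second, your direct bound on the freezing error requires controlling $\frac{1}{n}\sum_i\lvert\theta_i y_i\rvert$ times a Lipschitz factor in $(a_\by,c_\by)$; that factor involves $y_i^2 e^{-a_\by y_i^2/2}$, whose uniform bound degrades as $a_\by\to 0$, so the argument needs the same care with $a_L$ that the paper's $\Delta_{3n},\Delta_{4n}$ bounds take. The paper's monotone sandwich sidesteps this by never subtracting the two versions. Either route works, but the paper's is cleaner for bookkeeping (and is what actually yields the constant $6$ for the cross term, hence $K\le 4+6=10$).
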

\begin{proof}
We have
\begin{align}
 \frac{L(\bst,\hat{\bst}_{EB}(\by))}{n}  = \frac{\norm{\bst - \hat{\bst}_{EB}}^2}{n} = \frac{\norm{\bst}^2}{n} + \frac{\norm{\hat{\bst}_{EB}(\by)}^2}{n} - \frac{2a_\by}{n}\sum_{i=1}^n \frac{\theta_i y_i}{b_i(\by)}.
 \label{eq:ebayes_lf}
\end{align}
We have already shown in   \eqref{eq6d_thm1} that 
\[ \frac{\norm{\hat{\bst}_{EB}(\by)}^2}{n} =  \frac{a^2_\by}{n}\sum_{i=1}^n\frac{y_i^2}{b^2_i(\by)}   \]  
concentrates around $\frac{a^2}{n}\sum_{i=1}^n\ex \left[ y_i^2/{(1 + ce^{-ay_i^2/2})^2}\right]$.  The concentration for the last term in \eqref{eq:ebayes_lf} around its mean is complicated to prove due to the absence of any Lipschitz behaviour. We instead show in Sec. \ref{sec_proof_thm1a} that for any $t > 0$,
\begin{align}\label{eq_conc_loss_thm1a}
 \prb\left( \frac{1}{n}\left\vert\sum_{i=1}^n \frac{ \theta_i a_\by y_i}{b_i(\by)} - \sum_{i=1}^n \ex \left[\frac{ a\theta_i  y_i}{1 + ce^{-ay_i^2/2}}\right] \right \vert \geq t\right) \leq 6e^{-nk\min(t,t^2)}.
\end{align}
Thus, using the concentration inequalities in \eqref{eq6d_thm1}  and \eqref{eq_conc_loss_thm1a}, from \eqref{eq:ebayes_lf} we obtain that for any $t > 0$, 
\begin{align*}
 \mathbb{P}\left( \left \vert \frac{L(\bst,\hat{\bst}_{EB}(\by))}{n} - \frac{R_2(\bst,\hat{\bst}_{EB})}{n} \right \vert \geq t \right) \leq 10e^{-nk\min(t,t^2)}
\end{align*}
where
\begin{align}
\label{eq:R2def}
& \frac{R_2(\bst,\hat{\bst}_{EB})}{n}= \frac{\norm{\bst}^2}{n}  + \frac{a^2}{n}\sum_{i=1}^n\ex \left[\frac{y_i^2}{\left(1 + ce^{-ay_i^2/2}\right)^2}\right]  - \frac{2a}{n}\sum_{i=1}^n\ex \left[\frac{ \theta_i y_i}{1 + ce^{-ay_i^2/2}}\right]  
\end{align}
with the constants $a,c$ as defined in \eqref{eq:bc_defs}. We note that due to Assumption A,  the RHS of \eqref{eq:R2def} is bounded by a universal constant not depending on $n$.

To prove \eqref{eq_convergence_thm1a}, we apply Lemma \ref{lem3} which shows that the concentration result \eqref{eq_conc_thm1a} implies the following bound on the expected value:
\begin{align}\label{eq_risk_bound_thm1a}
  \ex \left\vert  \frac{L(\bst,\hat{\bst}_{EB}(\by))}{n} - \frac{R_2(\bst,\hat{\bst}_{EB})}{n} \right \vert \leq \frac{C}{\sqrt{n}}\left(1+ \frac{1}{\sqrt{n}}\right)
\end{align}
where  $C$ is a universal positive constant. Since $\ex \vert X \vert  \geq \vert \ex X \vert$ and $\ex \left[ L(\bst,\hat{\bst}_{EB}(\by))\right] ={R}(\bst,\hat{\bst}_{EB})$, \eqref{eq_convergence_thm1a} follows.

\end{proof}

\subsection{Concentration Results for the Soft-Thresholding Estimator} \label{subsec:st_conc}

The concentration result for the risk estimate of soft-thresholding was obtained by Donoho and Johnstone \cite{donoho_johnstoneWav}. In contrast to the eBayes estimator, the normalized risk estimate for soft-thresholding given in \eqref{sure_st} is bounded. Therefore a concentration result can be directly obtained using Hoeffding's inequality \cite{boucheron2}. 

\begin{thm}\label{fact1}\cite{donoho_johnstoneWav}
 The risk estimate $\hat{R}(\bst,\hat{\bst}_{ST})$ for the soft-thresholding estimator satisfies the following for any $t > 0$,
\begin{align}
\label{eq_RST_conc}
 \mathbb{P}\left( \frac{1}{n}\left \vert \hat{R}(\bst,\hat{\bst}_{ST}(\by)) - {R}(\bst,\hat{\bst}_{ST}) \right \vert \geq t \right) \leq 2e^{-\frac{2t^2}{9(1+\lambda^2)^2}}.
\end{align}
\end{thm}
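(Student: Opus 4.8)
The plan is to express the normalized risk estimate $\frac{1}{n}\hat{R}(\bst,\hat{\bst}_{ST}(\by))$ from \eqref{sure_st} as an average of independent, bounded random variables and then apply Hoeffding's inequality. Writing $\hat{R}(\bst,\hat{\bst}_{ST})/n = \frac 1n \sum_{i=1}^n Z_i$ where $Z_i := -1 + (y_i - \hat{\theta}_{ST,i}(y_i;\lambda))^2 + 2\mathsf{1}_{\{y_i^2 > \lambda^2\}}$, the key observation is that each $Z_i$ depends only on $y_i$, and the $y_i$ are independent (since $\by \sim \mathcal{N}(\bst,\mathbf{I})$ has independent coordinates). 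Hence the $Z_i$ are independent. Moreover, from the definition \eqref{eq_ST}, $(y_i - \hat{\theta}_{ST,i})^2 = \min(y_i^2,\lambda^2)$, so $Z_i = -1 + \min(y_i^2,\lambda^2) + 2\mathsf{1}_{\{y_i^2>\lambda^2\}}$, which takes values in the bounded interval $[-1,\ \lambda^2 + 1]$ — indeed it lies between $-1$ (when $y_i=0$) and $\lambda^2+1$ (as $|y_i|\downarrow\lambda$), so $Z_i$ has range at most $\lambda^2 + 2 \le (1+\lambda^2)\cdot \text{(something)}$; a cleaner bound suffices below.

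First I would record that $\ex Z_i$ exists and $\frac 1n\sum_i \ex Z_i = R(\bst,\hat{\bst}_{ST})/n$ by the unbiasedness of SURE (the Fact quoted above), after checking that $\hat{\bst}_{ST}$ is almost-everywhere differentiable, which it plainly is (piecewise linear with kinks at $\pm\lambda$). Second I would apply the two-sided Hoeffding bound: for independent $Z_i$ with $Z_i \in [a_i,b_i]$,
\[
\prb\left( \left\vert \frac 1n\sum_{i=1}^n (Z_i - \ex Z_i) \right\vert \ge t \right) \le 2\exp\!\left( -\frac{2 n^2 t^2}{\sum_{i=1}^n (b_i - a_i)^2} \right).
\]
Here $b_i - a_i \le \lambda^2 + 2$, but to land exactly on the stated constant $9(1+\lambda^2)^2$ in the denominator of the exponent in \eqref{eq_RST_conc}, I would bound the oscillation of $Z_i$ more carefully: on $\{|y_i|\le\lambda\}$, $Z_i = y_i^2 - 1 \in [-1,\lambda^2-1]$; on $\{|y_i|>\lambda\}$, $Z_i = \lambda^2 + 1$. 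So $Z_i$ ranges over $[-1,\ \lambda^2+1]$, giving $b_i - a_i = \lambda^2 + 2 \le 3(1+\lambda^2)$ (using $\lambda^2 + 2 \le 3 + 3\lambda^2$). Then $\sum_i (b_i-a_i)^2 \le 9n(1+\lambda^2)^2$, and Hoeffding gives exactly $2\exp(-2t^2/[9(1+\lambda^2)^2])$ after the $n^2$ in the numerator cancels with one factor of $n$ from $\sum_i$ and one from squaring the $1/n$. That completes the proof.

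The only mild obstacle is the bookkeeping to obtain the precise constant $9(1+\lambda^2)^2$ rather than a looser one; this is just the elementary inequality $\lambda^2 + 2 \le 3(1+\lambda^2)$ together with careful tracking of the $n$-scaling in Hoeffding. Everything else — independence of coordinates, the closed form $(y_i-\hat\theta_{ST,i})^2 = \min(y_i^2,\lambda^2)$, and unbiasedness of SURE — is immediate from the definitions and the quoted Fact, so there is no substantive difficulty here; this is precisely why the theorem is attributed to prior work and its proof is short.
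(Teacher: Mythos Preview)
Your approach is correct and is exactly what the paper does: it states that the normalized SURE for soft-thresholding is a sum of independent bounded terms, so the result follows directly from Hoeffding's inequality. One small slip in your bookkeeping: after substituting $\sum_i (b_i-a_i)^2 \le 9n(1+\lambda^2)^2$ into the Hoeffding bound $2\exp(-2n^2t^2/\sum_i(b_i-a_i)^2)$, only one factor of $n$ cancels, yielding the stronger bound $2\exp\!\big(-2nt^2/[9(1+\lambda^2)^2]\big)$; the exponent in \eqref{eq_RST_conc} as printed is missing this $n$ (a typo in the paper), and the weaker displayed inequality is of course implied by the correct one.
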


We can also show that the normalized loss of the soft-thresholding estimator concentrates on the true risk.
\begin{thm}\label{thm1b}
The loss function $L(\bst,\hat{\bst}_{ST})  = \norm{\bst - \hat{\bst}_{ST}}^2$ of the soft-thresholding estimator satisfies the following for any $t > 0$:
\begin{equation}\label{eq_conc_thm1b}
 \mathbb{P}\left( \frac{1}{n}\left \vert L(\bst,\hat{\bst}_{ST}(\by)) - {R}(\bst,\hat{\bst}_{ST}) \right \vert \geq t \right) \leq 2e^{-nk\min(t,t^2)}
\end{equation}
where $k$ is an absolute positive constant.
\end{thm}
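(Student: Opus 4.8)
The plan is to exploit that, in contrast with $\hat{\bst}_{EB}$, the loss of $\hat{\bst}_{ST}$ is the \emph{square} of a $1$-Lipschitz function of the Gaussian noise vector, so that concentration follows from the Gaussian concentration inequality after a square-to-linear conversion. First I would record that $\by\mapsto\hat{\bst}_{ST}(\by)$ is a contraction in Euclidean norm: each coordinate map $y_i\mapsto\hat\theta_{ST,i}(y_i;\lambda)=\mathrm{sign}(y_i)(|y_i|-\lambda)_+$ from \eqref{eq_ST} has a.e.\ derivative $0$ or $1$, so $\norm{\hat{\bst}_{ST}(\by)-\hat{\bst}_{ST}(\by')}^2=\sum_i|\hat\theta_{ST,i}(y_i)-\hat\theta_{ST,i}(y_i')|^2\le\norm{\by-\by'}^2$. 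Setting $g(\bw)\vcentcolon=\norm{\hat{\bst}_{ST}(\bst+\bw)-\bst}$, the reverse triangle inequality makes $g$ a $1$-Lipschitz function of $\bw\sim\mathcal{N}(\mathbf{0},\mathbf{I})$, and $L(\bst,\hat{\bst}_{ST})=g(\bw)^2$.

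Next I would apply the Gaussian concentration inequality for Lipschitz functions (see, e.g., \cite{boucheron2}) to get $\prb(|g(\bw)-m|\ge s)\le 2e^{-s^2/2}$ for all $s>0$, with $m\vcentcolon=\ex g(\bw)$. To make the constants universal I also need $m=\mathcal{O}(\sqrt n)$: from \eqref{eq_ST} one checks the pointwise bound $|\hat\theta_{ST,i}-\theta_i|\le|w_i|+\lambda$ (when $|y_i|>\lambda$ the difference is $w_i\mp\lambda$, and when $|y_i|\le\lambda$ it is $-\theta_i$ with $|\theta_i|\le|y_i|+|w_i|\le\lambda+|w_i|$), so $R(\bst,\hat{\bst}_{ST})=\ex[g(\bw)^2]\le(2+2\lambda^2)n$ and therefore $m\le\sqrt{R(\bst,\hat{\bst}_{ST})}\le\sqrt{2+2\lambda^2}\,\sqrt n$. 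Then on the event $\{|g(\bw)-m|\le s\}$ one has $|g(\bw)^2-m^2|\le s^2+2ms$, so choosing $s=c\sqrt n\,\min(\sqrt t,t)$ with $c$ depending only on $\lambda$ forces $s^2+2ms\le nt$ and gives $\prb\!\big(\tfrac1n|g(\bw)^2-m^2|\ge t\big)\le 2e^{-nk'\min(t,t^2)}$ for an absolute $k'>0$.

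Finally I would pass from $m^2$ to the risk $R(\bst,\hat{\bst}_{ST})=\ex[g(\bw)^2]$ using the Gaussian Poincar\'e inequality: since $g$ is $1$-Lipschitz, $R(\bst,\hat{\bst}_{ST})-m^2=\mathrm{Var}(g(\bw))\le 1$, so $|R(\bst,\hat{\bst}_{ST})/n-m^2/n|\le 1/n$, a discrepancy that is absorbed into $t$ after adjusting the constant (for $t\lesssim 1/n$ the asserted bound is $\ge 1$ and holds trivially). Combining the last two displays yields $\prb\big(\tfrac1n|L(\bst,\hat{\bst}_{ST})-R(\bst,\hat{\bst}_{ST})|\ge t\big)\le 2e^{-nk\min(t,t^2)}$ with an absolute $k>0$, which is the claim.

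The only genuine subtlety is that the loss is the square of a Lipschitz function rather than a Lipschitz function itself, so the real content is the uniform $\mathcal{O}(\sqrt n)$ control of $m$ together with the square-to-linear conversion; the remaining steps are routine. A self-contained alternative that bypasses Gaussian concentration is to write $L(\bst,\hat{\bst}_{ST})=\sum_i(\hat\theta_{ST,i}(\theta_i+w_i)-\theta_i)^2$ as a sum of independent summands, each dominated by $2w_i^2+2\lambda^2$ and hence sub-exponential with parameters independent of $n$ and $\bst$, and to invoke Bernstein's inequality; this produces the same tail bound.
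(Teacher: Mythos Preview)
Your argument is correct but follows a different route from the paper's. The paper exploits the \emph{separability} of the soft-thresholding loss: writing $L(\bst,\hat{\bst}_{ST})=\sum_{i=1}^n g(w_i)$ with $g(w_i)=(\hat\theta_{ST,i}(\theta_i+w_i)-\theta_i)^2$, it verifies that each $g$ is pseudo-Lipschitz of order $2$ with constant $\max(1,2\lambda)$ and then invokes the concentration lemma for sums of pseudo-Lipschitz functions of independent sub-Gaussians (Lemma~\ref{lem_plipschitz}). Your main argument instead uses the \emph{global} $1$-Lipschitz structure of $\bw\mapsto\sqrt{L}$, applies Gaussian concentration to $\sqrt{L}$, and then converts back to $L$ via a square-to-linear step together with the Poincar\'e bound $\mathrm{Var}(\sqrt{L})\le 1$. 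Your alternative (Bernstein for the independent sub-exponential summands $(\hat\theta_{ST,i}-\theta_i)^2$) is essentially the same idea as the paper's, since a pseudo-Lipschitz-$2$ function of a sub-Gaussian is sub-exponential. The paper's route is slightly more direct here because the $\min(t,t^2)$ tail shape falls out immediately from the pseudo-Lipschitz/Bernstein structure, whereas your main route has to manage the extra conversion from $(\ex\sqrt{L})^2$ to $\ex L$ and absorb the $O(1/n)$ discrepancy into the constant; on the other hand, your Lipschitz argument does not rely on the coordinate-wise decomposition and would apply verbatim to any estimator that is a Euclidean contraction of $\by$.
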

\begin{proof}
See Sec. \ref{sec_proof_thm1b}.
\end{proof}

\subsection{Concentration and Risk Bound for the Hybrid Estimator} \label{subsec:hyb_conc}

For a given  $\bst$, let
\begin{align}
&L_{min}( \boldsymbol{\theta},\by) \vcentcolon=  \min\left\{L( \boldsymbol{\theta}, \hat{\bst}_{EB}(\by)), L( \boldsymbol{\theta}, \hat{\bst}_{ST}(\by))\right\}, \nonumber \\
&L_{max}( \boldsymbol{\theta},\by) \vcentcolon=  \max\left\{L( \boldsymbol{\theta}, \hat{\bst}_{EB}(\by)), L( \boldsymbol{\theta}, \hat{\bst}_{ST}(\by))\right\}, \nonumber \\
&L_{sep}( \boldsymbol{\theta}, \by) \vcentcolon= L_{max}( \boldsymbol{\theta},\by) - L_{min}( \boldsymbol{\theta},\by), \label{eq:lsep_def} \\
& \kappa_n \vcentcolon= \frac{\abs{R_1(\bst,\hat{\bst}_{EB}) - R_2(\bst,\hat{\bst}_{EB})}}{n}, \label{eq:kappa_n}
\end{align}
where $R_1(\bst,\hat{\bst}_{EB})$ and $R_2(\bst,\hat{\bst}_{EB})$ are the deterministic concentrating values in Theorems \ref{thm1} and \ref{thm1a}, respectively.
Note that $ \kappa_n $ is an $\mathcal{O}(1/\sqrt{n})$ quantity since both $R_1(\bst,\hat{\bst}_{EB})/n$ and $R_2(\bst,\hat{\bst}_{EB})/n$ are within $\mathcal{O}(1/\sqrt{n})$ from ${R}(\bst,\hat{\bst}_{EB})/n$. The following theorem characterizes the loss $L(\bst,\hat{\bst}_H(\by))$ and the risk $R(\bst,\hat{\bst}_H)$ of the hybrid estimator.
\begin{thm}\label{thm2}
Consider a sequence of $\bst$ with increasing dimension $n$ and satisfying Assumption A. Then, for any $t >0$, we have
\begin{align} \label{eq:hybrid_loss}
&\prb\left( \frac{1}{n} L(\bst,\hat{\bst}_H(\by))  \geq \frac{1}{n}L_{min}( \boldsymbol{\theta},\by) + t + \kappa_n \right) \leq K e^{-nk\min(t,t^2)}, 
\end{align}
for some absolute positive constants $K$ and $k$. The risk of the hybrid estimator can be bounded as
\begin{align}
   \frac{R( \boldsymbol{\theta}, \hat{\bst}_{H})}{n} & \leq  \frac{\ex \left[L_{min}( \boldsymbol{\theta},\by)\right]}{n} +  \mathcal{O}\left(\frac{1}{\sqrt{n}}\right) \\
 & \leq \frac{1}{n} \min \left \{ R( \boldsymbol{\theta}, \hat{\bst}_{EB}), \, R( \boldsymbol{\theta}, \hat{\bst}_{ST})  \right \}  +  \mathcal{O}\left(\frac{1}{\sqrt{n}}\right). \label{eq:hybrid_risk}
\end{align}
 \end{thm}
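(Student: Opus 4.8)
The plan is to obtain \eqref{eq:hybrid_loss} by combining, through a single union bound, the four concentration statements already proved --- Theorem \ref{thm1} (the SURE of eBayes concentrates on $R_1$), Theorem \ref{thm1a} (the loss of eBayes concentrates on $R_2$), Theorem \ref{fact1} (the SURE of soft-thresholding concentrates on $R(\bst,\hat{\bst}_{ST})$), and Theorem \ref{thm1b} (the loss of soft-thresholding concentrates on the \emph{same} $R(\bst,\hat{\bst}_{ST})$) --- and then to deduce \eqref{eq:hybrid_risk} by integrating the resulting tail. Throughout I would work with normalized (per-coordinate) quantities.

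Fix $t>0$ and set $s \vcentcolon= t/4$. Define the four ``good'' events $A_1 = \{\abs{\hat{R}(\bst,\hat{\bst}_{EB}) - R_1(\bst,\hat{\bst}_{EB})}/n < s\}$, $A_2 = \{\abs{L(\bst,\hat{\bst}_{EB}) - R_2(\bst,\hat{\bst}_{EB})}/n < s\}$, $A_3 = \{\abs{\hat{R}(\bst,\hat{\bst}_{ST}) - R(\bst,\hat{\bst}_{ST})}/n < s\}$, and $A_4 = \{\abs{L(\bst,\hat{\bst}_{ST}) - R(\bst,\hat{\bst}_{ST})}/n < s\}$. By the four cited theorems, together with the elementary inequality $\min(s,s^2) \ge \tfrac{1}{16}\min(t,t^2)$ for $s=t/4$, each $\prb(A_i^c)$ is at most a fixed constant times $e^{-nk'\min(t,t^2)}$. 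On $A \vcentcolon= A_1\cap A_2\cap A_3\cap A_4$ I would then split into two cases according to the selection rule \eqref{eq_gamma}. If the hybrid picks $\hat{\bst}_{EB}$, i.e. $\hat{R}(\bst,\hat{\bst}_{EB}) \le \hat{R}(\bst,\hat{\bst}_{ST})$, then $L(\bst,\hat{\bst}_H)=L(\bst,\hat{\bst}_{EB})$; combining this selection inequality with $A_1$ and $A_3$ gives $R_1(\bst,\hat{\bst}_{EB})/n < R(\bst,\hat{\bst}_{ST})/n + 2s$, and feeding this through $A_2$, $A_4$ and $\kappa_n = \abs{R_1(\bst,\hat{\bst}_{EB}) - R_2(\bst,\hat{\bst}_{EB})}/n$ yields $[L(\bst,\hat{\bst}_{EB}) - L(\bst,\hat{\bst}_{ST})]/n < 4s + \kappa_n = t + \kappa_n$, hence $L(\bst,\hat{\bst}_H)/n < L_{min}(\bst,\by)/n + t + \kappa_n$. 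The case where the hybrid picks $\hat{\bst}_{ST}$ is entirely symmetric. Thus the complement of the event in \eqref{eq:hybrid_loss} contains $A$, and a union bound over $\prb(A_1^c),\dots,\prb(A_4^c)$ gives \eqref{eq:hybrid_loss} with $K$ the sum of the four prefactors and $k$ a constant multiple of the smallest rate.

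For the risk bound I would set $Z \vcentcolon= [L(\bst,\hat{\bst}_H) - L_{min}(\bst,\by)]/n$, which is $\ge 0$ because the hybrid always outputs one of the two estimators, and use \eqref{eq:hybrid_loss} in the form $\prb(Z \ge t + \kappa_n) \le K e^{-nk\min(t,t^2)}$. Since $0 \le Z \le (Z-\kappa_n)_+ + \kappa_n$ and $\prb((Z-\kappa_n)_+ \ge t) \le K e^{-nk\min(t,t^2)}$, integrating this tail (or invoking Lemma \ref{lem3}) gives $\ex[(Z-\kappa_n)_+] = \mathcal{O}(1/\sqrt n)$, hence $\ex Z = \mathcal{O}(1/\sqrt n) + \kappa_n = \mathcal{O}(1/\sqrt n)$ since $\kappa_n = \mathcal{O}(1/\sqrt n)$. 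Taking expectations in $L(\bst,\hat{\bst}_H)/n = L_{min}(\bst,\by)/n + Z$ gives $R(\bst,\hat{\bst}_H)/n \le \ex[L_{min}(\bst,\by)]/n + \mathcal{O}(1/\sqrt n)$. The second inequality in \eqref{eq:hybrid_risk} is then immediate from the pointwise bounds $L_{min}(\bst,\by) \le L(\bst,\hat{\bst}_{EB})$ and $L_{min}(\bst,\by) \le L(\bst,\hat{\bst}_{ST})$ together with $\ex L(\bst,\hat{\bst}_{EB}) = R(\bst,\hat{\bst}_{EB})$ and $\ex L(\bst,\hat{\bst}_{ST}) = R(\bst,\hat{\bst}_{ST})$.

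I expect the only genuinely delicate point to be bookkeeping rather than analysis. Because $\hat{R}(\bst,\hat{\bst}_{EB})$ concentrates on $R_1$ while $L(\bst,\hat{\bst}_{EB})$ concentrates on the slightly different value $R_2$, the hybrid rule compares risk estimates whose deterministic centers differ from those of the losses by exactly $\kappa_n$; this mismatch is precisely what produces the additive $\kappa_n$ in \eqref{eq:hybrid_loss}, and it must be routed carefully through every comparison on the good event. The remaining care is to choose the slack $s = t/4$ so that the four individual slacks plus $\kappa_n$ add up to exactly $t+\kappa_n$, and to rescale the four $\min(s,s^2)$ rates to a common $\min(t,t^2)$; once that is set up, the four previously established concentration inequalities do all the real work.
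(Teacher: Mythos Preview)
Your proposal is correct and follows essentially the same approach as the paper: both arguments combine the four concentration results (Theorems \ref{thm1}, \ref{thm1a}, \ref{fact1}, \ref{thm1b}) via a union bound to control the probability that the hybrid picks the worse estimator, with the $\kappa_n$ term absorbing the gap between $R_1$ and $R_2$, and then integrate the resulting tail to obtain the risk bound. The only cosmetic difference is that the paper first reduces to the case $L_{sep}/n > t+\kappa_n$ and bounds $\prb(\gamma_{\by}=0)$ (respectively $\prb(\gamma_{\by}=1)$) directly, whereas you work on the intersection of four ``good'' events with slack $t/4$; the ingredients and logic are the same.
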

\begin{proof} See Sec. \ref{sec_hybrid_proof}. \end{proof}



\section{Simulation Results}\label{sec_simulation}

\begin{figure*}[t]
    \centering
    \begin{subfloat}[\label{fig3}]{
       \includegraphics[trim= 0.9in 0 1.22in 0, clip=true,width= 2.6in,height=2.2in]{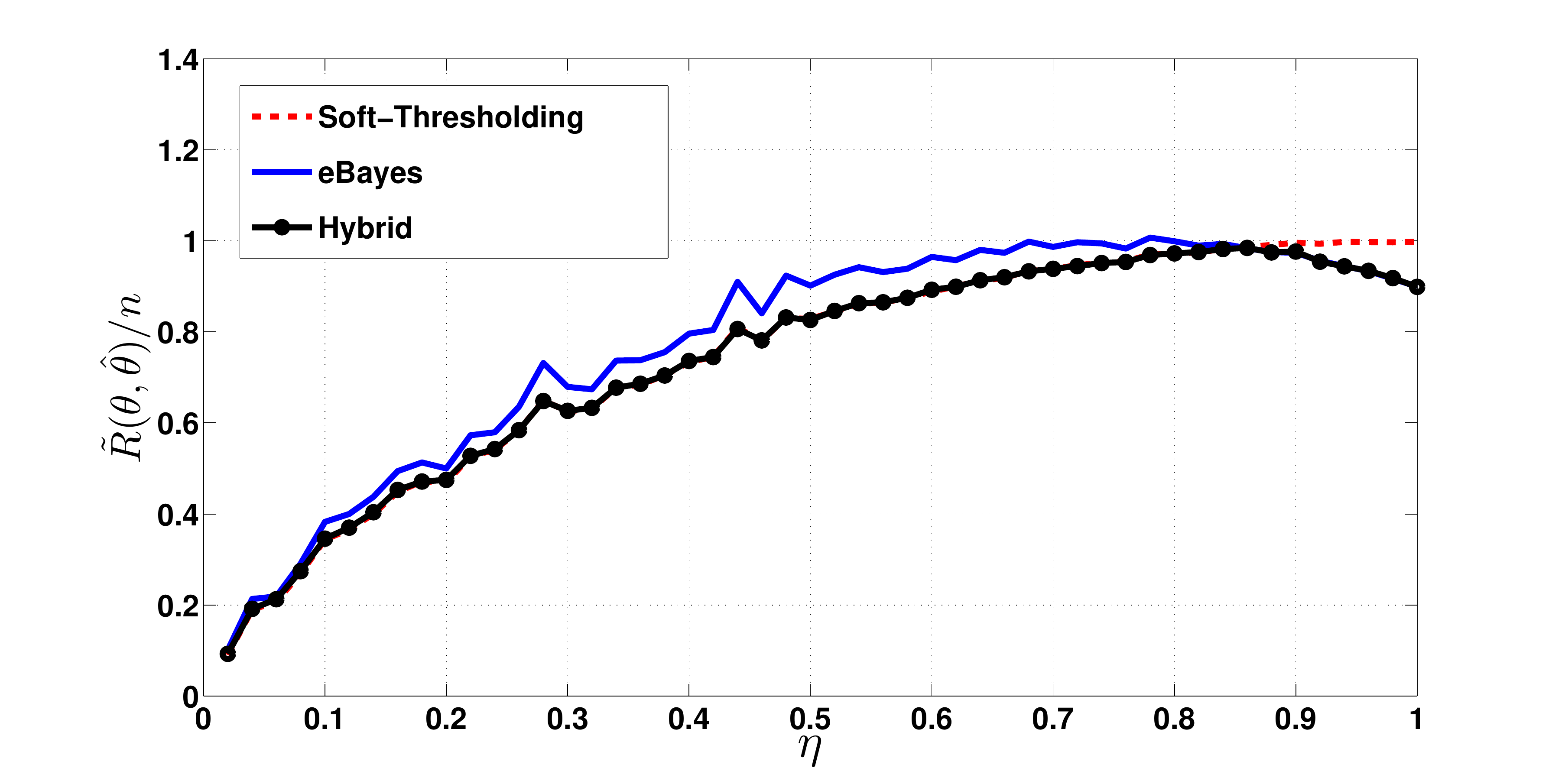}   
       }
    \end{subfloat}
  \quad
    \begin{subfloat}[\label{fig4}]{
       \includegraphics[trim= 0.9in 0 1.22in 0, clip=true,width= 2.6in,height=2.2in]{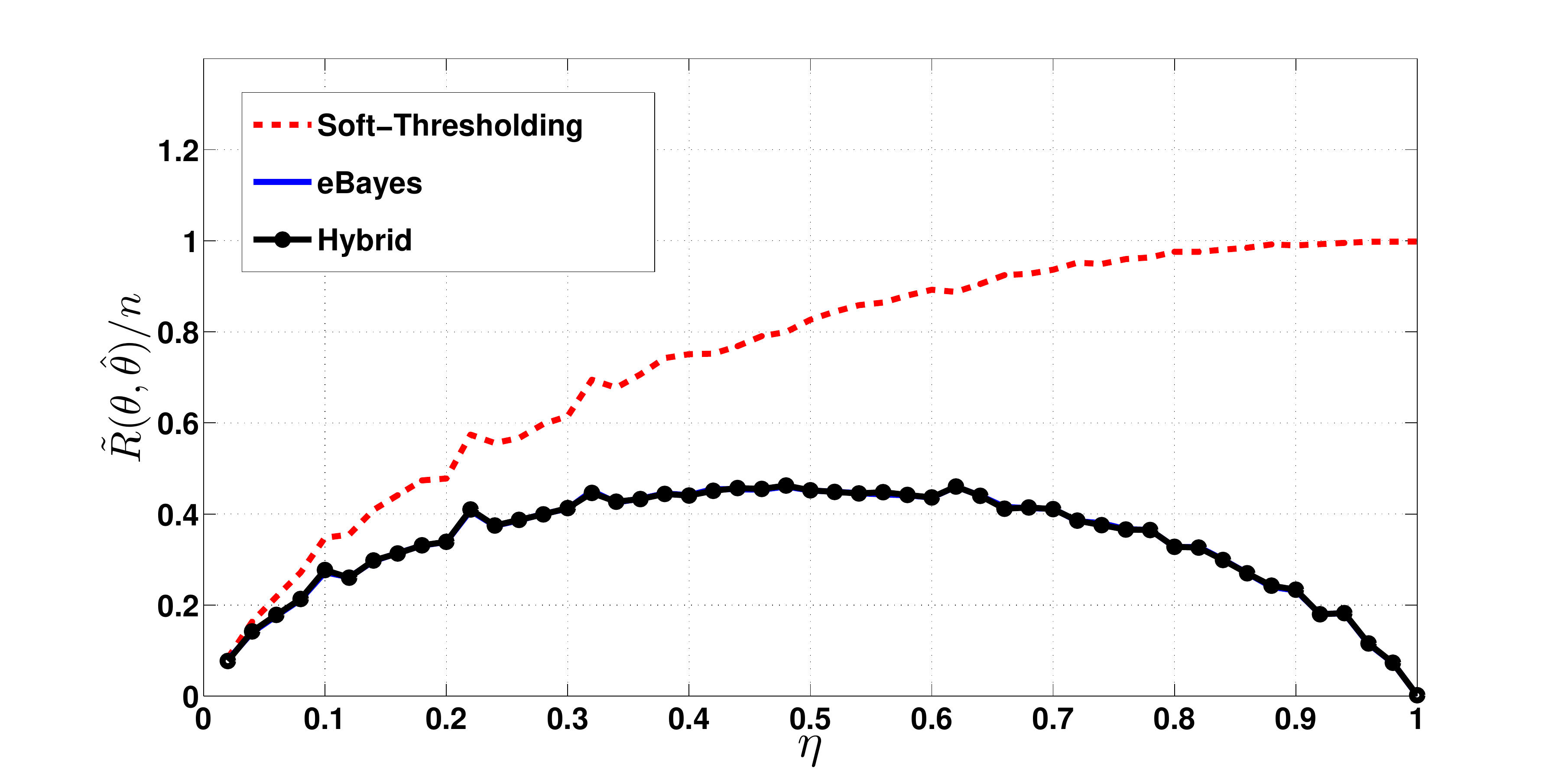}
       }
    \end{subfloat}    
    \caption{\small  Average normalized loss $\tilde{R}(\bst,\bsth)/n$ with $n=1000$ for the following cases: a) Half the non-zero entries in $\bst$ equal $3$ and the other half $-3$. b) All the non-zero entries in $\bst$ equal $3$. }
    \label{fig_34}
\end{figure*}

The performance of the hybrid estimator for the two kinds of $\bst$ considered in Fig. \ref{fig_12} is highlighted in Fig. \ref{fig_34}.  Clearly, $n =1000$ is large enough for the hybrid estimator to accurately pick the better of $\bsth_{ST}$ and $\bsth_{EB}$. In both Figs. \ref{fig_12} and \ref{fig_34}, $\e$ was chosen equal to the true sparsity level $\eta$ for both estimators.

When the true sparsity level $\eta$ is unknown, one can optimize SURE to find the best choice of $\e$ for both $\bsth_{ST}$ and $\bsth_{EB}$. The concentration results (Theorem \ref{fact1} and Theorem \ref{thm1}) imply that the SURE for either  estimator does not deviate much from the actual risk for large $n$. Donoho and Johnstone \cite{donoho_johnstoneWav} have proposed SureShrink which chooses the thresholding parameter $\lambda^*$ from the interval $(0,\sqrt{2\log n}]$ as follows. The interval  $(0,\sqrt{2\log n}]$ is discretized to define a discrete set $\mathcal{S}$. Then
\begin{equation}\label{eq_opt_th}
 \lambda^* = \argmin_{\lambda \in \mathcal{S}} {\hat{R}(\bst,\bsth_{ST}(\by);\lambda)}/ n
\end{equation}
where $ \hat{R}(\bst,\bsth_{ST}(\by);\lambda)$ is as defined in \eqref{sure_st}.

\begin{figure*}[!htp]
    \centering
    \begin{subfloat}[\label{fig5}]{
       \includegraphics[trim= 0.9in 0 1.22in 0, clip=true,width=2.6in,height=2.2in]{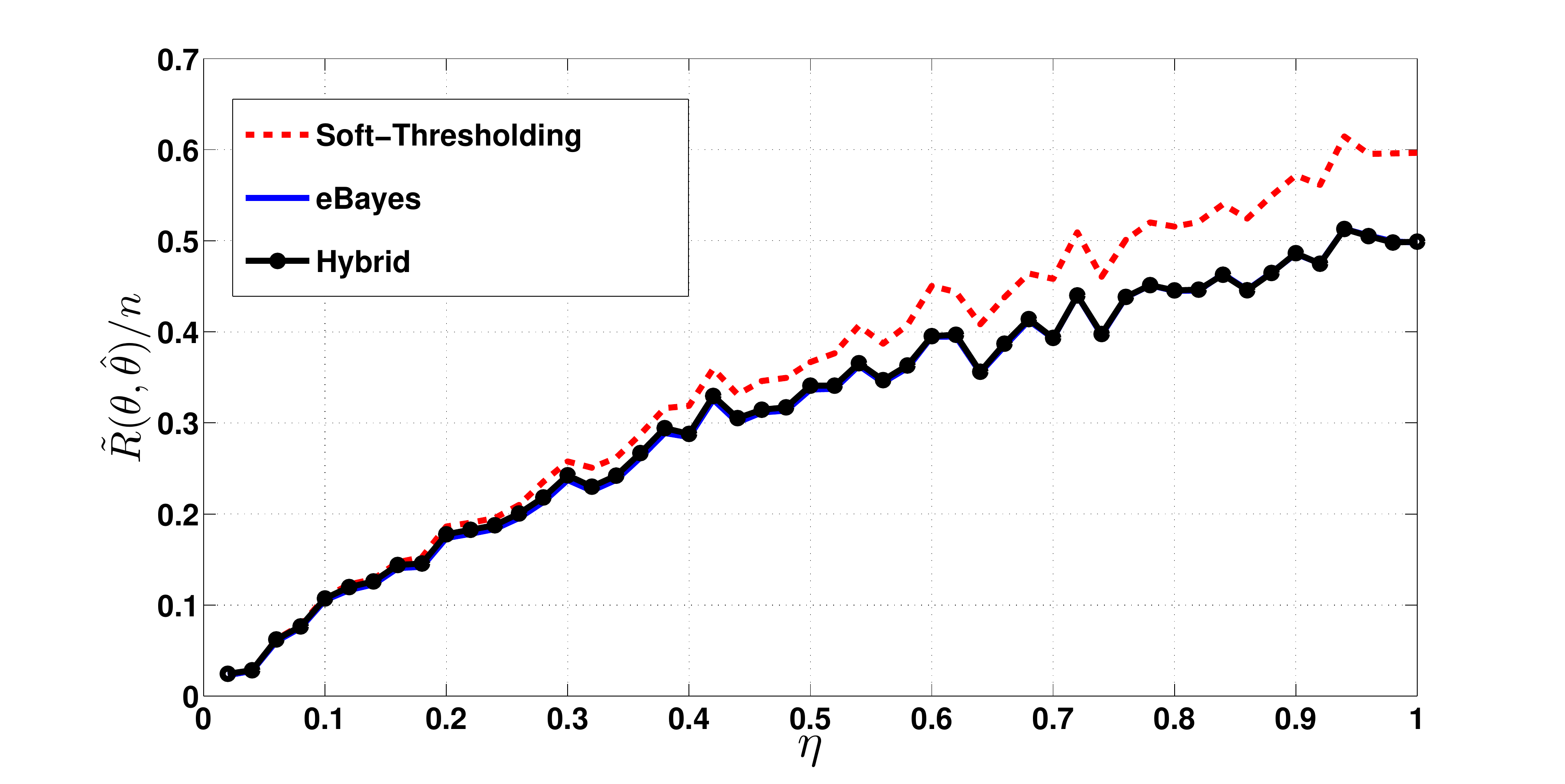}   
       }
    \end{subfloat}
  \quad
    \begin{subfloat}[\label{fig6}]{
       \includegraphics[trim= 0.9in 0 1.22in 0, clip=true,width=2.6in,height=2.2in]{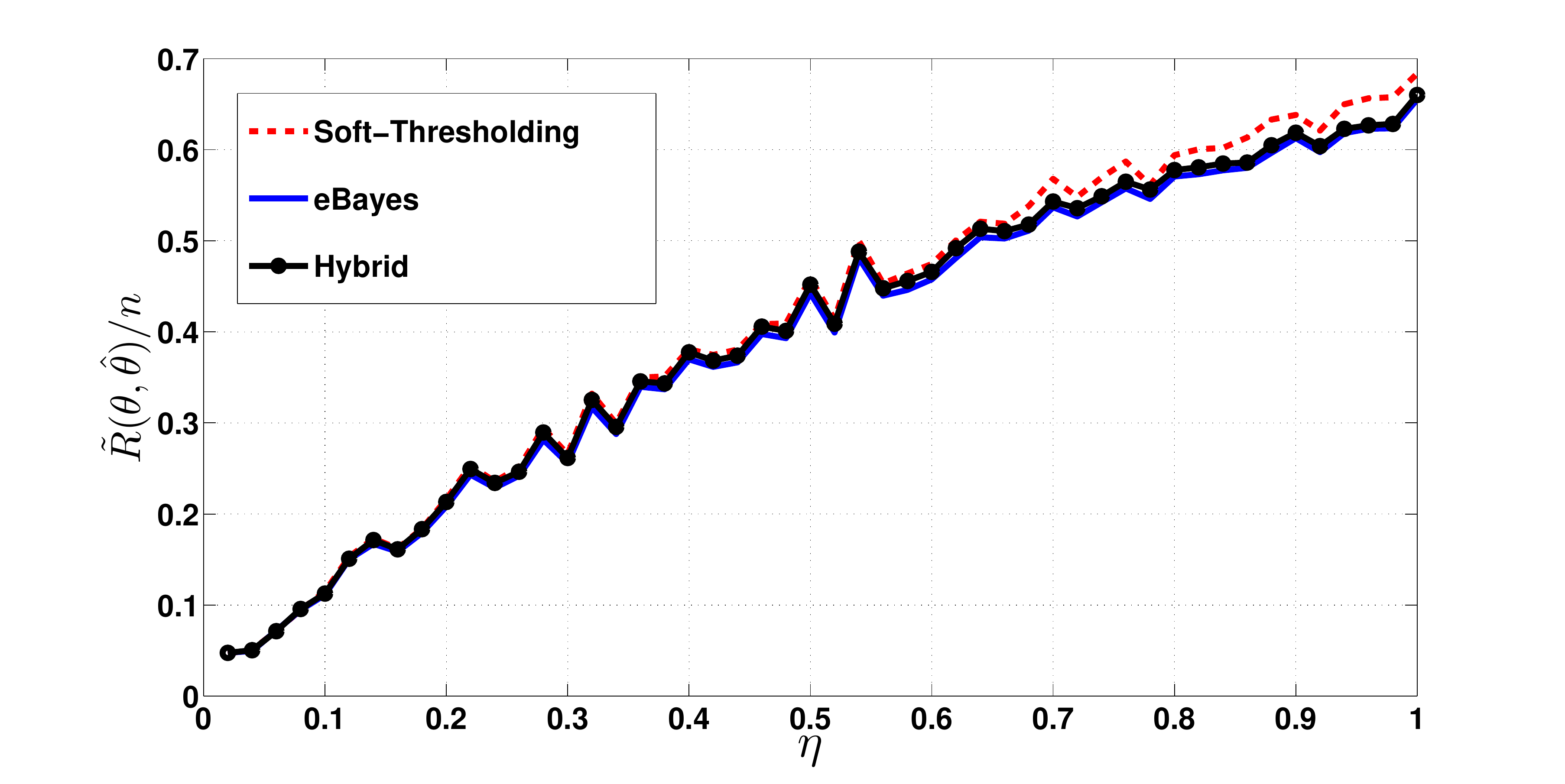}
       }
    \end{subfloat} 
    \quad
    \begin{subfloat}[\label{fig7}]{
       \includegraphics[trim= 0.9in 0 1.22in 0, clip=true,width=2.6in,height=2.2in]{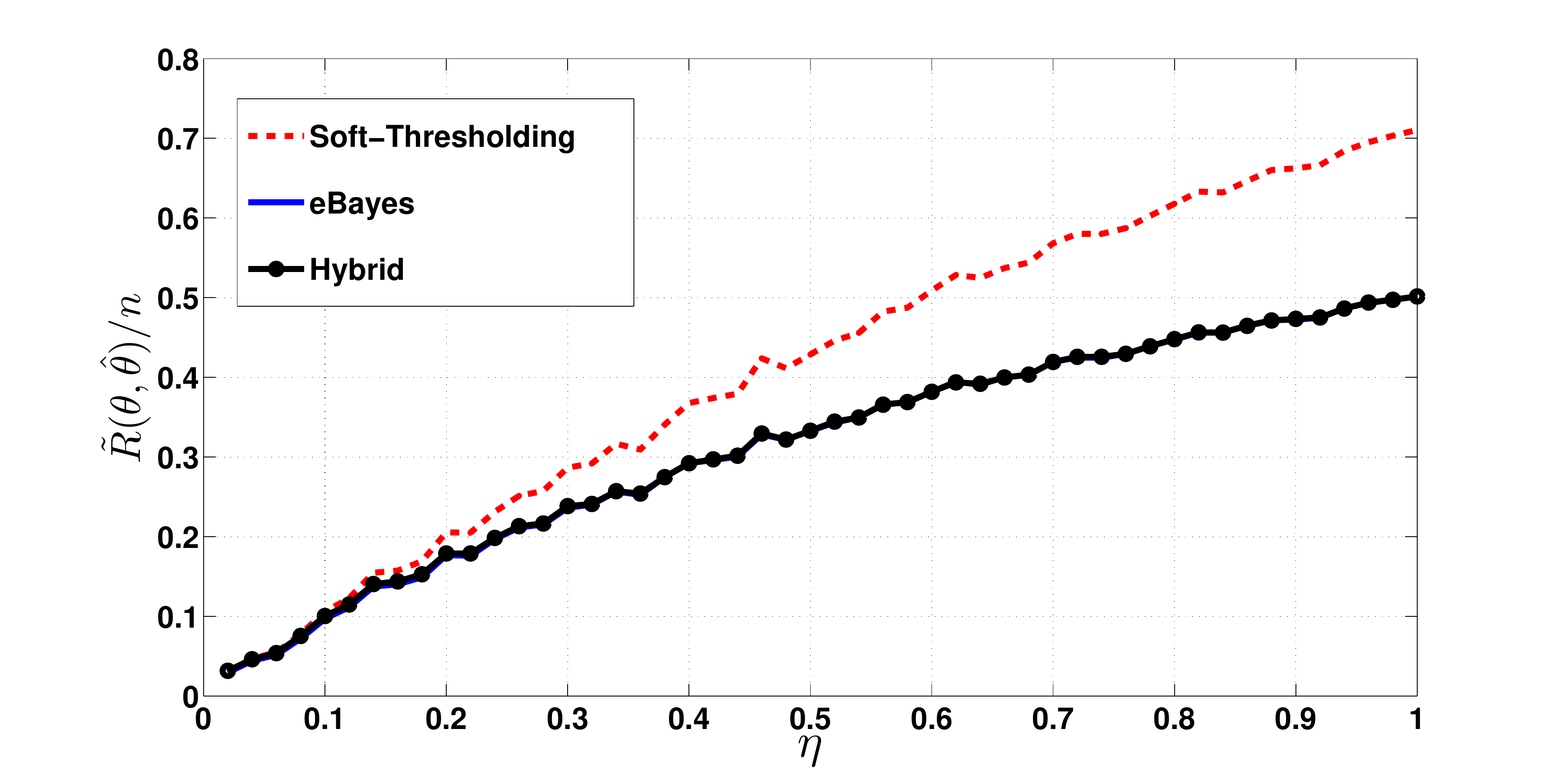}
       }
    \end{subfloat} 
    \quad
    \begin{subfloat}[\label{fig8}]{
       \includegraphics[trim= 0.7in 0 1.22in 0, clip=true,width=2.6in,height=2.2in]{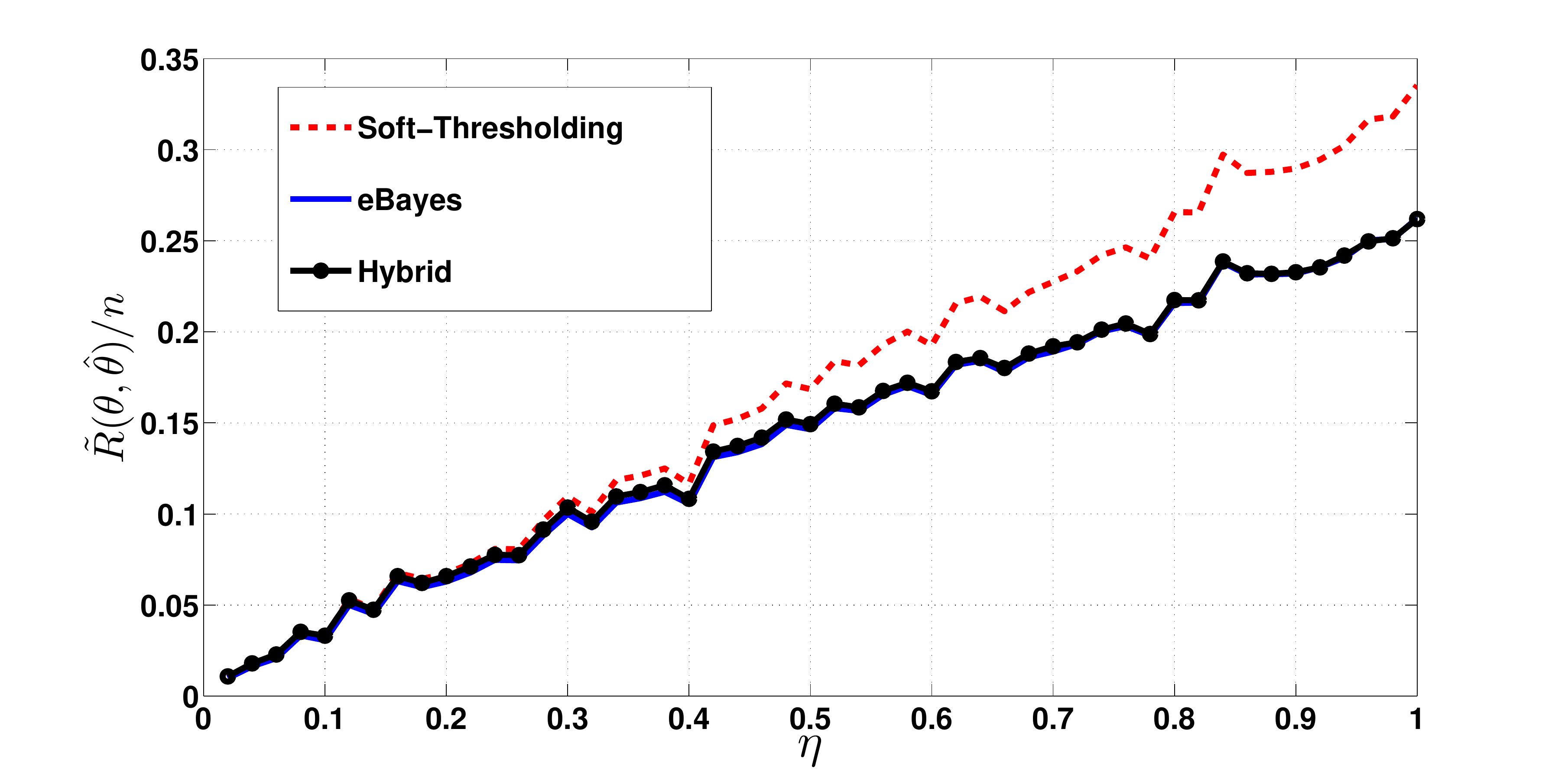}
       }
    \end{subfloat}   
    \caption{\small Average normalized loss $\tilde{R}(\bst,\bsth)/n$ with $n=1000$ for the following cases: a) The non-zero entries of $\bst$ are drawn from $\mathcal{N}(0,1)$. b) The non-zero entries are drawn from the Laplace distribution with mean $0$ and variance $2$. c) The non-zero entries are drawn from the Rademacher (equiprobable $\pm 1$) distribution. d) The non-zero entries are drawn uniformly from $[-1,1]$.}
    \label{fig_5678}
\end{figure*}

\begin{figure*}[!htp]
    \centering
    \begin{subfloat}[\label{fig5_1}]{
       \includegraphics[trim= 0.9in 0 1.22in 0, clip=true,width=2.6in,height=2.2in]{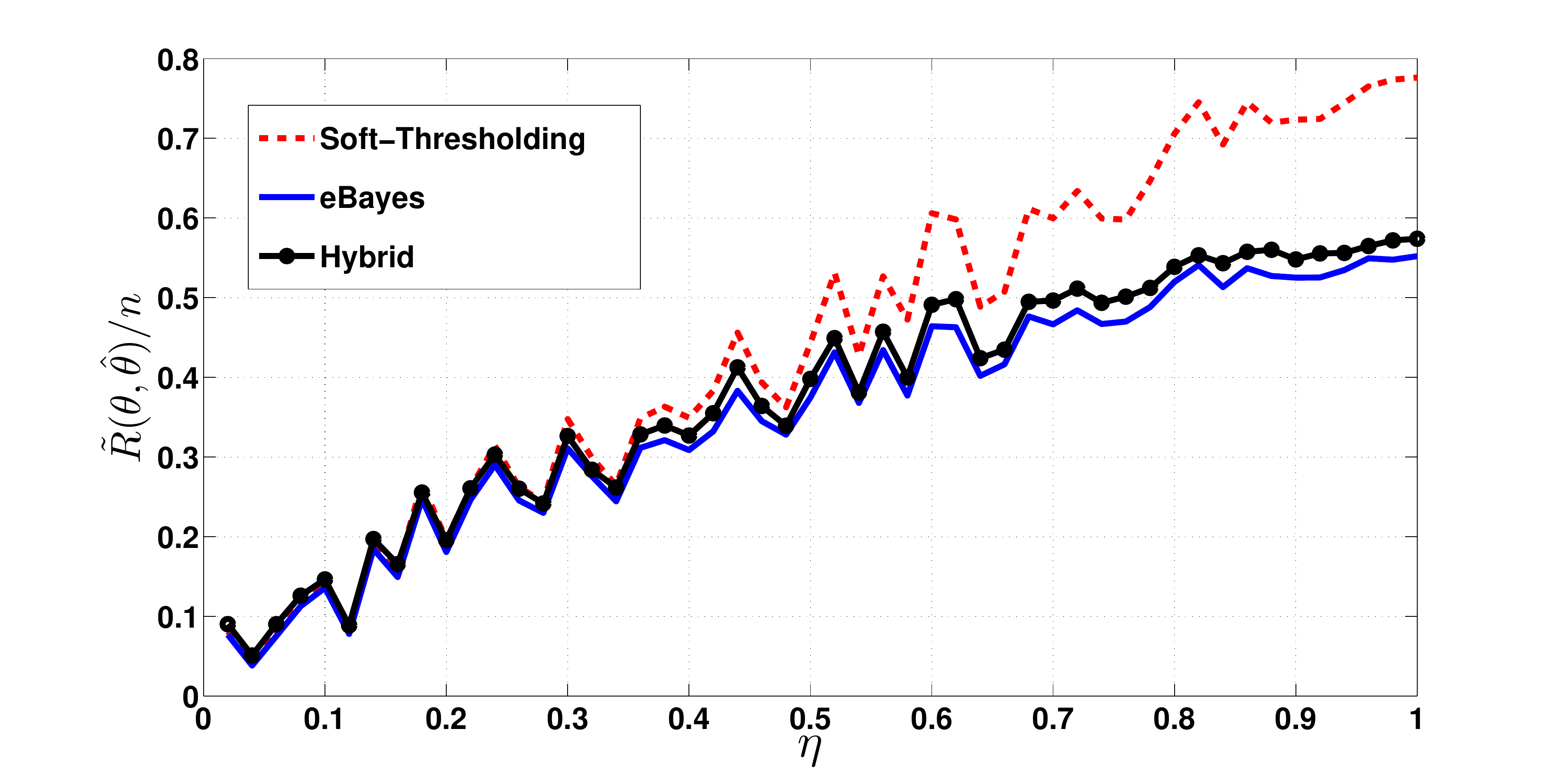}   
       }
    \end{subfloat}
  \quad
    \begin{subfloat}[\label{fig6_1}]{
       \includegraphics[trim= 0.9in 0 1.22in 0, clip=true,width=2.6in,height=2.2in]{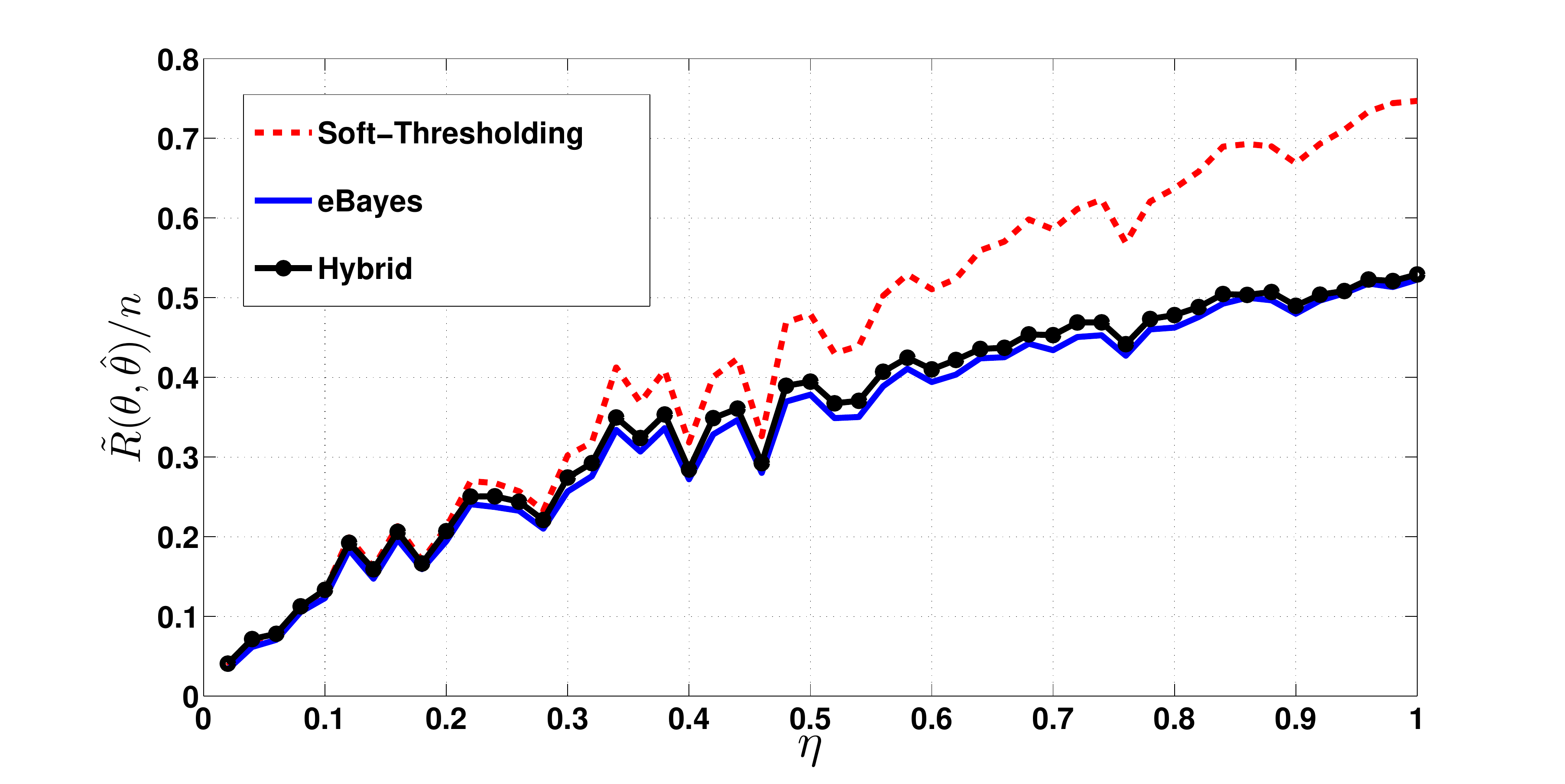}
       }
    \end{subfloat} 
    \quad
    \begin{subfloat}[\label{fig7_1}]{
       \includegraphics[trim= 0.9in 0 1.22in 0, clip=true,width=2.6in,height=2.2in]{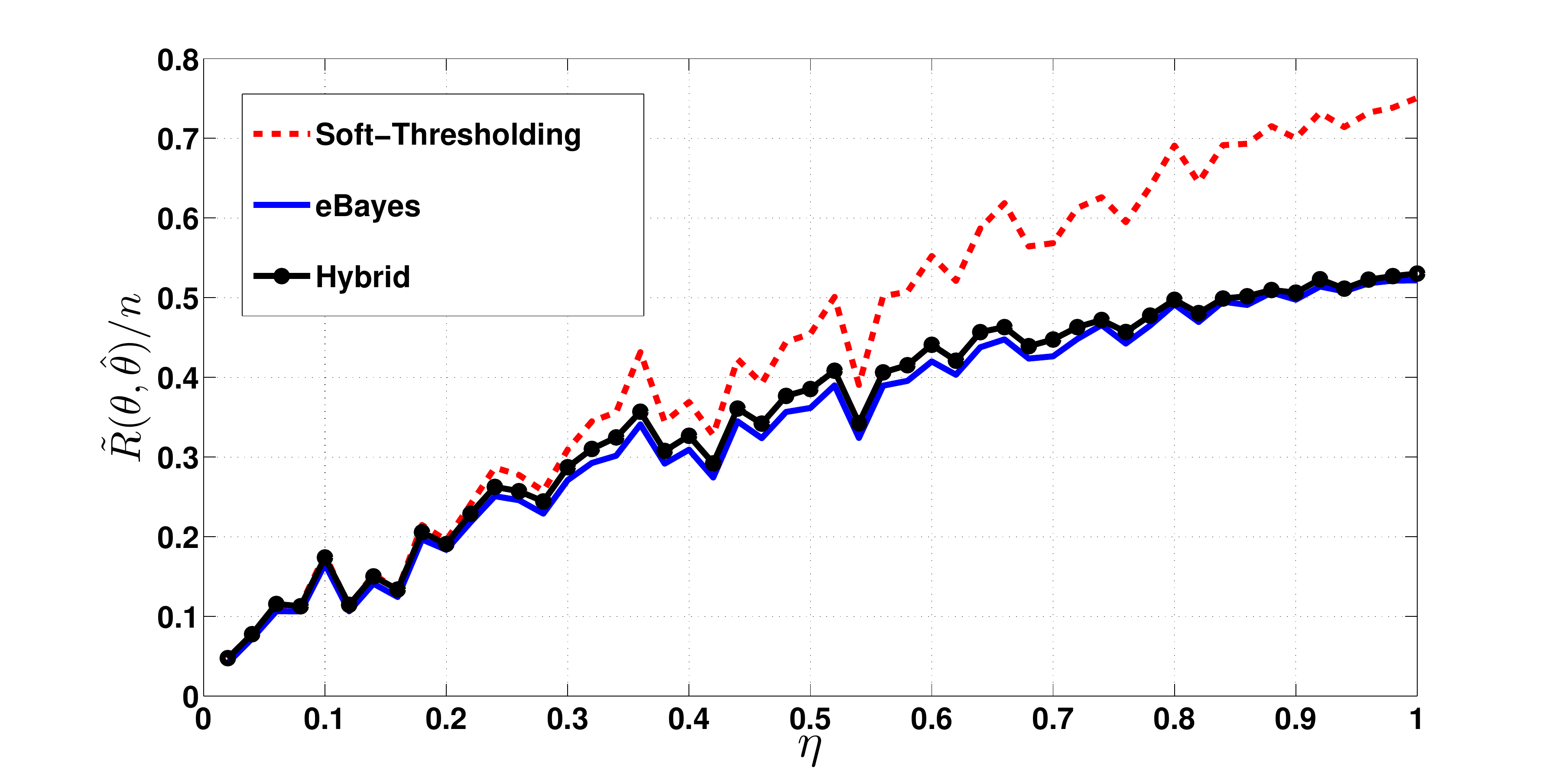}
       }
    \end{subfloat} 
    \quad
    \begin{subfloat}[\label{fig8_1}]{
       \includegraphics[trim= 0.9in 0 1.22in 0, clip=true,width=2.6in,height=2.2in]{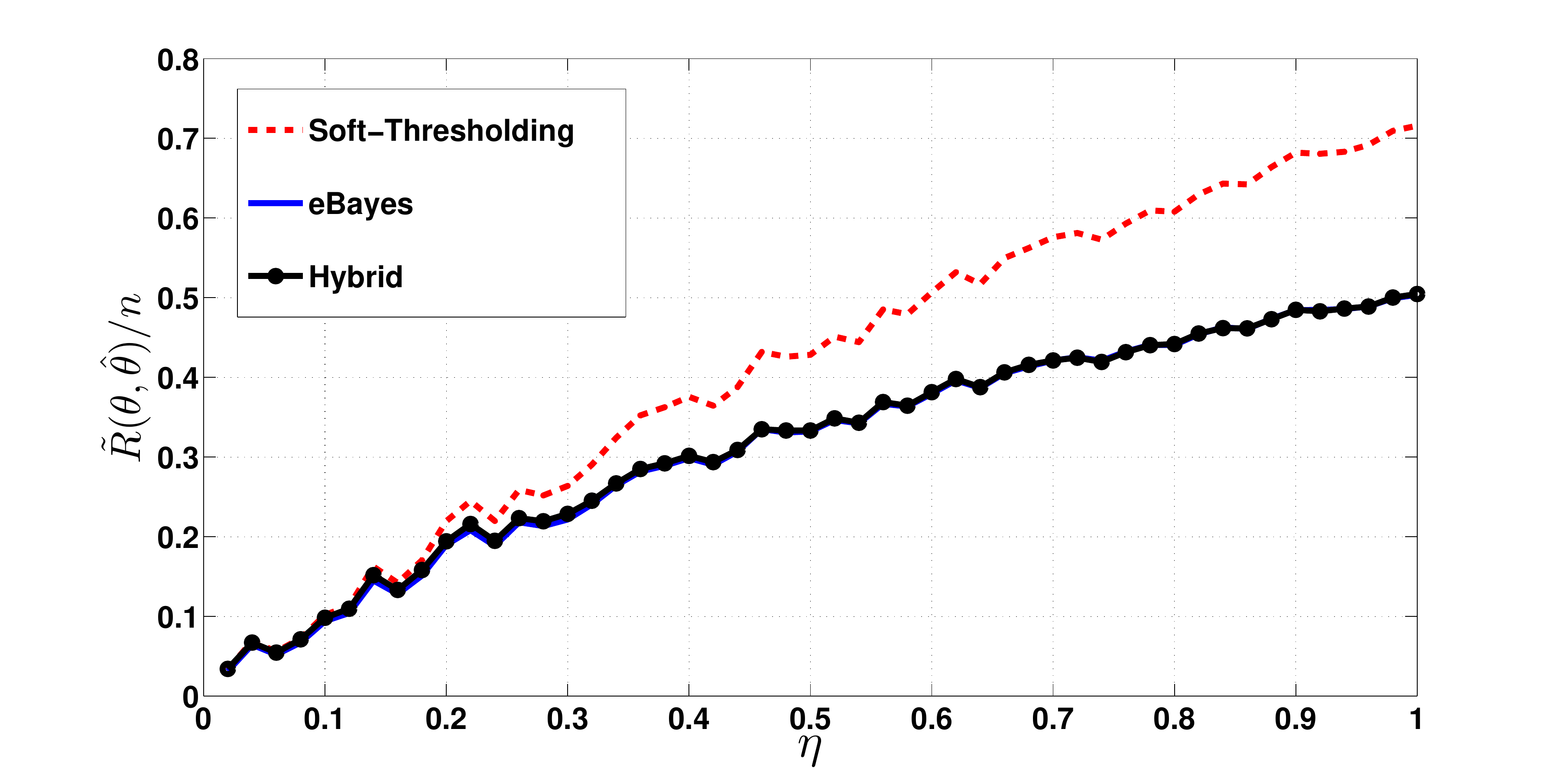}
       }
    \end{subfloat}   
    \caption{\small Average normalized loss $\tilde{R}(\bst,\bsth)/n$ with the non-zero entries drawn from the Rademacher distribution for the following cases: a) $n = 50$ b) $n = 100$ c) $n = 200$ d) $n = 500$.}
    \label{fig_5678_2}
\end{figure*}

For the eBayes estimator, we propose to find the best value of $\e$ in \eqref{eq:eBayes} by first discretizing the interval $(0,1]$ to define a discrete set $\mathcal{D}$, and choosing the sparsity parameter as 
\begin{equation}\label{eq_opt_eps}
 \e^* = \argmin_{\e \in \mathcal{D}} {\hat{R}(\bst,\bsth_{EB}(\by);\e)}/ n.
\end{equation}
Here $\hat{R}(\bst,\bsth_{EB}(\by);\e)/n$ is as  in \eqref{sure_eb}, with suitable modifications to account for non-zero $\hat{\mu}$. The hybrid estimator then chooses the estimator with the lower value of SURE, i.e., by comparing
$\hat{R}(\bst,\bsth_{ST}(\by);\lambda^*)$ versus  $\hat{R}(\bst,\bsth_{EB}(\by);\e^*)$.

Fig. \ref{fig_5678} shows the performance of the hybrid estimator at different sparsity levels for four choices for the distribution of the non-zero entries of $\bst$: Gaussian  (Fig. \ref{fig5}),  Laplacian (Fig. \ref{fig6}), Rademacher (equiprobable $\pm 1$) (Fig. \ref{fig7}), and uniform  (Fig. \ref{fig8}). 
We assume that the actual sparsity factor $\eta$ is unknown and use SURE to find the best sparsity parameters $\lambda^*$ and $\e^*$ for $\bsth_{ST}$ and $\bsth_{EB}$, respectively. The optimization is performed over the discrete sets   $\mathcal{S} = \{ 0.1i, i \in [\lceil 10\sqrt{2\log n}\rceil] \}$ and $\mathcal{D} = \{ 0.02i, i \in [ 50 ] \}$. In all the plots, $n = 1000$.  The plots suggest that for a wide range of $\bst$, $\bsth_{EB}$ is at least as good as $\bsth_{ST}$  for all values of the sparsity factor $\eta$, and better in most cases. 

Fig. \ref{fig_5678_2} illustrates the performance of the hybrid estimator as a function of $n$.  It shows the average normalized losses of the three estimators for different values of $n$ when the non-zero entries of $\bst$ take values from the Rademacher  distribution. These plots indicate that the proposed hybrid estimator performs very well even for relatively small values of $n$.  

\subsection{Application to Compressed Sensing}\label{subsec_AMP}
Given a measurement matrix $\mathbf{A} \in \mathbb{R}^{m \times n}$, the  goal  in compressed sensing \cite{CandesTaoLP, donohoCS, candesRT06}  is to estimate a sparse vector $\bst \in \mathbb{R}^n$ from a  noisy linear measurement $\by \in \mathbb{R}^m$. In particular,  consider the measurement model 
\begin{align*}
 \by = \mathbf{A}\bst + \bw,
\end{align*}
where  $\mathbf{A}$ is an  $m \times n$ random matrix with i.i.d. sub-Gaussian entries (normalized so that its columns have Euclidean norm concentrated around $1$),  and the noise vector  $\bw \sim \mathcal{N}(\mathsf{0}, \sigma^2 \mathbf{I})$.  The undersampling ratio is denoted by $\delta \vcentcolon = m/n <1$. 

For this linear model, Approximate message passing (AMP) \cite{DonMalMont09, BayMont11, montanari_graphical_models, Rangan11,BayatiLM15} is a class of low-complexity iterative algorithms to estimate  $\bst$ from $\by$. Starting with the initial conditions $\bst_0 = \mathsf{0}$, $\mathbf{z}_0 = \by$, 
 AMP iteratively produces estimates $\{\bst_t \}$, for $t \geq 1$  as follows \cite{BayMont11}:
\begin{align}  
 \bst_{t} &=  f_t\left(\mathbf{A}^T \mathbf{z}_{t-1} + \bst_{t-1}\right) \label{eq_amp_algo_theta} \\
   \mathbf{z}_t &= \by - \mathbf{A}\bst_t + \frac{1}{\delta}\mathbf{z}_{t-1}\left\langle f_{t}^\prime\left( \mathbf{A}^T \mathbf{z}_{t-1} + \bst_{t-1} \right)\right\rangle.  \label{eq_amp_algo_z}
\end{align}
Here for each $t$, $f_t : \mathbb{R} \to \mathbb{R}$ is a ``denoising" function, and   $f^\prime_t$ denotes its derivative. For a vector input $\mathbf{u} \in \mathbb{R}^n$, both $f_t$ and $f^{\prime}_t$  operate component-wise on $\mbf{u}$. Further, for  $\mathbf{u} \in \mathbb{R}^n$, $\langle \mathbf{u} \rangle \vcentcolon= \frac{1}{n}\sum_{t=1}^n u_t$ denotes the average of its entries. 

The AMP update \eqref{eq_amp_algo_theta} is underpinned by the following key property of the effective observation vector $(\mbf{A}^T\mathbf{z}_t + \bst^t)$: for large $n$, after each iteration $t$, $(\mbf{A}^T \mathbf{z}_t + \bst^t)$ is approximately distributed as $\bst + \tau_t \mbf{Z}$, where $\mbf{Z} \in \mathbb{R}^n$ is an i.i.d.\ $\mc{N}(0,1)$ random vector independent of $\bst$. The effective noise variance $\tau_t^2$ is determined (in the large system limit)  by a  scalar recursion called state evolution \cite{BayMont11}, \cite{montanari_graphical_models}. For our purposes, it suffices to note that for each $t$, a good estimate of $\tau_t^2$ is given by $\widehat{\tau}_t^2 \vcentcolon = \frac{\norm{\mathbf{z}_t}^2}{m}$ (see, for example,  \cite[pp. 14,21]{montanari_graphical_models}, also \cite{bayMontLASSO,cindy_ramji}). 

Thus, the function $f_t$ estimates the sparse vector $\bst$ from an observation in Gaussian noise of variance approximately $\widehat{\tau}_{t-1}^2=\frac{\norm{z^{t-1}}^2}{m}$.  Therefore, in each iteration, the AMP  provides a platform to compare the performance of soft-thresholding and the eBayes estimator (and hence the hybrid estimator) as choices for $f_t$. We note that while soft-thresholding operates on a vector component-wise, the eBayes estimator doesn't. However, for sufficiently large values of $m$ and $n$, both $\hat{\mu}$ and $\widehat{\xi^2}$ in \eqref{eq:mean_estimate}-\eqref{eq:eBayes} are close to deterministic values in which case the eBayes estimator also approximately acts component-wise on a vector.  We remark that if we use soft-thresholding with the threshold in each iteration tuned to the noise-level  $\tau_t$,  the fixed points of the AMP algorithm coincide with that of the LASSO \cite{montanari_graphical_models,bayMontLASSO}.

\begin{figure*}[!htp]
    \centering
    \begin{subfloat}[\label{fig9}]{
       \includegraphics[trim= 0.6in 0 1.22in 0, clip=true,width=2.5in,height=2.2in]{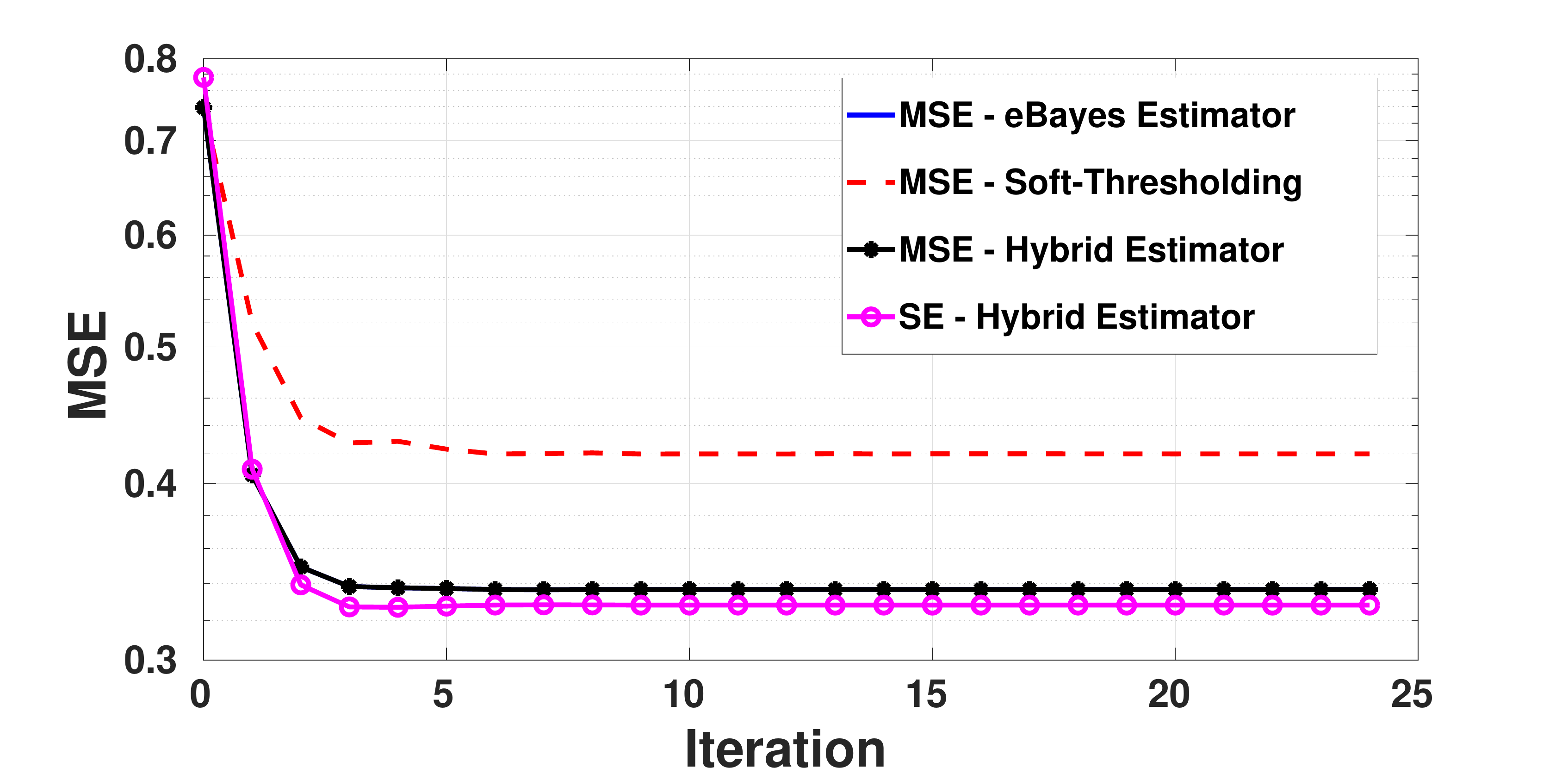}   
       }
    \end{subfloat}
  \quad
    \begin{subfloat}[\label{fig10}]{
       \includegraphics[trim= 0.6in 0 1.22in 0, clip=true,width=2.5in,height=2.2in]{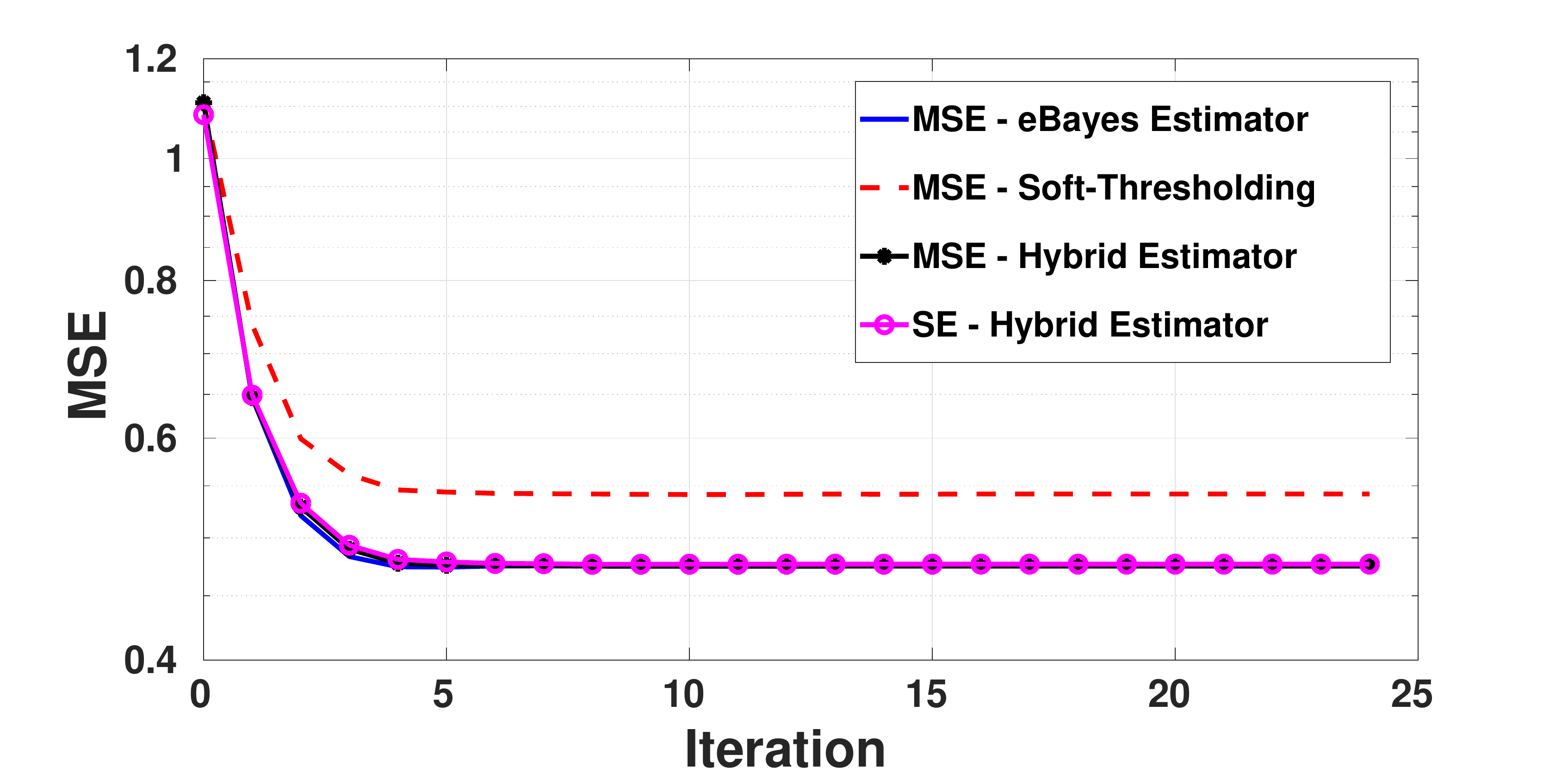}
       }
    \end{subfloat} 
    \quad
    \begin{subfloat}[\label{fig11}]{
       \includegraphics[trim= 0.21in 0 1.22in 0, clip=true,width=2.5in,height=2.2in]{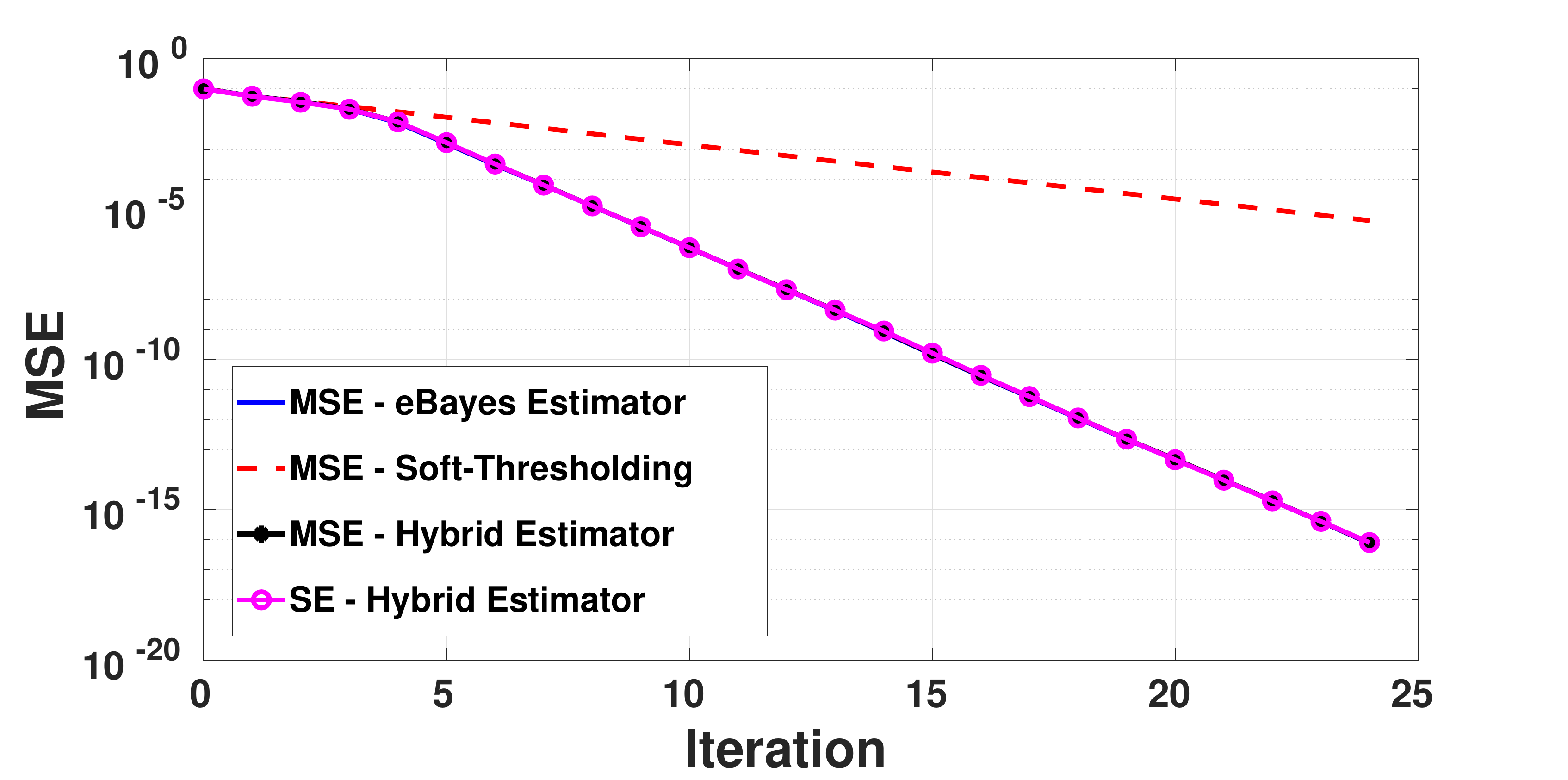}
       }
    \end{subfloat} 
    \quad
    \begin{subfloat}[\label{fig12}]{
       \includegraphics[trim= 0.4in 0 1.22in 0, clip=true,width=2.5in,height=2.2in]{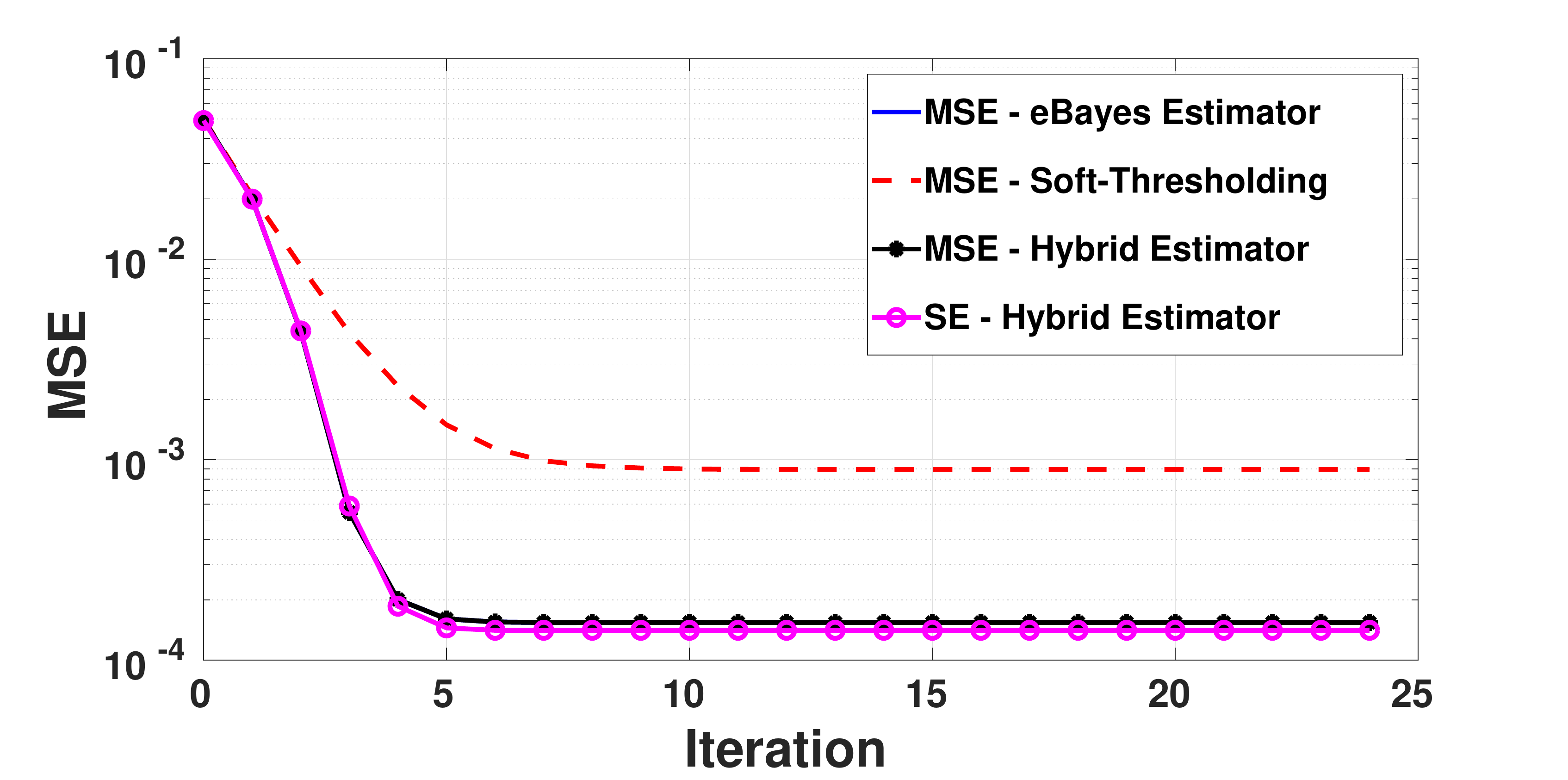}
       }
    \end{subfloat}   
    \caption{\small Plots of the mean squared error $\norm{\bst_t-\bst}^2/n$ as a function of the iteration number $t$ for the following cases, with $n=10,000$: a) $\delta = 0.65$, $\eta = 0.13$, $\sigma = 1$, the non-zero entries of $\bst$ are drawn from $\mathcal{N}(0,5)$. b) $\delta = 0.65$, $\eta = 0.13$, $\sigma = 1$, the non-zero entries are drawn from the uniform distribution between $[-5,5]$. c) $\delta = 0.5$, $\eta = 0.1$, $\sigma = 0$, the non-zero entries are drawn from the Rademacher distribution. d) $\delta = 0.5$, $\eta = 0.05$, $\sigma = 0.05$, the non-zero entries are drawn from the Rademacher distribution. The state evolution (SE) prediction of the MSE for the hybrid estimator is also shown in the plots.}
    \label{fig_9to12}
\end{figure*}

The simulation plots in Fig. \ref{fig_9to12} show the performances of the three estimators (soft-thresholding, the eBayes estimator, and the hybrid estimator) when used in the AMP algorithm. Throughout, we fix $n = 10000$ but consider various values of the undersampling ratio $\delta = m/n$, the sparsity factor $\eta = \norm{\bst}_0/n$, the noise variance $\sigma^2$, and the non-zero values of $\bst$. We choose such a large $n$ because the claim  that $\mbf{A}^T\mathbf{z}_t + \bst^t \stackrel{\text{d}}{=} \bst + \tau_t \mbf{Z}$ in every iteration $t$ holds in the large system limit. The measurement matrix $ \mathbf{A}$ is chosen with its entries i.i.d. $\sim \mathcal{N}(0,1/m)$, and the sparsity factor $\eta$ is assumed to be unknown. 

In each step of the algorithm, a suitable threshold $\lambda_t^*$ (for soft-thresholding), and a suitable sparsity parameter $\e_t^*$ (for the eBayes estimator) are chosen as described in \eqref{eq_opt_th} and \eqref{eq_opt_eps} with the only difference being that the risk estimates are now based on $\norm{\mathbf{z}_t}^2/m$ and not on SURE. To be precise, the updates in iteration $t$ for each case are generated as follows:
\begin{enumerate}
\item \emph{Soft-thresholding}:  Let $\mathcal{S} \vcentcolon=  \{ 0.1j, j \in [\lceil 10\sqrt{2\log n}\rceil] \}$. Then,  for each $\lambda \in \mathcal{S}$, compute:
\begin{align*}
\bst_{t}(\lambda) &=  \hat{\bst}_{ST}\left(\mathbf{A}^T \mathbf{z}_{t-1} + \bst_{t-1};\lambda \widehat{\tau}_{t-1} \right), \ \  \text{ where } 
\widehat{\tau}_{t-1} = \norm{ \mathbf{z}_{t-1}}/\sqrt{m}, \\
   \mathbf{z}_t(\lambda) &= \by - \mathbf{A}\bst_t(\lambda) + \frac{1}{\delta}\mathbf{z}_{t-1}\left\langle f_{t}^\prime\left( \mathbf{A}^T \mathbf{z}_{t-1} + \bst_{t-1};\lambda  \widehat{\tau}_{t-1} \right)\right\rangle.
\end{align*} 
Then choose $\lambda_t^*  = \argmin_{\lambda \in \mathcal{S}} \, \norm{\mathbf{z}_t(\lambda)}^2/m$, and generate the updated estimates 
\begin{align*}
 \bst_t & = \bst_{t}(\lambda_t^*),  ~~~~ \mathbf{z}_t = \mathbf{z}_t(\lambda_t^*).
\end{align*}
\item \emph{eBayes}: Let $\mathcal{D} \vcentcolon=  \{ 0.02j, j \in [ 50 ] \}$. Then, for each $\e \in \mathcal{D}$, compute:
\begin{align*}
\bst_{t}(\e) &= \hat{\bst}_{EB}\left(\mathbf{A}^T \mathbf{z}_{t-1} + \bst_{t-1};\e\right)  \   \text{ where } \hat{\bst}_{EB} \text{ is  modified for noise level } 
\widehat{\tau}_{t-1} = \frac{\norm{ \mathbf{z}_{t-1}}}{\sqrt{m}}, \\
   \mathbf{z}_t(\e) &= \by - \mathbf{A}\bst_t(\e) + \frac{1}{\delta}\mathbf{z}_{t-1}\left\langle f_{t}^\prime\left( \mathbf{A}^T \mathbf{z}_{t-1} + \bst_{t-1};\e \right)\right\rangle.
\end{align*} 
Then choose $ \e_t^*  = \argmin_{\e \in \mathcal{D}} \ \norm{\mathbf{z}_t(\e)}^2/m$, and generate the updated estimates
\begin{align*}
 \bst_t & = \bst_{t}(\e_t^*),  ~~~~ \mathbf{z}_t = \mathbf{z}_t(\e_t^*). 
 \end{align*}
\item \emph{Hybrid estimator}: In iteration $t$, $\bst_t$ is set to either $\bst_{t}(\lambda_t^*)$  or $\bst_{t}(\epsilon_t^*)$ depending on which of $\norm{\mathbf{z}_t(\lambda_t^*)}^2$ and $\norm{\mathbf{z}_t(\e_t^*)}^2$ is smaller. 
\end{enumerate}

The plots in Fig. \ref{fig_9to12} show the progression of the  mean squared error (MSE) $\norm{\bst_t - \bst}^2/n$  with the AMP iteration number $t$ for the three estimators when applied in the AMP algorithm for compressed sensing. We have also plotted the state evolution (SE) prediction of the MSE for the hybrid estimator (labelled ``SE - Hybrid Estimator'' on the plots), which for iteration $t$ is estimated as 
\begin{align*}
 \textrm{MSE}_{SE}(t) = \delta\left(\widehat{\tau}^2_{t} - \sigma^2\right)_+, ~~~~ \text{ where }\widehat{\tau}^2_{t} = \frac{\min\left( \norm{\mathbf{z}_t(\lambda_t^*)}^2, \norm{\mathbf{z}_t(\e_t^*)}^2\right)}{m}.
\end{align*}

\noindent   It can be inferred that the eBayes estimator provides a strong alternative to soft-thresholding in the AMP framework.

\begin{rem}
It was observed in numerical experiments that the optimal values $\lambda_t^*$ and $\epsilon_t^*$ do not vary much with the iteration index $t$. So, to reduce the computational load, one can compute $\lambda_t^*$ and $\epsilon_t^*$ in the first iteration (or the first few) alone, and then retain them for the rest of the steps.
\end{rem}

\begin{rem}
 As evident in Fig. \ref{fig_9to12}, the distinction between the MSEs for soft-thresholding and the eBayes estimator becomes clearer after the first few iterations. It has been observed in the experiments that for large values of $n$, the hybrid estimator picks the better estimator (eBayes estimator in the cases under consideration) with very high probability after around ten iterations. So, to further reduce the computational load, after a certain number of iterations (e.g., ten), one can continue with the most recent choice for the hybrid estimator. 
 \end{rem}
 
\begin{note}
The idea of using SURE to tune the parameters of the AMP denoising function $f_t$ has been previously  used  in \cite{MAR13,mousavi2018consistent,GD14}. In \cite{MAR13,mousavi2018consistent}, the authors tune the parameter of the soft-thresholding  denoiser by using a gradient descent based algorithm to optimize an objective function defined via SURE.  It is shown that the parameter estimates produced by this method converge to the asymptotically optimal values as the dimension grows. The paper \cite{GD14} uses SURE to tune the parameters of AMP denoising functions chosen from certain parametric kernel families, but does not provide theoretical guarantees on the performance of the proposed approach.

Beyond the context of AMP, SURE has been used to design denoising functions for images in \cite{FTM07, Blu_Lu}. In particular, \cite{FTM07} proposes   a hybrid estimator which is a mixture of derivatives of Gaussians while \cite{Blu_Lu} proposes an estimator that is a  mixture of more general exponential functions. 

\end{note}


\section{Proofs} \label{sec_proofs}

\subsection{Mathematical Preliminaries}
We list some lemmas that are used in the proofs of the theorems.

\begin{lemma}\label{lem_1a}

(a) (Hoeffding's lemma) \cite[Lemma 2.2]{boucheron2}: Let $X$ be a  random variable such that $\ex X = 0$ and $a \leq X \leq b$ almost surely. Then, for all $s \in \mathbb{R}$,
 $\ex \left[e^{s X}\right]\leq \exp\left({\frac {s ^{2}(b-a)^{2}}{8}}\right)$. 
Hence, $X$ is sub-Gaussian with variance factor $(b-a)^2/4$, and for any $t > 0$,
\begin{align*}
 \prb \left( \vert X \vert \geq t \right) \leq 2e^{-\frac{2t^2}{(b-a)^2}}.
\end{align*}

(b) (Hoeffding's inequality) \cite[Theorem 2.8]{boucheron2}  Let $X_1, \ldots, X_n$ be independent  random variables such that $X_i \in [a_i,b_i]$ almost surely for $i \in [n]$. Then for any $t > 0$,
\begin{align*}
 \prb \left( \abs{\sum_{i=1}^n X_i - \ex X_i} \geq t \right) \leq 2e^{-\frac{2t^2}{ \sum_i (b_i-a_i)^2}}.
\end{align*}
\end{lemma}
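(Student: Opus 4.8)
The plan is to prove part (a) first, since part (b) follows from it by the standard tensorization argument for the moment generating function. For (a), the target bound is $\ex[e^{sX}] \le \exp\!\big(s^2(b-a)^2/8\big)$ for a zero-mean $X$ supported on $[a,b]$, and I would obtain it from convexity of $x \mapsto e^{sx}$. Writing each $x \in [a,b]$ as the convex combination $x = \frac{b-x}{b-a}\,a + \frac{x-a}{b-a}\,b$ gives $e^{sx} \le \frac{b-x}{b-a}e^{sa} + \frac{x-a}{b-a}e^{sb}$; taking expectations and using $\ex X = 0$ yields $\ex[e^{sX}] \le \frac{b}{b-a}e^{sa} - \frac{a}{b-a}e^{sb}$. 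With the substitution $p = -a/(b-a) \in [0,1]$ and $u = s(b-a)$, the right-hand side equals $e^{\varphi(u)}$ where $\varphi(u) = -pu + \log(1-p+pe^u)$. A direct computation gives $\varphi(0) = \varphi'(0) = 0$ and $\varphi''(u) = g(u)\big(1-g(u)\big)$ with $g(u) = \frac{pe^u}{1-p+pe^u} \in (0,1)$, so $\varphi''(u) \le \tfrac14$. Taylor's theorem then gives $\varphi(u) \le u^2/8$, i.e. $\ex[e^{sX}] \le \exp\!\big(s^2(b-a)^2/8\big)$, which is exactly the stated sub-Gaussianity with variance factor $(b-a)^2/4$.

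For the tail bound in (a), I would apply the Chernoff method: for $t > 0$ and $s > 0$, Markov's inequality gives $\prb(X \ge t) \le e^{-st}\ex[e^{sX}] \le \exp\!\big(-st + s^2(b-a)^2/8\big)$, and optimizing over $s$ (the minimizer is $s = 4t/(b-a)^2$) yields $\prb(X \ge t) \le \exp\!\big(-2t^2/(b-a)^2\big)$. Since $-X$ is also zero-mean and supported on an interval of width $b-a$, the same argument bounds $\prb(X \le -t)$; a union bound then produces $\prb(|X| \ge t) \le 2\exp\!\big(-2t^2/(b-a)^2\big)$.

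For part (b), set $S = \sum_{i=1}^n (X_i - \ex X_i)$. By independence, $\ex[e^{sS}] = \prod_{i=1}^n \ex\big[e^{s(X_i - \ex X_i)}\big]$. Each centered variable $X_i - \ex X_i$ has mean zero and lies in an interval of width $b_i - a_i$, so part (a) gives $\ex\big[e^{s(X_i - \ex X_i)}\big] \le \exp\!\big(s^2(b_i-a_i)^2/8\big)$, hence $\ex[e^{sS}] \le \exp\!\big(\tfrac{s^2}{8}\sum_{i=1}^n(b_i-a_i)^2\big)$. The Chernoff bound then gives $\prb(S \ge t) \le \exp\!\big(-st + \tfrac{s^2}{8}\sum_i(b_i-a_i)^2\big)$, and optimizing over $s$ gives $\prb(S \ge t) \le \exp\!\big(-2t^2/\sum_i(b_i-a_i)^2\big)$; applying this to $-S$ as well and taking a union bound finishes the proof.

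There is no genuine obstacle here: the whole argument is the classical Chernoff-plus-tensorization scheme, and the only step requiring any computation — the estimate $\varphi''(u) \le \tfrac14$ in part (a) — reduces, after the substitution above, to the elementary inequality $q(1-q) \le \tfrac14$ for $q \in [0,1]$. The one point worth stating carefully is that centering a bounded random variable preserves the width $b_i - a_i$ of its support interval, which is what lets part (a) be applied verbatim to the summands in part (b).
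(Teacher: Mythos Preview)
Your proof is correct and is the standard classical argument (convexity to bound the moment generating function, the change of variables $p = -a/(b-a)$, $u = s(b-a)$, the bound $\varphi''(u) \le 1/4$, then Chernoff and tensorization). The paper itself does not prove this lemma at all: it simply cites it from the reference, so there is nothing to compare against beyond noting that what you have written is essentially the proof one finds in that reference.
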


\begin{lemma}\label{lem_chung1} \cite[Thm. 8,9]{chung}
 Suppose $X_i$ are independent random variables satisfying $X_i \geq 0$, $\mathbb{E}[X_i^2] < \infty$, $\forall i \in [n]$. Let $X = \sum_{i=1}^n X_i$. Then, we have for any $ t > 0$, 
 \begin{equation*}
  \prb(X - \mathbb{E}[X] \leq -t) \leq e^{-\frac{t^2}{2\sum_{i=1}^n \mathbb{E}[X_i^2]} } .
 \end{equation*}
 On the other hand, if  $X_i \leq 0$ with $\mathbb{E}[X_i^2]< \infty$, $\forall i \in [n]$, then, we have for any $ t > 0$, 
 \begin{equation*}
  \prb(X - \mathbb{E}[X] \geq t) \leq e^{-\frac{t^2}{2\sum_{i=1}^n \mathbb{E}[X_i^2]} } .
 \end{equation*}
\end{lemma}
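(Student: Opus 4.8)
The plan is to prove the first (lower-tail) inequality by a Chernoff argument tailored to nonnegative summands, and then deduce the second one from it by a sign flip. For any $s > 0$, the exponential Markov inequality gives $\prb(X - \ex X \leq -t) = \prb\big(e^{-s(X-\ex X)} \geq e^{st}\big) \leq e^{-st}\,\ex\big[e^{-s(X-\ex X)}\big]$, and independence factorizes the transform as $\ex[e^{-s(X-\ex X)}] = \prod_{i=1}^n e^{s\ex X_i}\,\ex[e^{-sX_i}]$. So the whole argument reduces to controlling each factor $\ex[e^{-sX_i}]$ using only the second moment, which is the quantity assumed finite.

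The key per-term step --- and the place where nonnegativity is essential --- is the elementary inequality $e^{-u} \leq 1 - u + \tfrac{u^2}{2}$, valid for all $u \geq 0$. I would verify this by setting $\varphi(u) = 1 - u + \tfrac{u^2}{2} - e^{-u}$ and noting $\varphi(0) = \varphi'(0) = 0$ with $\varphi''(u) = 1 - e^{-u} \geq 0$ for $u \geq 0$, so $\varphi \geq 0$ on $[0,\infty)$. Applying this with $u = sX_i \geq 0$ (here both $s > 0$ and $X_i \geq 0$ are used) and taking expectations yields $\ex[e^{-sX_i}] \leq 1 - s\ex X_i + \tfrac{s^2}{2}\ex[X_i^2]$. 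Combining this with the bound $1 + y \leq e^y$ then gives the clean estimate $e^{s\ex X_i}\,\ex[e^{-sX_i}] \leq \exp\big(\tfrac{s^2}{2}\ex[X_i^2]\big)$, in which the linear term in $\ex X_i$ cancels exactly.

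Multiplying over $i$, I obtain $\prb(X - \ex X \leq -t) \leq \exp\!\big(-st + \tfrac{s^2}{2}\sum_{i=1}^n \ex[X_i^2]\big)$. Optimizing the exponent over $s > 0$ --- writing $V := \sum_{i=1}^n \ex[X_i^2]$, the minimizer is $s = t/V > 0$ --- produces the stated bound $\exp(-t^2/(2V))$. For the second inequality I would set $Y_i := -X_i \geq 0$ and $Y := \sum_{i=1}^n Y_i$, observe that $\ex[Y_i^2] = \ex[X_i^2]$ and that $\{X - \ex X \geq t\} = \{Y - \ex Y \leq -t\}$, and then invoke the first inequality applied to the nonnegative variables $Y_i$.

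Because every step is an explicit inequality with a one-line verification, there is no serious obstacle here; the only point requiring care is the one-sided bound $e^{-u} \leq 1 - u + \tfrac{u^2}{2}$ and the observation that it holds precisely on $u \geq 0$. This is exactly what lets the argument go through under a mere finite-second-moment hypothesis rather than the boundedness required by Hoeffding's inequality (Lemma \ref{lem_1a}), and it is also the structural reason that only the \emph{one-sided} tail is controlled: for $s > 0$ the transform $\ex[e^{sX_i}]$ of the nonnegative variable need not even be finite, so the matching upper-tail statement cannot be obtained by the same route.
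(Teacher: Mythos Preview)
Your proof is correct. The paper does not actually prove this lemma: it is stated as a cited result from \cite[Thm.~8,9]{chung} and used as a black box, so there is no ``paper's own proof'' to compare against. Your Chernoff-type argument --- bounding $e^{-sX_i}$ via the one-sided Taylor inequality $e^{-u} \leq 1 - u + u^2/2$ on $u \geq 0$, absorbing the linear term with $1+y \leq e^y$, and optimizing over $s$ --- is exactly the standard route to this kind of one-sided sub-Gaussian tail bound for nonnegative summands, and is almost certainly what the cited reference does as well. Your closing remark about why only the one-sided tail is available (the Laplace transform $\ex[e^{sX_i}]$ need not exist for $s>0$) is also on point and explains why the lemma is stated asymmetrically.
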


\begin{lemma}\label{lem_gau_lipschitz}
(Gaussian concentration inequality) \cite[Thm 5.6]{boucheron2}: Let $\mathbf{x} \sim \mathcal{N}(\boldsymbol{0},\mathbf{I})$  and let $f : \mathbb{R}^n \to \mathbb{R}$ denote an $L$-Lipschitz function, i.e., $\forall \mathbf{x}_1, \mathbf{x}_2 \in \mathbb{R}^n$, $\vert f(\mathbf{x}_1) - f(\mathbf{x}_2) \vert \leq L \Vert \mathbf{x}_1 - \mathbf{x}_2 \Vert$. Then, for all $t > 0$,
\begin{align*}
\prb\left(\vert f(\mathbf{x}) - \ex f(\mathbf{x}) \vert \geq t \right) \leq 2e^{-\frac{t^2}{2L^2}}. 
\end{align*} 
\end{lemma}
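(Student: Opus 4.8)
The plan is to reduce the statement to a sub-Gaussian bound on the moment generating function (MGF) of $f(\mbf{x})$ and then apply a Chernoff argument. The sharp constant $\tfrac{t^2}{2L^2}$ is the key constraint: it forces the use of the Gaussian logarithmic Sobolev inequality rather than a cruder interpolation. (For instance, Pisier's rotation argument, writing $f(\mbf{x})-f(\mbf{y}) = \int_0^{\pi/2}\langle \nabla f(\mbf{x}\sin\theta+\mbf{y}\cos\theta), \mbf{x}\cos\theta-\mbf{y}\sin\theta\rangle\, d\theta$ for an independent copy $\mbf{y}$ and bounding via Jensen, only yields the weaker variance proxy $\tfrac{\pi^2 L^2}{4}$, hence the strictly worse tail $e^{-2t^2/(\pi^2 L^2)}$.)

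First I would reduce to the case where $f$ is smooth with $\norm{\nabla f(\mbf{x})}\leq L$ everywhere. For $\delta>0$ let $f_\delta = f*\rho_\delta$ be the mollification of $f$ by a smooth bump $\rho_\delta$; then $f_\delta\in C^\infty$, $f_\delta\to f$ pointwise, and since $\nabla f_\delta = (\nabla f)*\rho_\delta$ while $\norm{\nabla f}\leq L$ almost everywhere (Rademacher's theorem), each $f_\delta$ is again $L$-Lipschitz with $\norm{\nabla f_\delta}\leq L$. Because $\abs{f(\mbf{x})}\leq \abs{f(\mbf{0})}+L\norm{\mbf{x}}$, all exponential moments $\ex[e^{\lambda f(\mbf{x})}]$ are finite and $\ex[f_\delta]\to\ex[f]$; so it suffices to establish the tail bound for a smooth $f_\delta$ and then pass to the limit by dominated convergence.

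The core step is the MGF bound with the optimal variance proxy. For smooth $f$ with $\norm{\nabla f}\leq L$, set $H(\lambda)=\ex[e^{\lambda f(\mbf{x})}]$, which is finite and smooth in $\lambda$. The Gaussian logarithmic Sobolev inequality states that for $\mbf{x}\sim\mc{N}(\mbf{0},\mbf{I})$ and any smooth $g$, $\mathrm{Ent}(g^2):=\ex[g^2\log g^2]-\ex[g^2]\log\ex[g^2]\leq 2\,\ex[\norm{\nabla g}^2]$; I would either cite this or derive it by tensorizing the one-dimensional inequality (itself obtainable from the two-point Bernoulli inequality via the central limit theorem, or from the Bakry--\'Emery criterion applied to the Ornstein--Uhlenbeck semigroup). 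Applying it to $g=e^{\lambda f/2}$ gives $\norm{\nabla g}^2=\tfrac{\lambda^2}{4}\norm{\nabla f}^2 e^{\lambda f}\leq \tfrac{\lambda^2 L^2}{4}e^{\lambda f}$, so that $\lambda H'(\lambda)-H(\lambda)\log H(\lambda)\leq \tfrac{\lambda^2 L^2}{2}H(\lambda)$. Dividing by $\lambda^2 H(\lambda)$ shows that $K(\lambda):=\tfrac{1}{\lambda}\log H(\lambda)$ satisfies $K'(\lambda)\leq \tfrac{L^2}{2}$; since $K(\lambda)\to\ex[f]$ as $\lambda\to 0^+$, integrating yields $\log H(\lambda)\leq \lambda\,\ex[f]+\tfrac{L^2\lambda^2}{2}$, i.e.\ $\ex\!\left[e^{\lambda(f-\ex f)}\right]\leq e^{L^2\lambda^2/2}$ for all $\lambda>0$ (the Herbst argument). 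A Chernoff bound then gives $\prb(f-\ex f\geq t)\leq \inf_{\lambda>0} e^{-\lambda t+L^2\lambda^2/2}=e^{-t^2/(2L^2)}$, and applying the same estimate to the $L$-Lipschitz function $-f$ followed by a union bound produces $\prb(\abs{f-\ex f}\geq t)\leq 2e^{-t^2/(2L^2)}$, as claimed.

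The main obstacle is the logarithmic Sobolev inequality with the sharp constant $2$, since it is exactly this constant that propagates through the Herbst argument to the optimal variance proxy $L^2$; any softer device (Poincar\'e, rotation/interpolation, or martingale bounds with loose constants) gives sub-Gaussian tails but not the stated $\tfrac{1}{2L^2}$. The only other genuinely technical point is the smoothing reduction, which is routine once one records that $L$-Lipschitz functions are differentiable almost everywhere with gradient bounded by $L$ and have finite Gaussian exponential moments.
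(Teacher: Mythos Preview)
The paper does not prove this lemma; it is stated in the Mathematical Preliminaries section as a cited result from \cite[Thm~5.6]{boucheron2} and used as a black box. Your proposal is correct and is precisely the standard argument in the cited reference: the Gaussian logarithmic Sobolev inequality combined with the Herbst argument yields the sub-Gaussian MGF bound $\ex[e^{\lambda(f-\ex f)}]\leq e^{L^2\lambda^2/2}$, and the tail bound follows by Chernoff plus a union bound. Your observation that cruder routes (Pisier's rotation, Poincar\'e, etc.) would lose the sharp constant $\tfrac{1}{2L^2}$ is accurate, and the mollification reduction to smooth $f$ is the standard way to justify applying log-Sobolev to a merely Lipschitz function.
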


\begin{lemma}\label{lem1}
 Let $y_i = \theta_i + w_i$, $w_i \sim \mathcal{N}(0,1)$ and $\theta_i$ are deterministic constants, for $i=1,2,\cdots,n$. Let $f(\mathbf{y}) \vcentcolon= \frac{1}{n}\sum_{i=1}^n\frac{y_i^2}{\left(1 + ce^{-ay_i^2}\right)^p} $, where $a,c$ are positive constants and $p$ is a positive integer. Then for any $ t > 0$, we have
 \begin{align*}
 \mathbb{P}\left( f(\mathbf{y}) - \mathbb{E}\left[ f(\mathbf{y})\right]  \leq -t \right) & \leq e^{-\frac{nk_1t^2}{\sum_i \theta_i^4/n}}, \qquad
  \mathbb{P}\left(  f(\mathbf{y}) - \mathbb{E}\left[ f(\mathbf{y})\right]   \geq t \right)  \leq e^{-\frac{nk_2\min(t,t^2)}{\max(1,\Vert\bst\Vert^2/n)}},
 \end{align*}
where $k_1$ and $k_2$ are absolute positive constants.
\end{lemma}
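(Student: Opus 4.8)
The plan is to exploit that the summands $h(y_i):=y_i^2/(1+ce^{-ay_i^2})^p$ in $nf(\mathbf{y})=\sum_{i=1}^n h(y_i)$ are \emph{independent} (each $h(y_i)$ is a function of the single Gaussian $y_i=\theta_i+w_i$), so $f(\mathbf{y})$ is a normalized sum of independent variables. The two tails are treated separately, reflecting their different nature: the lower tail is genuinely sub-Gaussian because $h\geq 0$ is bounded below, whereas the upper tail is only sub-exponential because $h(y)\leq y^2$ grows quadratically, so $nf(\mathbf{y})$ inherits a $\chi^2$-type right tail.

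For the lower tail I would apply Lemma~\ref{lem_chung1} to the nonnegative independent variables $X_i:=h(y_i)$. Since $0\leq h(y)\leq y^2$ we have $\mathbb{E}[X_i^2]\leq\mathbb{E}[y_i^4]$, and a direct moment computation gives $\mathbb{E}[y_i^4]=\mathbb{E}[(\theta_i+w_i)^4]=\theta_i^4+6\theta_i^2+3$. Lemma~\ref{lem_chung1} then yields
\[
\mathbb{P}\big(f(\mathbf{y})-\mathbb{E}f(\mathbf{y})\leq -t\big)\leq\exp\!\left(-\frac{n^2t^2}{2\sum_{i=1}^n\mathbb{E}[y_i^4]}\right),
\]
and since $\sum_i\mathbb{E}[y_i^4]=\sum_i(\theta_i^4+6\theta_i^2+3)=\mathcal{O}\big(n+\sum_i\theta_i^4\big)$, this is the first asserted inequality.

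For the upper tail the summands are unbounded, so Hoeffding's inequality (Lemma~\ref{lem_1a}) cannot be applied, and the Gaussian--Lipschitz bound (Lemma~\ref{lem_gau_lipschitz}) also fails since $h$ is not globally Lipschitz. The key observation is that the gap between $h$ and $y^2$ is \emph{uniformly bounded}: by Bernoulli's inequality $(1+x)^{-p}\geq 1-px$ for $x\geq 0$,
\[
0\leq y^2-h(y)=y^2\big(1-(1+ce^{-ay^2})^{-p}\big)\leq pc\,y^2e^{-ay^2}\leq\frac{pc}{ea}=:C_1,
\]
the last step maximizing $u\mapsto ue^{-au}$ over $u=y^2\geq 0$. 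Setting $\rho(y_i):=y_i^2-h(y_i)\in[0,C_1]$ gives the decomposition $f(\mathbf{y})=\|\mathbf{y}\|^2/n-\frac{1}{n}\sum_{i=1}^n\rho(y_i)$. Now $\|\mathbf{y}\|^2$ is a noncentral chi-squared variable, so $\|\mathbf{y}\|^2/n$ obeys the concentration bound \eqref{eq4_thm1} (in the version whose constant carries the factor $\max(1,\|\bst\|^2/n)$; see also \cite{birge}), which contributes precisely the $\min(t,t^2)$ and the $\max(1,\|\bst\|^2/n)$ of the statement; and $\frac{1}{n}\sum_i\rho(y_i)$ is a normalized sum of independent variables lying in $[0,C_1/n]$, hence sub-Gaussian by Hoeffding's inequality (Lemma~\ref{lem_1a}(b)). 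A union bound over the two deviations, each at level $t/2$, then gives the claimed upper-tail inequality, the chi-squared term being dominant.

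The main obstacle is the upper tail: one must get around the quadratic growth of $h$, which disqualifies both off-the-shelf concentration tools. The resolution is the decomposition above, which isolates a pure $\chi^2$ term (whose concentration is classical and already used elsewhere in the paper) plus a uniformly bounded correction; the inequality $0\leq y^2-h(y)\leq pc/(ea)$, obtained from Bernoulli's inequality, is the linchpin. What remains is routine bookkeeping, and one checks that $k_1,k_2$ depend only on the fixed parameters $a,c,p$ (through $C_1$ and $(1+c)^{-p}$), not on $n$ or $\bst$.
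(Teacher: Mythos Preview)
Your proof is correct, and the lower-tail argument is identical to the paper's. For the upper tail you take a genuinely different route: you write $f(\mathbf{y})=\|\mathbf{y}\|^2/n-\frac{1}{n}\sum_i\rho(y_i)$ with $\rho(y)=y^2-h(y)\in[0,pc/(ea)]$, and combine the noncentral chi-squared tail \eqref{eq4_thm1} with Hoeffding's inequality for the bounded correction. The paper instead sets $g=\sqrt{f}$, shows $n\|\nabla g\|^2\leq(1+pc/e)^2$ via the same elementary bound $ay^2e^{-ay^2}\leq 1/e$, deduces sub-Gaussian concentration for $g$ from the MGF inequality $\ex[e^{s(g-\ex g)}]\leq\ex[e^{s^2\pi^2\|\nabla g\|^2/8}]$, and then transfers the bound from $g$ to $f=g^2$ using Jensen.

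Your decomposition is more elementary and makes both the $\min(t,t^2)$ shape and the factor $\max(1,\|\bst\|^2/n)$ completely transparent, since they are inherited verbatim from the chi-squared term. The paper's square-root trick has one advantage worth noting: its constant $(1+pc/e)^2$ is independent of $a$, whereas your $C_1=pc/(ea)$ blows up as $a\downarrow 0$. In the downstream use of this lemma (proof of Lemma~\ref{lem:fn}), the role of $a$ is played by quantities like $a_U/2$ that can be small when $\|\bst\|^2/n$ is small, so the $a$-free constant is convenient there; for Lemma~\ref{lem1} as stated, with $a>0$ fixed, both arguments suffice.
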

\begin{proof} 
See Appendix \ref{sec_appendix}.
\end{proof}

\begin{lemma}\label{lem1a}
  Let $y_i = \theta_i + w_i$, $w_i \sim \mathcal{N}(0,1)$, $i=1,2,\cdots,n$, and let $f(\mathbf{y}) \vcentcolon= \frac{1}{n}\sum_{i=1}^n\frac{1}{1 + ce^{-ay_i^2}} $ for any positive constants $a$ and $c$. Then, we have for any $ t > 0$,
 \begin{align}
 \mathbb{P}\left( \vert f(\mathbf{y}) - \mathbb{E}\left[ f(\mathbf{y})\right] \vert \geq t \right) & \leq 2e^{-2n(1+c^2)t^2/c^2}.
 \end{align}
\end{lemma}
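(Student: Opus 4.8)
The plan is to recognize $f(\mathbf{y})$ as an average of $n$ independent, uniformly bounded random variables and invoke Hoeffding's inequality (Lemma \ref{lem_1a}(b)); no Lipschitz or martingale machinery is needed, in contrast with the harder terms treated in Lemma \ref{lem:fn}.

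First I would pin down the range of each summand. Write $g(u) \vcentcolon= \frac{1}{1 + c e^{-a u^2}}$, so that $f(\mathbf{y}) = \frac{1}{n}\sum_{i=1}^n g(y_i)$. Since $a,c>0$, the quantity $c e^{-a u^2}$ lies in $(0,c]$ for every $u\in\mathbb{R}$, attaining its maximum $c$ at $u=0$ and decreasing to $0$ as $|u|\to\infty$. Consequently $g(u)\in\bigl[\tfrac{1}{1+c},\,1\bigr)$ for all $u$, an interval of length $\tfrac{c}{1+c}$. The key point to be careful about is that the worst case for the lower bound is $u=0$, not $u$ near some particular $\theta_i$.

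Next, since $y_i = \theta_i + w_i$ with the $w_i$ i.i.d.\ $\mathcal{N}(0,1)$ and the $\theta_i$ deterministic, the variables $g(y_1),\dots,g(y_n)$ are independent. Applying Hoeffding's inequality to the sum $\sum_{i=1}^n X_i$ with $X_i \vcentcolon= \tfrac{1}{n}g(y_i) \in \bigl[\tfrac{1}{n(1+c)},\,\tfrac{1}{n}\bigr]$, so that $b_i - a_i = \tfrac{c}{n(1+c)}$ and $\sum_{i=1}^n (b_i-a_i)^2 = \tfrac{c^2}{n(1+c)^2}$, yields
\begin{equation*}
\mathbb{P}\left(\bigl\vert f(\mathbf{y}) - \mathbb{E}[f(\mathbf{y})]\bigr\vert \geq t\right) \leq 2\exp\left(-\frac{2n(1+c)^2 t^2}{c^2}\right).
\end{equation*}
Finally, since $(1+c)^2 = 1 + 2c + c^2 \geq 1 + c^2$ for $c>0$, the exponent is at least $\tfrac{2n(1+c^2)t^2}{c^2}$, giving the stated bound. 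I expect no real obstacle here; the entire content is the boundedness observation, after which Hoeffding does all the work.
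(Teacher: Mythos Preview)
Your proposal is correct and matches the paper's approach exactly: the paper's proof is a one-line reference to Hoeffding's inequality (Lemma~\ref{lem_1a}(b)) together with the observation that each summand lies in $[\tfrac{1}{1+c},1)$. You have simply spelled out the details, including the final step $(1+c)^2 \ge 1+c^2$ that reconciles the Hoeffding constant with the form stated in the lemma.
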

\begin{proof}
 This is a straightforward application of Hoeffding's inequality (Lemma \ref{lem_1a}(b)) after noting that $ \frac{1}{1 + ce^{-ay_i^2}} \in [\frac{1}{1+c},1)$.
\end{proof}

\begin{lemma}\label{lem2}
 Let $y_i = \theta_i + w_i$, $w_i \sim \mathcal{N}(0,1)$, $i=1,2,\cdots,n$, and let 
 $f(\mathbf{y}) \vcentcolon= \frac{1}{n}\sum_{i=1}^n\frac{y_i^2e^{-a_1y_i^2}}{\left(1 + ce^{-a_2y_i^2}\right)^2} $ for any positive constants $c$, $a_1$ and $a_2$. Then, we have for any $ t > 0$,
 \begin{align}
 \mathbb{P}\left( f(\mathbf{y}) - \mathbb{E}\left[ f(\mathbf{y})\right]  \leq -t \right) & \leq e^{-\frac{nk_1t^2}{\sum \theta_i^4/n}},\\
  \mathbb{P}\left(  f(\mathbf{y}) - \mathbb{E}\left[ f(\mathbf{y})\right]   \geq t \right) & \leq e^{-\frac{nk_2\min(t,t^2)}{\sum_i  \theta_i^2/n}},
 \end{align}
where $k_1$ and $k_2$ are absolute positive constants.
\end{lemma}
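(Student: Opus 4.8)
The plan is to use that $f(\mathbf{y})=\frac1n\sum_{i=1}^n g(y_i)$ is an average of \emph{independent} terms $g(y_i):=\frac{y_i^2e^{-a_1y_i^2}}{(1+ce^{-a_2y_i^2})^2}$, each a function of the single Gaussian $y_i=\theta_i+w_i$, and to treat the two tails by the two mechanisms already used for the analogous Lemma~\ref{lem1}. Three elementary facts about $g$ drive everything: $g(y)\ge 0$; $g(y)\le y^2$ (numerator $\le y^2$, denominator $\ge 1$); and $g(y)\le y^2e^{-a_1y^2}\le \tfrac1{a_1e}$, so $g$ is bounded. I would first record the moment bound $\ex[g(y_i)^2]\le\ex[y_i^4]=\theta_i^4+6\theta_i^2+3\le C(1+\theta_i^4)$, which will control both tails.

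For the lower tail, apply Chung's inequality (Lemma~\ref{lem_chung1}) to the nonnegative independent variables $X_i:=g(y_i)$ with $X=\sum_iX_i=nf(\mathbf{y})$:
\begin{align*}
\prb\!\left(nf(\mathbf{y})-\ex[nf(\mathbf{y})]\le -nt\right) &\le \exp\!\left(-\frac{(nt)^2}{2\sum_{i=1}^n\ex[X_i^2]}\right)\\
&\le \exp\!\left(-\frac{nk_1t^2}{\max(1,\tfrac1n\sum_i\theta_i^4)}\right),
\end{align*}
which is the claimed bound (the $\max$ with $1$ is invisible in the $\tfrac1n\sum_i\theta_i^4=\Theta(1)$ regime relevant to the applications). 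This step is routine.

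For the upper tail I would run a Chernoff/Bernstein argument: since $0\le g(y_i)\le y_i^2$ and $y_i^2$ is sub-exponential with parameters governed by $1+\theta_i^2$, one expects a bound $\ex[e^{s(g(y_i)-\ex g(y_i))}]\le e^{c_0 s^2(1+\theta_i^2)}$ valid for $s$ in a fixed neighbourhood of $0$; taking the product over $i$, optimizing $s$, and separating the ranges $t\lesssim 1+\tfrac1n\norm{\bst}^2$ and $t\gtrsim 1+\tfrac1n\norm{\bst}^2$ then produces the $\min(t,t^2)$ behaviour with denominator $\max(1,\tfrac1n\sum_i\theta_i^2)$.

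The main obstacle is making that MGF estimate rigorous \emph{uniformly in $a_1$}. Unlike the function of Lemma~\ref{lem1}, whose lower bound $\tfrac{y^2}{(1+c)^p}$ keeps $\ex g(y_i)$ comparable to $\ex y_i^2$, the factor $e^{-a_1y_i^2}$ here can force $\ex g(y_i)\ll\ex y_i^2$, so one cannot simply dominate $g(y_i)$ by $y_i^2$ inside the \emph{centred} MGF; and since $g$ is neither bounded by an absolute constant nor Lipschitz, neither Hoeffding (Lemma~\ref{lem_1a}) nor the Gaussian concentration inequality (Lemma~\ref{lem_gau_lipschitz}) applies directly. I would resolve this by truncation: split according to $\{y_i^2\le\tau\}$ versus $\{y_i^2>\tau\}$ for a suitable threshold $\tau$; on the first event $g(y_i)$ behaves like the Lemma~\ref{lem1}-type quantity $\tfrac{y_i^2}{(1+ce^{-a_2y_i^2})^2}$ and the argument of Lemma~\ref{lem1} applies, while on the second the damping $e^{-a_1y_i^2}$ together with the rarity of $\{y_i^2>\tau\}$ makes the contribution exponentially small; a union bound then combines the pieces, mirroring the proof of Lemma~\ref{lem1} in Appendix~\ref{sec_appendix}. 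Combining the lower- and upper-tail estimates yields Lemma~\ref{lem2}; essentially all of the difficulty sits in this truncation/uniformity step, with the rest being bookkeeping.
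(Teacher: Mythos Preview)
Your lower-tail argument via Lemma~\ref{lem_chung1} is correct and matches the paper exactly. For the upper tail, however, the paper does \emph{not} go through a Bernstein/truncation argument on the individual summands. It simply reuses the square-root trick from the proof of Lemma~\ref{lem1}: set $g(\mathbf{y})=\sqrt{f(\mathbf{y})}$ and show that $n\norm{\nabla g(\mathbf{y})}^2$ is bounded by an absolute constant. The extra factor $e^{-a_1y_i^2}$ actually makes this \emph{easier} than in Lemma~\ref{lem1}, since terms like $(1-a_1y_i^2)^2e^{-a_1y_i^2}$ and $a_2y_i^2e^{-a_2y_i^2}$ are uniformly bounded; one checks pointwise that $(Z_i')^2\le C\,Z_i$ for $Z_i=y_i^2e^{-a_1y_i^2}/(1+ce^{-a_2y_i^2})^2$, whence $n\norm{\nabla g}^2=\tfrac{1}{4}\sum_i(Z_i')^2/\sum_jZ_j\le C/4$. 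Then the MGF bound \eqref{eq:gy_mgf} gives sub-Gaussian concentration for $g$, and the Jensen transfer to $f$ yields the $\min(t,t^2)$ form exactly as in Appendix~\ref{sec_appendix}.

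Your proposed route is not wrong, but it is working much harder than needed, and it contains an internal inconsistency: you correctly record that $g(y)\le 1/(a_1e)$, then two paragraphs later assert that Hoeffding does not apply because $g$ ``is neither bounded by an absolute constant nor Lipschitz.'' In the lemma as stated, $a_1$ is a \emph{fixed} positive constant, so $1/(a_1e)$ \emph{is} an absolute bound in the relevant sense (the constants $k_1,k_2$ may depend on $a_1,a_2,c$, just as the constants in Lemma~\ref{lem1} depend on $a,c,p$). Thus Hoeffding already gives a perfectly good upper-tail bound, and your worry about ``uniformity in $a_1$'' and the ensuing truncation machinery are unnecessary. The comparison you draw with Lemma~\ref{lem1}---that there $\ex g(y_i)$ is comparable to $\ex y_i^2$ while here it need not be---is accurate but irrelevant to the paper's argument, which is a pointwise gradient bound and never needs that comparability.
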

\begin{proof}
 The proof is along the lines of that for Lemma \ref{lem1}. Let $Z_i \vcentcolon= \frac{y_i^2 e^{-a_1y_i^2}}{n\left(1 + ce^{-a_2y_i^2}\right)^p}$.  Since $Z_i$'s
 are non-negative, the lower tail bound follows from Lemma \ref{lem_chung1}. As in  Lemma \ref{lem1}, the proof for the upper tail involves showing that $\norm{\nabla g(\by) }^2$ is bounded, where
 $g(\by) =\sqrt{f(\by)}$.
 \end{proof}
 
 \begin{lemma}\label{lem2a}
 Let $y_i = \theta_i + w_i$, where $w_i \sim \mathcal{N}(0,1)$ and $\theta_i$, $i=1,2,\cdots,n$, are deterministic. Let $f_1(\mathbf{y}) \vcentcolon= \frac{1}{n}\sum_{i=1}^n\frac{\theta_i y_i\mathsf{1}_{\{\theta_iy_i \geq 0 \}}}{1 + ce^{-ay_i^2}} $ and $f_2(\mathbf{y}) \vcentcolon= \frac{1}{n}\sum_{i=1}^n\frac{\theta_i y_i\mathsf{1}_{\{\theta_iy_i \leq 0 \}}}{1 + ce^{-ay_i^2}} $, where $a,c$ are positive constants. Then for any $ t > 0$, we have
 \begin{align} \label{eq:lem2a_st1}
 \mathbb{P}\left( f_1(\mathbf{y}) - \mathbb{E}\left[ f_1(\mathbf{y})\right]  \geq t \right) & \leq e^{-\frac{n t^2}{2 (1+c)^2\sum_i \theta_i^2/n}},\\ \label{eq:lem2a_st2}
 \mathbb{P}\left( f_2(\mathbf{y}) - \mathbb{E}\left[ f_2(\mathbf{y})\right]  \leq -t \right) & \leq e^{-\frac{n t^2}{2 (1+c)^2\sum_i \theta_i^2/n}},
 \end{align}
where $k$ is an absolute positive constant.
\end{lemma}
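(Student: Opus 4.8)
The plan is to prove both tail bounds via the Gaussian concentration inequality for Lipschitz functions (Lemma~\ref{lem_gau_lipschitz}), used in its one-sided form. Writing $\theta_i y\,\mathsf{1}_{\{\theta_i y \ge 0\}} = (\theta_i y)_+$ and $\theta_i y\,\mathsf{1}_{\{\theta_i y \le 0\}} = -(-\theta_i y)_+$, we have $f_1(\by) = \frac{1}{n}\sum_{i=1}^n \phi_i(y_i)$ with $\phi_i(y) \vcentcolon= (\theta_i y)_+/(1+ce^{-ay^2})$, and $-f_2(\by) = \frac{1}{n}\sum_{i=1}^n \psi_i(y_i)$ with $\psi_i(y) \vcentcolon= (-\theta_i y)_+/(1+ce^{-ay^2})$. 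Since $\psi_i$ is just $\phi_i$ with the sign of $\theta_i$ flipped, it suffices to analyze $f_1$ and transfer the resulting bound to $-f_2$.

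The key step is to show $\phi_i$ is $(1+c)\abs{\theta_i}$-Lipschitz. On the region $\theta_i y > 0$, writing $D(y) = 1+ce^{-ay^2}$ and differentiating gives
\[
\phi_i'(y) = \frac{\theta_i}{D(y)} + \frac{2ac\,\theta_i y^2 e^{-ay^2}}{D(y)^2},
\]
so, using $D(y)\ge 1$ and $\sup_{y\in\mathbb{R}} y^2 e^{-ay^2} = 1/(ea)$,
\[
\abs{\phi_i'(y)} \le \abs{\theta_i} + 2ac\abs{\theta_i}\cdot\frac{1}{ea} = \abs{\theta_i}\Bigl(1+\tfrac{2c}{e}\Bigr) \le (1+c)\abs{\theta_i},
\]
the last inequality because $2/e < 1$. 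On $\theta_i y < 0$ we have $\phi_i \equiv 0$, and $\phi_i$ is continuous at the join $y = 0$, so $\phi_i$ is piecewise $C^1$ with a.e.\ derivative bounded by $(1+c)\abs{\theta_i}$, hence globally $(1+c)\abs{\theta_i}$-Lipschitz. Therefore, viewing $f_1(\boldsymbol{\theta}+\bw)$ as a function of $\bw$,
\[
\norm{\nabla_{\bw}\, f_1(\boldsymbol{\theta}+\bw)}^2 = \frac{1}{n^2}\sum_{i=1}^n \phi_i'(\theta_i+w_i)^2 \le \frac{(1+c)^2}{n^2}\sum_{i=1}^n \theta_i^2 = \frac{(1+c)^2\norm{\boldsymbol{\theta}}^2}{n^2}
\]
almost everywhere, so $\bw \mapsto f_1(\boldsymbol{\theta}+\bw)$ is $L$-Lipschitz with $L = (1+c)\norm{\boldsymbol{\theta}}/n$.

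Applying the one-sided Gaussian concentration bound $\prb(g-\ex g \ge t) \le e^{-t^2/(2L^2)}$ (which follows from the argument underlying Lemma~\ref{lem_gau_lipschitz}, before the union bound over $\pm g$) to $g = f_1(\boldsymbol{\theta}+\bw)$ yields
\[
\prb\bigl(f_1(\by) - \ex[f_1(\by)] \ge t\bigr) \le \exp\Bigl(-\frac{t^2}{2L^2}\Bigr) = \exp\Bigl(-\frac{n t^2}{2(1+c)^2\sum_{i} \theta_i^2/n}\Bigr),
\]
using $\norm{\boldsymbol{\theta}}^2 = n\sum_i \theta_i^2/n$; this is \eqref{eq:lem2a_st1}. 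The identical argument applied to $-f_2(\boldsymbol{\theta}+\bw) = \frac{1}{n}\sum_i \psi_i(\theta_i+w_i)$, which is $L$-Lipschitz with the same $L$, gives $\prb\bigl(-f_2(\by) - \ex[-f_2(\by)] \ge t\bigr) \le \exp\bigl(-n t^2/(2(1+c)^2\sum_i\theta_i^2/n)\bigr)$, which is \eqref{eq:lem2a_st2}. I expect the only delicate point to be the derivative estimate: one must check that the linearly growing factor $\theta_i y$ in the numerator is controlled by the decaying exponential (hence the need for boundedness of $y^2 e^{-ay^2}$) and that the kink introduced by the indicator at $y=0$ does not break global Lipschitzness — which it does not, because $(\theta_i y)_+$ vanishes precisely at the join, keeping $\phi_i$ continuous there.
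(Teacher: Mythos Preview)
Your proof is correct and follows essentially the same approach as the paper: both establish that $f_1$ is $L$-Lipschitz with $L = (1+c)\norm{\bst}/n$ via the same derivative bound on the smooth region, then apply Gaussian concentration (Lemma~\ref{lem_gau_lipschitz}). The only cosmetic difference is in handling the indicator --- the paper factors $f_1(\by) = f(h(\by))$ through a $1$-Lipschitz projection $h$ onto the orthant $\{\theta_i y_i \ge 0\}$ and applies the mean value theorem to the smooth $f$, whereas you work directly with the piecewise-$C^1$ summands $\phi_i$ and their continuous join at $y=0$.
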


\begin{proof}
 We let $Z_i \vcentcolon= \frac{\theta_i y_i}{n\left(1 + ce^{-ay_i^2}\right)}$ and $f(\mathbf{y}) = \sum_{i=1}^n Z_i$. Now,
 \begin{align}\nonumber
  \norm{\nabla f(\by) }^2 &= \sum_{i=1}^n \left(\frac{\partial f(\by)}{\partial y_i} \right)^2 = \sum_{i=1}^n \frac{\theta_i^2}{n^2}\left[ \frac{1}{1 + ce^{-ay_i^2}} + \frac{acy_i^2e^{-ay_i^2}}{(1 + ce^{-ay_i^2})^2} \right]^2\\ \label{eq:lem2a_1}
  & \leq \sum_{i=1}^n \frac{\theta_i^2}{n^2}\left[ \frac{1}{1 + ce^{-ay_i^2}} + \frac{c}{(1 + ce^{-ay_i^2})^2} \right]^2 \leq \frac{(1+c)^2\norm{\bst}^2}{n^2}.
 \end{align}
Now, let $\mathcal{R} \vcentcolon= \{ \by \in \mathbb{R}^n ~\vert ~ \theta_i y_i \geq 0 ,\forall i \in [n]\}$. Let $h: \mathbb{R} \to \mathbb{R}$ be the function defined as 
\begin{align*}
 h(y_i) = \left\{ \begin{array}{cc}
                  y_i & \textrm{if }\theta_i y_i \geq 0,\\
                  0 & \textrm{otherwise}.\\
                 \end{array}\right.
\end{align*}
For any $\by \in \mathbb{R}^n$, let $h(\by ) \in \mathbb{R}^n$ be the vector obtained by  applying $h(\cdot )$ component-wise on the elements of $\by$. Since $f_1(\by) = f_1(h(\by))$, we have for any $\by_1,\by_2 \in \mathbb{R}^n$, 
\begin{align*}
 \vert f_1(\by_1) - f_1(\by_2) \vert = \vert f_1(h(\by_1)) - f_1(h(\by_2)) \vert & \overset{(a)}{=} \langle \nabla f(\mathbf{c}),h(\by_1) - h(\by_2) \rangle \\
 &\overset{(b)}{\leq} \norm{\nabla f(\mathbf{c})} \Vert h(\by_1) - h(\by_2) \Vert \leq  \norm{\nabla f(\mathbf{c})} \Vert \by_1 -\by_2 \Vert\\
 & \leq L_n \Vert \by_1 -\by_2 \Vert
\end{align*}
where step $(a)$ is due to the mean value theorem with $\mathbf{c} = h(\by_1) + c ( h(\by_2) - h(\by_1) )$ for some $c \in [0,1]$, step $(b)$ is due to the Cauchy-Schwarz inequality, and $L_n \vcentcolon= \sup_{\by \in \mathcal{R}} \{ \norm{\nabla f(\by)} \}  \leq (1+c)\norm{\bst}/n$ from \eqref{eq:lem2a_1}. Therefore, using Lemma \ref{lem_gau_lipschitz}, we get \eqref{eq:lem2a_st1} (note that we do not require the lower tail inequality for $f_1(\by)$ in this paper). Proceeding in the same manner, it is straightforward to obtain the proof of \eqref{eq:lem2a_st2}.
\end{proof}

 \begin{lemma} \label{lem3}
 Let $\{X_n(\bst),\bst \in \mathbb{R}^n\}_{n \geq 1}$ be a sequence of random variables such that for any $t > 0$,
 \begin{equation*}
  \mathbb{P}(\vert X_n(\bst) \vert \geq t) \leq Ke^{-nk\min(t,t^2)},
 \end{equation*}
where $K$ and $k$ are positive constants. Then 
\[ \ex \vert X_n(\bst) \vert \leq  \frac{c_1}{\sqrt{n}}\left(1 +  \frac{c_2}{\sqrt{n}} \right), \]
where $c_1 = \frac{K}{2}\sqrt{\frac{\pi}{k}}$, $c_2 = \frac{2}{\sqrt{k\pi}}$. 
\end{lemma}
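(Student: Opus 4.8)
The plan is to use the layer-cake identity $\ex \vert X_n(\bst) \vert = \int_0^\infty \prb(\vert X_n(\bst) \vert \geq t)\, dt$, which holds for any nonnegative random variable, and then substitute the given tail bound $\prb(\vert X_n(\bst) \vert \geq t) \leq K e^{-nk\min(t,t^2)}$ into the integrand. The only wrinkle is the $\min(t,t^2)$ in the exponent: since $\min(t,t^2) = t^2$ for $t \in [0,1]$ and $\min(t,t^2) = t$ for $t \geq 1$, I would split the integral at $t = 1$ and handle the two ranges separately.

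On $[0,1]$ I would bound $\int_0^1 K e^{-nkt^2}\, dt \leq \int_0^\infty K e^{-nkt^2}\, dt = \tfrac{K}{2}\sqrt{\tfrac{\pi}{nk}} = \tfrac{c_1}{\sqrt{n}}$, using the standard Gaussian integral $\int_0^\infty e^{-\alpha t^2}\,dt = \tfrac12\sqrt{\pi/\alpha}$ with $\alpha = nk$, and recalling $c_1 = \tfrac{K}{2}\sqrt{\pi/k}$. On $[1,\infty)$ I would compute $\int_1^\infty K e^{-nkt}\, dt = \tfrac{K e^{-nk}}{nk} \leq \tfrac{K}{nk}$ (using $e^{-nk}\leq 1$), and then check the arithmetic identity $c_1 c_2 = \tfrac{K}{2}\sqrt{\pi/k}\cdot\tfrac{2}{\sqrt{k\pi}} = \tfrac{K}{k}$, so that this second piece is at most $\tfrac{c_1 c_2}{n}$. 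Adding the two bounds gives $\ex \vert X_n(\bst) \vert \leq \tfrac{c_1}{\sqrt{n}} + \tfrac{c_1 c_2}{n} = \tfrac{c_1}{\sqrt{n}}\bigl(1 + \tfrac{c_2}{\sqrt{n}}\bigr)$, as claimed.

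There is no genuine obstacle in this argument; it is a routine tail-integration estimate. The only points requiring minor care are choosing the split point $t=1$ correctly so that in each range one can replace $\min(t,t^2)$ by the dominant monomial, and verifying that the constants emerge exactly as $c_1 = \tfrac{K}{2}\sqrt{\pi/k}$ and $c_2 = \tfrac{2}{\sqrt{k\pi}}$ rather than merely up to an unspecified factor.
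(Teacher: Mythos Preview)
Your proposal is correct and matches the paper's proof essentially line for line: the paper also uses the layer-cake formula, splits at $t=1$, extends each piece to an integral over $[0,\infty)$, and evaluates the Gaussian and exponential integrals to obtain exactly $\frac{c_1}{\sqrt{n}}\bigl(1+\frac{c_2}{\sqrt{n}}\bigr)$. The only cosmetic difference is that you compute $\int_1^\infty Ke^{-nkt}\,dt = \frac{Ke^{-nk}}{nk}$ exactly and then bound $e^{-nk}\leq 1$, whereas the paper extends the lower limit to $0$ first; the resulting bound is identical.
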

\begin{proof}
We have
\begin{align*}
 \mathbb{E}\left[ \vert X_n \vert \right]& = \int_{0}^\infty \mathbb{P}\left( \vert X_n \vert > t\right)dt {~\leq} \int_{0}^1 Ke^{-nkt^2}dt + \int_{1}^\infty Ke^{-nkt}dt\\
 & < \int_{0}^\infty Ke^{-nkt^2}dt + \int_{0}^\infty Ke^{-nkt}dt\\
 & = \frac{K}{\sqrt{nk}}\int_{0}^{\infty}e^{-x^2}dx + \frac{K}{nk}\int_{0}^\infty e^{-x}dx  = \frac{c_1}{\sqrt{n}}\left(1 +  \frac{c_2}{\sqrt{n}} \right).
\end{align*} 
\end{proof}

\begin{lemma}\label{lem_plipschitz}
(Concentration for sum of pseudo-Lipschitz function of sub-Gaussians \cite[Lemma A.11]{cindy_ramji}). Let $f : \mathbb{R} \to \mathbb{R}$ be a pseudo-Lipschitz function \cite{BayMont11} of order 2 with pseudo-Lipschitz constant $L$, i.e., for any $x,y \in \mathbb{R}$,
$\vert f(x) - f(y) \vert \leq L(1 + \vert x \vert + \vert y \vert) \vert x-y \vert$. Let $\mathbf{z} \in \mathbb{R}^n$ be a random vector with entries i.i.d. sub-Gaussian random variables with variance factor $\nu$. Then, for any $t > 0$,
\begin{align*}
 \prb \left( \frac{1}{n}\left \vert \sum_{i=1}^n f(z_i) -  \ex[f(z_i)] \right \vert \geq t \right) \leq 2e^{-nk \min(t,t^2)}
\end{align*}
where $k$ is some absolute constant (inversely proportional to $L^2$).
\end{lemma}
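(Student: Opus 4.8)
The plan is to recognize that $f(z_i) - \ex[f(z_i)]$ is a centered i.i.d.\ \emph{sub-exponential} sequence and then invoke a Bernstein-type argument. Applying the pseudo-Lipschitz hypothesis with $y=0$ gives $|f(x)| \le |f(0)| + L(1+|x|)|x| = |f(0)| + L|x| + L|x|^2$, so $f(z_i)$ is dominated by a constant plus linear and quadratic terms in $|z_i|$. Since $z_i$ is sub-Gaussian with variance factor $\nu$, the variable $|z_i|$ has sub-Gaussian tails and $z_i^2$ has sub-exponential tails; hence $f(z_i)$, and therefore the centered variable $X_i \vcentcolon= f(z_i) - \ex[f(z_i)]$, is sub-exponential with a parameter $\sigma$ bounded by an absolute constant times $L(\sqrt{\nu} + \nu)$. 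Concretely one shows $\ex[e^{sX_i}] \le \exp(C\sigma^2 s^2)$ for all $|s| \le 1/(C\sigma)$, using $\ex X_i = 0$ together with the Gaussian-type moment bounds $\ex|z_i|^p \le (cp\nu)^{p/2}$ to control the moments of $L|z_i| + Lz_i^2$.

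Next I would apply the Chernoff bound to $S_n \vcentcolon= \sum_{i=1}^n X_i$. Independence gives $\ex[e^{sS_n}] \le \exp(Cn\sigma^2 s^2)$ for $|s| \le 1/(C\sigma)$, so $\prb(S_n \ge nt) \le \exp(-snt + Cn\sigma^2 s^2)$. Optimizing $s$ over $[0, 1/(C\sigma)]$: for $t \le 2C\sigma$ the unconstrained minimizer $s = t/(2C\sigma^2)$ is feasible and yields exponent $-nt^2/(4C\sigma^2)$; for $t > 2C\sigma$ one takes $s = 1/(C\sigma)$ and obtains an exponent at most $-nt/(2C\sigma)$. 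Combining the two regimes gives $\prb(S_n \ge nt) \le \exp(-nk'\min(t,t^2))$ for an absolute constant $k'$ times $\min(1/\sigma, 1/\sigma^2)$; applying the same bound to $-S_n$ and taking a union bound produces the factor $2$ and the claimed two-sided inequality, with $k$ inversely proportional to $\sigma^2$, hence inversely proportional to $L^2$ (for fixed $\nu$).

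The one slightly delicate point — and the step I would be most careful about — is the uniform-in-$f$ control of $\sigma$ purely in terms of $L$ and $\nu$: in particular, making explicit that the additive term $|f(0)|$ cancels upon centering and does not enter the tail, and that the cross term $L|z_i|$ contributes only to the sub-Gaussian (quadratic-exponent) part while $Lz_i^2$ governs the linear-exponent regime. This amounts to the bound $|X_i| \le L|z_i| + Lz_i^2 + L\,\ex|z_i| + L\,\ex z_i^2$ followed by routine moment estimates; everything afterwards is the standard Bernstein/Chernoff computation above. (Alternatively, the statement is precisely \cite[Lemma A.11]{cindy_ramji} and may simply be cited.)
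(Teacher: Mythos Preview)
The paper does not actually prove this lemma: it is stated purely as a citation of \cite[Lemma A.11]{cindy_ramji}, with no accompanying argument. So there is nothing in the paper to compare against beyond the reference itself---which you already identify in your final parenthetical remark.

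Your sketch is nonetheless a correct route to the result. The reduction via the pseudo-Lipschitz bound at $y=0$ to $|f(x)-f(0)|\le L|x|+L|x|^2$, the observation that centering removes $f(0)$ (so that $|X_i|\le L|z_i|+Lz_i^2+L\,\ex|z_i|+L\,\ex z_i^2$), and the conclusion that $X_i$ is sub-exponential because a sub-Gaussian squared is sub-exponential, are all sound. The Chernoff/Bernstein optimization you describe then yields exactly the $\min(t,t^2)$ exponent, with the constant scaling as $L^{-2}$ for fixed $\nu$. In short: your approach is valid and self-contained, whereas the paper simply invokes the external lemma.
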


\subsection{Proof of Theorem \ref{thm1}}\label{subsec_proof_2}

To complete the proof in Sec. \ref{subsec:eb_conc}, we need to prove Lemmas \ref{lem:un} and \ref{lem:fn}. We start with the latter.

\noindent \textbf{Proof of Lemma \ref{lem:fn}}:

We want to obtain a bound for $\prb(\abs{f_n} \geq t)$, where 
 \be f_n \vcentcolon= \frac{a^2_\by}{n}\sum_{i=1}^n\frac{y_i^2}{\left(1 + c_\by e^{-ay_i^2/2}\right)^2} 
 - \frac{a^2}{n}\sum_{i=1}^n\ex \left[\frac{y_i^2}{\left(1 + ce^{-ay_i^2/2}\right)^2}\right],  \label{eq:fdef2} \ee  
with  the deterministic values $a,c$ defined in \eqref{eq:bc_defs}. For brevity, we define $b \vcentcolon = \frac{\norm{\bst}^2}{n}$. Also define the event
\begin{align}\label{eq_E}
 \mathcal{E} \vcentcolon= \left \{ \by \left\vert b - u \leq \frac{\norm{\by}^2}{n} - 1 \leq b + u \right.  \right \},
\end{align}
where $u >0$ will be specified later. 
From \eqref{eq4_thm1}, we have $\prb \left( \mathcal{E}^c \right) \leq 2 e^{-nk\min(u,u^2)}$. Therefore,
\begin{equation} \label{eq_fn_bound}
\begin{split}
 \prb \left( \vert f_n \vert \geq t \right) & = \prb \left(\vert f_n \vert \geq t, \mathcal{E}  \right) + \prb \left(\vert f_n \vert \geq t, \mathcal{E}^c  \right)\\  
 & \leq \prb (\mathcal{E}) \prb \left(\vert f_n \vert  \geq t \vert \mathcal{E}  \right) + 2e^{-nk\min(u,u^2)} \\ 
 & =  \prb (\mathcal{E}) \prb \left( f_n \geq t \vert \mathcal{E}  \right) + \prb (\mathcal{E}) \prb \left( f_n \leq -t \vert \mathcal{E}  \right)+  2e^{-nk\min(u,u^2)}.
 \end{split}
\end{equation}
Now, when event $\mathcal{E}$ occurs, from the definition of $a_\by$ in \eqref{eq:aydy_def} we have 
\begin{align*}
  \left[1 - \frac{\e}{b - u + \e} \right]_+  \leq a_\by \leq 1 - \frac{\e}{b + u + \e} .
\end{align*}
We therefore have the following lower and upper bounds:
\begin{align}
a_\by & \geq  a_L  \vcentcolon \max \left\{ \frac{b-u}{b + \e}, \  0  \right\}, \label{eq_aL} \\
a_\by & \leq a_U  \vcentcolon = \min \left\{ \frac{b+u}{b + \e}, \ 1 \right\}. \label{eq_aU}
\end{align}
Similarly when $\mc{E}$ occurs, $\frac{1-\e}{\e^{3/2}}\sqrt{\e+[b - u]_+  } \leq c_\by \leq \frac{1-\e}{\e^{3/2}}\sqrt{b + u + \e}$, and we have the bounds
\begin{align}
\label{eq_cL}
c_\by  \geq c_L  & \vcentcolon = c - \kappa_1 \min(u,b),  \\
c_\by \leq c_U &  \vcentcolon = c + \kappa_1 u, \label{eq_cU}
\end{align}
where $\kappa_1 \vcentcolon =  \frac{c}{b+\e}= \frac{1-\e}{\e^{3/2} \sqrt{b+\e}}$.   Using these bounds in the definition of $f_n$ in \eqref{eq:fdef2}, we have
\begin{align}
 &\prb \left( f_n \geq t  \vert \mathcal{E}\right) \nonumber \\
 & \leq \prb\left( \left. \frac{a_U^2}{n}\sum_{i=1}^n\frac{y_i^2}{\left(1 + c_Le^{-a_Uy_i^2/2}\right)^2} - \frac{a^2}{n}\sum_{i=1}^n\ex \left[\frac{y_i^2}{\left(1 + ce^{-ay_i^2/2}\right)^2}\right] \geq t  \right\vert \mathcal{E}\right) \nonumber \\
 & =  \prb\left( \left. \frac{a_U^2}{n}\sum_{i=1}^n\frac{y_i^2}{\left(1 + c_Le^{-a_Uy_i^2/2}\right)^2} - \frac{a_U^2}{n}\sum_{i=1}^n\ex \left[\frac{y_i^2}{\left(1 + c_Le^{-a_Uy_i^2/2}\right)^2}\right] + \Delta_{1n} \geq t  \right\vert \mathcal{E}\right),  \label{eq3_thm1}
 \end{align}
where 
\be \Delta_{1n} \vcentcolon = \frac{a_U^2}{n}\sum_{i=1}^n\ex \left[\frac{y_i^2}{\left(1 + c_Le^{-a_Uy_i^2/2}\right)^2}\right] - \frac{a^2}{n}\sum_{i=1}^n\ex \left[\frac{y_i^2}{\left(1 + ce^{-ay_i^2/2}\right)^2}\right].  \label{eq:Delta1_def} \ee
Next, 
\begin{equation} 
\label{eq5_thm1}	
\begin{split}
 &\prb\left(\left. \frac{a_U^2}{n}\sum_{i=1}^n\frac{y_i^2}{\left(1 + c_Le^{-a_Uy_i^2/2}\right)^2} - \frac{a_U^2}{n}\sum_{i=1}^n\ex \left[\frac{y_i^2}{(1 + c_Le^{-a_Uy_i^2/2})^2}\right] \geq t \right \vert \mathcal{E} \right) \\
 & \leq \frac{1}{\prb(\mathcal{E})}{\prb\left( \frac{a_U^2}{n}\sum_{i=1}^n \left( \frac{y_i^2}{\left(1 + c_Le^{-a_Uy_i^2/2}\right)^2} - \ex \left[\frac{y_i^2}{(1 + c_Le^{-a_Uy_i^2/2})^2}\right] \right) \geq t \right)} 
 \leq \frac{e^{-nk\min(t,t^2)}}{\prb(\mathcal{E})},
 \end{split}
 \end{equation}
where the last inequality follows from Lemma \ref{lem1} after noting that $c_L \leq c$ and hence upper bounded, and $k$ is some absolute positive constant due to the assumption that $\sum_{i=1}^n \theta_i^4/n < \Lambda$.  Next, it is shown in Appendix \ref{app:Delta12} that $\Delta_{1n} \leq \kappa_3 u$, where $\kappa_3 \leq 2 \frac{(b + 1)}{b} (1 + a + \kappa_1 a^2 b)$ is an absolute positive constant. Using this bound on $\Delta_{1n}$ and \eqref{eq5_thm1} in \eqref{eq3_thm1} we obtain
\ben
\begin{split}
&\prb\left( \left. \frac{a_U^2}{n}\sum_{i=1}^n\frac{y_i^2}{\left(1 + c_Le^{-a_Uy_i^2/2}\right)^2} - \frac{a_U^2}{n}\sum_{i=1}^n\ex \left[\frac{y_i^2}{\left(1 + c_Le^{-a_Uy_i^2/2}\right)^2}\right] + \Delta_{1n} \geq t  +\kappa_3u  \right\vert \mathcal{E}\right) \\
 &\leq \frac{e^{-nk\min(t,t^2)}}{ \prb (\mathcal{E})}.
\end{split}
\een
 Choosing $u = t$, we finally have, from \eqref{eq3_thm1},
 \begin{align}\label{eq6_thm1}
 \prb (\mathcal{E}) \prb \left( f_n \geq t \right \vert \mathcal{E}) \leq e^{-nk\min(t,t^2)}
\end{align}
for a suitable absolute positive constant $k$. 

The lower tail bound is established as follows in a similar manner.
\begin{align}\label{eq7_thm1}
 & \prb \left( f_n \leq -t \right \vert \mathcal{E}) \\ \nonumber
 &\leq \prb\left(\left. \frac{a_L^2}{n}\sum_{i=1}^n\frac{y_i^2}{\left(1 + c_Ue^{-a_Ly_i^2/2}\right)^2} - \frac{a_L^2}{n}\sum_{i=1}^n\ex \left[\frac{y_i^2}{\left(1 + c_U e^{-ay_i^2/2}\right)^2}\right] - \Delta_{2n} \leq -t \right \vert \mathcal{E} \right),
\end{align}
where 
\be \Delta_{2n} \vcentcolon =  \frac{a^2}{n}\sum_{i=1}^n\ex \left[\frac{y_i^2}{\left(1 + ce^{-ay_i^2/2}\right)^2}\right]
- \frac{a_L^2}{n}\sum_{i=1}^n\ex \left[\frac{y_i^2}{\left(1 + c_U e^{-a_L y_i^2/2}\right)^2}\right].  \label{eq:Delta2_def} \ee
 Next, we have 
\begin{align} \nonumber
&\prb\left(\left.\frac{a_L^2}{n}\sum_{i=1}^n\frac{y_i^2}{\left(1 + c_Ue^{-a_Ly_i^2/2}\right)^2} - \frac{a_L^2}{n}\sum_{i=1}^n\ex \left[\frac{y_i^2}{\left(1 + c_Ue^{-a_Ly_i^2/2}\right)^2}\right] \leq -t \right \vert \mathcal{E} \right)\\
& \leq  \frac{1}{\prb(\mathcal{E})} \prb\left(\frac{a_L^2}{n}\sum_{i=1}^n \left( \frac{y_i^2}{(1 + c_Ue^{-a_Ly_i^2/2})^2} -\ex \left[\frac{y_i^2}{(1 + c_Ue^{-a_Ly_i^2/2})^2}\right] \right) \leq -t\right)  \leq \frac{e^{-nkt^2}}{\prb(\mathcal{E})},   \label{eq8_thm1}
\end{align}
where the last inequality uses Lemma \ref{lem1}, and $k$ is an absolute positive constant due to the assumption that $\sum_{i=1}^n \theta_i^4/n < \Lambda$.
 Next it is shown in Appendix \ref{app:Delta12} that $\Delta_{2n} \leq \kappa_3 u$, where $\kappa_3 \leq 2 \frac{(b + 1)}{b} (1 + a + \kappa_1 a^2 b)$ is an absolute positive constant. Using this bound on $\Delta_{2n}$ and \eqref{eq8_thm1} in \eqref{eq7_thm1} we obtain
\begin{align*}
& \prb\left(\left. \frac{a_L^2}{n}\sum_{i=1}^n\frac{y_i^2}{\left(1 + c_Ue^{-a_Ly_i^2/2}\right)^2} - \frac{a_L^2}{n}\sum_{i=1}^n\ex \left[\frac{y_i^2}{\left(1 + c_U e^{-ay_i^2/2}\right)^2}\right] - \Delta_{2n} \leq -t -\kappa_3 u \right \vert \mathcal{E} \right)\\
&  \leq e^{-nkt^2}/\prb (\mathcal{E}).
\end{align*}
  Choosing $u = t$, we finally have, from \eqref{eq7_thm1},
 \begin{align}\label{eq9_thm1}
 \prb (\mathcal{E})\prb \left( f_n \leq -t \right \vert \mathcal{E}) \leq e^{-nkt^2}
\end{align}
for some suitable absolute positive constant $k$. So, using \eqref{eq6_thm1} and \eqref{eq9_thm1} in \eqref{eq_fn_bound}, we arrive at 
\begin{align*}
 \prb \left( \vert f_n  \vert \geq t \right ) \leq 4e^{-nk\min(t,t^2)}.
\end{align*}

The proofs of the concentration inequalities for $g_n$, $h_n$, and $w_n$ are along similar lines to the steps from \eqref{eq3_thm1}-\eqref{eq9_thm1}. In particular, the concentration inequality for $g_n$ involves the application of Lemma \ref{lem1} as done in \eqref{eq5_thm1} and \eqref{eq8_thm1}, and that for $h_n$ involves the application of Lemma \ref{lem1a} to obtain inequalities of the form \eqref{eq5_thm1} and \eqref{eq8_thm1}. The concentration inequality for $w_n$ involves the application of Lemma \ref{lem2} and is also similar to the steps from \eqref{eq3_thm1}-\eqref{eq9_thm1} with two notable differences: 

\begin{enumerate}
 \item First, we establish that  both $(a_U^2 c_U - a^2c) \leq p_1u$ and $(a^2 c - a_L^2 c_L) \leq p_2 u$  for some positive constants $p_1, p_2$.
  This can be done using \eqref{eq_aL}--\eqref{eq_cU}. Indeed, for $u \geq  b$, $a_U = 1$, and so $a_U^2c_U - a^2c \leq (1-a^2)c + \kappa_1u \leq \frac{(1-a^2)cu}{b} + \kappa_1 u = p_1 u $ where $p_1 \vcentcolon= \frac{(1-a^2)c}{b} + \kappa_1$. For $0 \leq u < b$, $a_U^2c_U - a^2c$ is convex in $u$ and hence, it is clear that for any $u > 0$, $a_U^2c_U - a^2c \leq p_1 u$. Next, it is clear from \eqref{eq_cL} and \eqref{eq_cU} that for $0 \leq u < b$, $a^2 c - a_L^2 c_L$ is a concave function, and is bounded by $a^2c$ for $u \geq b$. Hence, $a^2 c - a_L^2 c_L \leq p_2 u$ for $u >0$, where $p_2$ is the derivative of $a^2 c - a_L^2 c_L$ at $u = 0$, i.e., $p_2 = 3(a+2)c/(b+\e)$.

\item  Next, we use the above bounds to  show that 
 \begin{align*}
 \frac{1}{n}\sum_{i=1}^n\ex \left[\frac{a_U^2c_U^2 y_i^2e^{-a_Ly_i^2/2}}{\left(1 + c_Le^{-a_Uy_i^2/2}\right)^2} - \frac{a^2 c^2 y_i^2e^{-ay_i^2/2}}{\left(1 + ce^{-ay_i^2/2}\right)^2}\right] \leq q_1 u\\
  \frac{1}{n}\sum_{i=1}^n\ex \left[\frac{a^2 c^2 y_i^2e^{-ay_i^2/2}}{\left(1 + ce^{-ay_i^2/2}\right)^2} - \frac{a_L^2 c_L^2 y_i^2e^{-a_Uy_i^2/2}}{\left(1 + c_Ue^{-a_Ly_i^2/2}\right)^2}\right] \leq q_2 u
 \end{align*}
for some positive constants $q_1$ and $q_2$. This is done using steps similar to those used to bound $\Delta_{1n}, \Delta_{2n}$ in Appendix \ref{app:Delta12}.
\end{enumerate}

\noindent \textbf{Proof of Lemma \ref{lem:un}}:

We show that $u_n, v_n, x_n$ are each bounded by order $1/n$ quantities, and then apply Lemma \ref{lem_1a}(a) to obtain the concentration result.

\textbf{Concentration for $u_n$}: As $b_i(\by) \geq 1$ for all $i$, we have 
\begin{align*}
 \frac{1}{n} \sum_{i=1}^n \frac{y_i^2}{d_\by^2b_i(\by)} \mathsf{1}_{\{\norm{\by}^2 > n\}}  & \leq \frac{1}{n} \sum_{i=1}^n \frac{y_i^2}{d_\by^2} \mathsf{1}_{\{\norm{\by}^2 > n\}}  = \frac{\e^2\left(\norm{\by}^2/n\right)}{\left[\left(\norm{\by}^2/n\right) - 1 + \e  \right]^2  }\mathsf{1}_{\{\norm{\by}^2 > n\}} \leq 1.
\end{align*}
Therefore,
\begin{align*}
  \frac{4}{\e n^2} \sum_{i=1}^n \frac{y_i^2}{d_\by^2b_i(\by)} \mathsf{1}_{\{\norm{\by}^2 > n\}} \in \left[0,4/(n\e)\right].
\end{align*}
Applying Lemma \ref{lem_1a}(a), we obtain that for any $t > 0$, 
$ \prb \left( \vert u_n \vert \geq t \right) \leq 2 e^{-n^2 kt^2}$
for a suitable positive constant $k$. 

\noindent \textbf{Concentration for $v_n$}:  As $d_\by ,b_i(\by)\geq 1$, $1 \leq i \leq n$, we have 
\begin{align*}
 \frac{1}{n}\sum_{i=1}^n\frac{ a_\by y_i^4  e^{-\frac{a_\by y_i^2}{2}}}{d_\by^{3/2}b_i^2(\by)}  \leq  \frac{1}{n}\sum_{i=1}^n a_\by y_i^4  e^{-\frac{a_\by y_i^2}{2}} .
\end{align*}
Next, we show that 
\begin{align}\label{eq_hard_bound}
 f(\mbf{y}):= \frac{1}{n} \sum_{i=1}^n { a_\by  y_i^4  e^{-\frac{a_\by y_i^2}{2}}}  \leq K,
\end{align}
 where $K$ is a constant to be determined. Let $C>0$ be a constant to be fixed later. There are two cases:
\begin{enumerate}
\item $\norm{\by}^2/n \leq 1+C$: In this case, use the bound $e^{-x} < \frac{1}{x}$ for all $x >0$. Using this we have
\begin{align*}
 f(\mbf{y}) :=    \frac{1}{n} \sum_{i=1}^n { a_\by  y_i^4  e^{-\frac{a_\by y_i^2}{2}}}  
  \, \leq  \,    \mathsf{1}_{\{\norm{\by}^2 > n\}} \frac{2}{n} \sum_{i=1}^n  y_i^2 
 \, \leq \, 2 (1+C)
\end{align*}  
by assumption.

\item  $\norm{\by}^2/n > 1+C$: In this case, from the definition of $a_\by$ in \eqref{eq:aydy_def} note that $a_\by > C/(C+\e)$. Now use the bound $e^{-x} < \frac{1}{x^2}$ for all $x >0$ to obtain 
\begin{align*}
  f(\mbf{y}):=   \frac{1}{n} \sum_{i=1}^n { a_\by  y_i^4  e^{-\frac{a_\by y_i^2}{2}}}  
 \leq   \frac{4}{n} \sum_{i=1}^n  \frac{1}{a_\by}  < \frac{4(C+\e)}{C}.
\end{align*}
 \end{enumerate}

\noindent Choosing $C= (1 + \sqrt{1 +8\e})/2$ to make the two bounds equal yields $K= 3 +\sqrt{1 +8\e}$. Therefore,
\begin{align*}
 \frac{2(1-\e )}{\e^2 n^2} \sum_{i=1}^n\frac{ a_\by y_i^4  e^{-\frac{a_\by y_i^2}{2}}}{d_\by^{3/2}b_i^2(\by)}   \in \left[0 , \frac{2(1-\e)K}{n\e^2} \right]. 
\end{align*}
Applying Lemma \ref{lem_1a}(a) yields, for any $t > 0$,
\begin{align*}
 \prb \left( \vert v_n \vert \geq t \right) \leq 2 e^{-n^2 kt^2}
\end{align*}
for a suitable positive constant $k$. 

\vspace{10pt}
\noindent \textbf{Concentration for $x_n$}:  Since $xe^{-x} \leq 1/e$ for $x > 0$, 
\begin{align*}
 \frac{2(1-\e)}{n^2\e^2}  \sum_{i=1}^n \frac{ a_\by y_i^2  e^{-\frac{a_\by y_i^2}{2}}}{ \sqrt{d_\by} b_i^2(\by)} \leq \frac{4(1-\e)}{n\e^2 e}.
\end{align*}
A direct application of Lemma \ref{lem_1a}(a) results in $\prb \left( \vert x_n \vert \geq t \right) \leq 2 e^{-n^2 kt^2}$ for any $t > 0$ and some positive constant $k$. \qedhere

\subsection{Proof of \eqref{eq_conc_loss_thm1a} for Theorem \ref{thm1a}} \label{sec_proof_thm1a}

The goal is to obtain a bound for $\prb(\abs{s_n} \geq t)$, where 
 \be s_n \vcentcolon= \frac{a_\by}{n}\sum_{i=1}^n\frac{\theta_i y_i}{1 + c_\by e^{-ay_i^2/2}} 
 - \frac{a}{n}\sum_{i=1}^n\ex \left[\frac{\theta_i y_i}{1 + ce^{-ay_i^2/2}}\right],  \label{eq:sdef2} \ee  
with  the deterministic values $a,c$ defined in \eqref{eq:bc_defs}. Since the summands of the random term in \eqref{eq:sdef2} can take both positive and negative values, we employ the following approach to obtain the concentration inequality.

Let $s_n = s_{n}^+ + s_{n}^-$ where 
\begin{align*}
s_{n}^+ &\vcentcolon= \frac{a_\by}{n}\sum_{i=1}^n\frac{\theta_i y_i \mathsf{1}_{\{\theta_i y_i \geq 0 \}}}{1 + c_\by e^{-ay_i^2/2}} 
 - \frac{a}{n}\sum_{i=1}^n\ex \left[\frac{\theta_i y_i\mathsf{1}_{\{\theta_i y_i \geq 0 \}}}{1 + ce^{-ay_i^2/2}}\right],\\
 s_{n}^- &\vcentcolon= \frac{a_\by}{n}\sum_{i=1}^n\frac{\theta_i y_i\mathsf{1}_{\{\theta_i y_i \leq 0 \}}}{1 + c_\by e^{-ay_i^2/2}} 
 - \frac{a}{n}\sum_{i=1}^n\ex \left[\frac{\theta_i y_i\mathsf{1}_{\{\theta_i y_i \leq 0 \}}}{1 + ce^{-ay_i^2/2}}\right].
\end{align*}

\noindent Using \eqref{eq_E} and proceeding along the lines of \eqref{eq_fn_bound}, we obtain
\begin{align} \nonumber
 \prb \left( \vert s_n \vert \geq t \right) &=  \prb \left( \vert s_n \vert \geq t , \mathcal{E}\right) +  \prb \left( \vert s_n \vert \geq t , \mathcal{E}^c\right) \\  \nonumber
 &  \leq  \prb(\mathcal{E})\prb \left( \vert s_n \vert \geq t \vert \mathcal{E}\right) +  \prb \left( \mathcal{E}^c\right) \\ \label{eq_sn_bound}
 & \leq \prb(\mathcal{E})\prb \left( \vert s_n^+ \vert \geq t/2 \vert \mathcal{E}\right) + \prb(\mathcal{E})\prb \left( \vert s_n^- \vert \geq t/2 \vert \mathcal{E}\right)+  2e^{-nk\min(u,u^2)} 
\end{align}
where $u > 0$ will be specified later. Now, with $a_U$ and $c_L$ as respectively defined in \eqref{eq_aU} and \eqref{eq_cL}, we have
\begin{align} \nonumber
 \prb \left( s_n^+ \geq t/2  \vert \mathcal{E}\right)& \leq \prb\left( \left. \frac{a_{U}}{n}\sum_{i=1}^n\frac{\theta_i y_i\mathsf{1}_{\{\theta_i y_i \geq 0 \}}}{1 + c_{L}e^{-a_Uy_i^2/2}} - \frac{a}{n}\sum_{i=1}^n\ex \left[\frac{\theta_i y_i\mathsf{1}_{\{\theta_i y_i \geq 0 \}}}{1 + ce^{-ay_i^2/2}}\right] \geq t/2  \right\vert \mathcal{E}\right) \\ \label{eq3_thms1}
 & =  \prb\left( \left. \frac{a_{U}}{n}\sum_{i=1}^n\frac{\theta_i y_i\mathsf{1}_{\{\theta_i y_i \geq 0 \}}}{1 + c_{L}e^{-a_Uy_i^2/2}} - \frac{a_{U}}{n}\sum_{i=1}^n\ex \left[\frac{\theta_i y_i\mathsf{1}_{\{\theta_i y_i \geq 0 \}}}{1 + c_{L}e^{-a_Uy_i^2/2}}\right] + \Delta_{3n} \geq t/2  \right\vert \mathcal{E}\right),  
 \end{align}
where 
\be \Delta_{3n} \vcentcolon = \frac{a_{U}}{n}\sum_{i=1}^n\ex \left[\frac{\theta_i y_i\mathsf{1}_{\{\theta_i y_i \geq 0 \}}}{1 + c_{L}e^{-a_Uy_i^2/2}}\right] - \frac{a}{n}\sum_{i=1}^n\ex \left[\frac{\theta_i y_i\mathsf{1}_{\{\theta_i y_i \geq 0 \}}}{1 + ce^{-ay_i^2/2}}\right].  \label{eq:Deltas1_def} \ee
Next, 
\begin{equation} 
\label{eq5_thms1}	
\begin{split}
 &\prb\left(\left. \frac{a_{U}}{n}\sum_{i=1}^n\frac{\theta_i y_i\mathsf{1}_{\{\theta_i y_i \geq 0 \}}}{1 + c_{L}e^{-a_Uy_i^2/2}} - \frac{a_{U}}{n}\sum_{i=1}^n\ex \left[\frac{\theta_i y_i\mathsf{1}_{\{\theta_i y_i \geq 0 \}}}{1 + c_{L}e^{-a_Uy_i^2/2}}\right] \geq t/2 \right \vert \mathcal{E} \right) \\
 & \leq \frac{1}{\prb(\mathcal{E})}{\prb\left( \frac{a_{U}}{n}\sum_{i=1}^n \left( \frac{\theta_i y_i\mathsf{1}_{\{\theta_i y_i \geq 0 \}}}{1 + c_Le^{-a_Uy_i^2/2}} - \ex \left[\frac{\theta_i y_i\mathsf{1}_{\{\theta_i y_i \geq 0 \}}}{1 + c_{L}e^{-a_Uy_i^2/2}}\right] \right) \geq t/2 \right)} 
 \leq \frac{e^{-n \kappa t^2 } }{\prb(\mathcal{E})},
 \end{split}
 \end{equation}
where the last inequality follows from Lemma \ref{lem2a}, and $\kappa = (8 a_U^2 (1+c_L)^2 \norm{\bst}^2/n )^{-1}$. We note that   $\kappa$ is bounded from below by an absolute positive constant due to Assumption A and the fact that $a_U \leq 1, c_L \leq c$ and hence bounded.  Next, it is shown in Appendix \ref{app:Delta12} that $\Delta_{3n} \leq \kappa_4 u$, where $\kappa_4 \leq \sqrt{\frac{b+1}{b}}\left(2 + \kappa_1 ab  \right)$ is an absolute positive constant. Using this bound on $\Delta_{3n}$ and \eqref{eq5_thms1} in \eqref{eq3_thms1} we obtain
\ben
\begin{split}
&\prb\left( \left. \frac{1}{n}\sum_{i=1}^n\frac{a_{U}\theta_i y_i}{1 + c_{L}e^{-a_Uy_i^2/2}} - \frac{1}{n}\sum_{i=1}^n\ex \left[\frac{a_{U}\theta_i y_i}{1 + c_{L}e^{-a_Uy_i^2/2}}\right] + \Delta_{3n} \geq t/2  +\kappa_4u  \right\vert \mathcal{E}\right) \\
 &\leq \frac{e^{-n \kappa t^2 }}{ \prb (\mathcal{E})}.
\end{split}
\een
 Choosing $u = t$, we finally have, from \eqref{eq3_thms1},
 \begin{align}\label{eq4_thms1}
 \prb (\mathcal{E}) \prb \left( s_n^+ \geq t/2 \right \vert \mathcal{E}) \leq e^{-nk t^2}
\end{align}
for a suitable absolute positive constant $k$. 

To establish the lower tail, we proceed as follows. With $a_L$ and $c_U$ as respectively defined in \eqref{eq_aL} and \eqref{eq_cU}, we have
\begin{align} \nonumber
 \prb \left( s_n^+ \leq -t/2  \vert \mathcal{E}\right)& \leq \prb\left( \left. \frac{a_{L}}{n}\sum_{i=1}^n\frac{\theta_i y_i\mathsf{1}_{\{\theta_i y_i \geq 0 \}}}{1 + c_{U}e^{-a_Ly_i^2/2}} - \frac{a}{n}\sum_{i=1}^n\ex \left[\frac{\theta_i y_i\mathsf{1}_{\{\theta_i y_i \geq 0 \}}}{1 + ce^{-ay_i^2/2}}\right] \leq -t/2  \right\vert \mathcal{E}\right) \\ \label{eq6_thms1}
 & =  \prb\left( \left. \frac{a_{L}}{n}\sum_{i=1}^n\frac{\theta_i y_i\mathsf{1}_{\{\theta_i y_i \geq 0 \}}}{1 + c_{U}e^{-a_Ly_i^2/2}} - \frac{a_{L}}{n}\sum_{i=1}^n\ex \left[\frac{\theta_i y_i\mathsf{1}_{\{\theta_i y_i \geq 0 \}}}{1 + c_{U}e^{-a_Ly_i^2/2}}\right] - \Delta_{4n} \leq -t/2  \right\vert \mathcal{E}\right),  
 \end{align}
where 
\be \Delta_{4n} \vcentcolon =  \frac{a}{n}\sum_{i=1}^n\ex \left[\frac{\theta_i y_i\mathsf{1}_{\{\theta_i y_i \geq 0 \}}}{1 + ce^{-ay_i^2/2}}\right]-\frac{a_L}{n}\sum_{i=1}^n\ex \left[\frac{\theta_i y_i\mathsf{1}_{\{\theta_i y_i \geq 0 \}}}{1 + c_{U}e^{-a_Ly_i^2/2}}\right].  \label{eq:Deltas2_def} \ee
Next, 
\begin{equation} 
\label{eq7_thms1}	
\begin{split}
 &\prb\left(\left. \frac{a_{L}}{n}\sum_{i=1}^n\frac{\theta_i y_i\mathsf{1}_{\{\theta_i y_i \geq 0 \}}}{1 + c_{U}e^{-a_Ly_i^2/2}} - \frac{a_{L}}{n}\sum_{i=1}^n\ex \left[\frac{\theta_i y_i\mathsf{1}_{\{\theta_i y_i \geq 0 \}}}{1 + c_{U}e^{-a_Ly_i^2/2}}\right] \leq -t/2 \right \vert \mathcal{E} \right) \\
 & \leq \frac{1}{\prb(\mathcal{E})}{\prb\left( \frac{a_{L}}{n}\sum_{i=1}^n \left( \frac{\theta_i y_i\mathsf{1}_{\{\theta_i y_i \geq 0 \}}}{1 + c_Ue^{-a_Ly_i^2/2}} - \ex \left[\frac{\theta_i y_i\mathsf{1}_{\{\theta_i y_i \geq 0 \}}}{1 + c_{U}e^{-a_Ly_i^2/2}}\right] \right) \leq -t/2 \right)} 
 \leq \frac{e^{-nkt^2}}{\prb(\mathcal{E})},
 \end{split}
 \end{equation}
where the last inequality follows from Lemma \ref{lem_chung1}, and $k$ is some absolute positive constant due to Assumption A. Note that to obtain an inequality of the form \eqref{eq7_thms1}, instead of Lemma \ref{lem_chung1}, we cannot use a lower tail inequality result that is the counterpart of \eqref{eq:lem2a_st1}  because  the constant $k$ would be proportional to $c_U^{-1} = (c + \kappa_1 u)^{-1}$ which cannot be bounded from below unlike $c_L^{-1}$.

Next, it is shown in Appendix \ref{app:Delta12} that $\Delta_{4n} \leq \kappa_4 u$. Using this bound on $\Delta_{4n}$ and \eqref{eq7_thms1} in \eqref{eq6_thms1} we obtain
\ben
\begin{split}
&\prb\left( \left. \frac{1}{n}\sum_{i=1}^n\frac{a_{L}\theta_i y_i}{1 + c_{U}e^{-a_Ly_i^2/2}} - \frac{1}{n}\sum_{i=1}^n\ex \left[\frac{a_{L}\theta_i y_i}{1 + c_{U}e^{-a_Ly_i^2/2}}\right] - \Delta_{4n} \leq -t/2  - \kappa_4u  \right\vert \mathcal{E}\right) \leq \frac{e^{-nkt^2}}{ \prb (\mathcal{E})}.
\end{split}
\een
 Choosing $u = t$, we finally have, from \eqref{eq6_thms1},
 \begin{align}\label{eq8_thms1}
 \prb (\mathcal{E}) \prb \left( s_n^+ \leq -t/2 \right \vert \mathcal{E}) \leq e^{-nkt^2}
\end{align}
for a suitable absolute positive constant $k$. 

 Hence, using \eqref{eq4_thms1} and \eqref{eq8_thms1}, we arrive at 
\begin{align} \label{eq:thm1a_eq1}
 \prb (\mathcal{E})  \prb \left( \vert s_n^+  \vert \geq t \vert \mathcal{E} \right) \leq 2e^{-nk t^2}.
\end{align}

\noindent In a similar manner, it is straightforward to obtain
\begin{align}\label{eq:thm1a_eq2}
 \prb (\mathcal{E})  \prb \left( \vert s_n^-  \vert \geq t \vert \mathcal{E} \right) \leq 2e^{-nk t^2}.
\end{align}

\noindent Using \eqref{eq:thm1a_eq1} and \eqref{eq:thm1a_eq2} in \eqref{eq_sn_bound} and recalling that $u=t$, we finally obtain
\begin{equation*}
 \prb \left( \vert s_n  \vert \geq t  \right) \leq 6e^{-nk\min(t,t^2)}
\end{equation*}
for some suitable absolute positive constant $k$.

\subsection{Proof of Theorem  \ref{thm1b}} \label{sec_proof_thm1b}

 For $i \in [n]$, let
 \begin{align*}
      h(w_i) \vcentcolon=  \hat{\theta}_{ST,i}- \theta_i = \left\{ \begin{array}{cc}
                                                                    - \theta_i,  &     -\lambda - \theta_i <  w_i < \lambda - \theta_i \\
                                                                    w_i - \lambda, &  w_i \geq \lambda - \theta_i \\
                                                                    w_i + \lambda, & w_i \leq -\lambda - \theta_i\\
                                                                   \end{array} \right.
 \end{align*} 
 where $w_i \sim \mathcal{N}(0,1)$. Let $g(w_i) \vcentcolon=  (\hat{\theta}_{ST,i}- \theta_i)^2$. We show that $g$ is pseudo-Lipschitz of order $2$ with pseudo-Lipschitz constant $\max(1,2\lambda)$, and then apply Lemma \ref{lem_plipschitz} to arrive at \eqref{eq_conc_thm1b}. It is straightforward to note that $h$ is Lipschitz with Lipschitz constant $1$, i.e., $\vert h(x) - h(y) \vert \leq \vert x- y \vert$,  $\forall x, y \in \mathbb{R}$. Now, for any $w_{i,1}, w_{i,2} \in \mathbb{R}$,
 \begin{align*}
  \vert g(w_{i,1}) - g(w_{i,2}) \vert & =  \vert h(w_{i,1}) + h(w_{i,2}) \vert  \vert h(w_{i,1})- h(w_{i,2}) \vert \\
  & \leq \vert h(w_{i,1}) + h(w_{i,2}) \vert \vert w_{i,1}- w_{i,2} \vert  \\
  & \leq (\vert h(w_{i,1}) \vert + \vert h(w_{i,2}) \vert) \vert w_{i,1}- w_{i,2} \vert  \\
  & \leq \left(2\lambda + \vert w_{i,1} \vert + \vert w_{i,2} \vert \right)  \vert w_{i,1}- w_{i,2} \vert \\
  & \leq \max(1,2\lambda)\left(1 + \vert w_{i,1} \vert + \vert w_{i,2} \vert \right)  \vert w_{i,1}- w_{i,2} \vert.
 \end{align*}
Hence, we obtain the pseudo-Lipschitz constant $L$  to be $\max(1,2\lambda)$.  Therefore, we can apply  Lemma \ref{lem_plipschitz} 
to obtain the desired concentration result \eqref{eq_conc_thm1b} for  $\frac{1}{n} \sum_{i=1}^n g(w_i)= \frac{1}{n}  \sum_{i=1}^n  (\hat{\theta}_{ST,i}- \theta_i)^2$.

 \subsection{Proof of Theorem \ref{thm2}} \label{sec_hybrid_proof}
 
 Without loss of generality, for a chosen $t > 0$, we can assume that 
 \be \frac{1}{n} L_{sep}(\bst,\by)  > t + \kappa_n  \label{eq:lsep_assume} \ee 
 because otherwise, it is clear that 
\[   \frac{1}{n} L(\bst,\hat{\bst}_H(\by)) \leq  \frac{1}{n} L_{min}(\bst,\by) +  t + \kappa_n \]
and \eqref{eq:hybrid_loss} trivially holds. In what follows, we use $K$ and $k$ as generic universal constants that appear in the concentration inequalities. These constants are independent of $n$, but their values  change as we proceed through the proof. 

 Let us first suppose that $\hat{\bst}_{EB}$ is the better estimator of $\bst$ for the given realization $\by$.  Then, recalling the definition of $\gamma_{\mbf{y}}$ in \eqref{eq_gamma}, the desired probability can be bounded as 
 \begin{align}
 \nonumber
 & \mathbb{P}\left( \frac{1}{n} L(\bst,\hat{\bst}_H(\by))  \geq  \frac{1}{n}L(\bst,\hat{\bst}_{EB}(\by)) + t + \kappa_n \right)  \\
 &  \leq \mathbb{P}(\gamma_{\mbf{y}} = 0) 
  = \mathbb{P}\left(\frac{1}{n}\hat{R}(\bst,  \hat{\bst}_{EB}(\by)) - \frac{1}{n}\hat{R}(\bst,  \hat{\bst}_{ST}(\by)) > 0 \right).
 \label{eq:desired_prob}
 \end{align}
The RHS of \eqref{eq:desired_prob} is bounded as follows. Using the triangle inequality, we have for any $u >0$,
\begin{align}
& \prb  \left( \left \vert \frac{1}{n} \hat{R}(\bst,\hat{\bst}_{EB}(\by)) -  \frac{1}{n} L(\bst,\hat{\bst}_{EB}(\by)) \right \vert \geq u + \kappa_n  \right)  \nonumber  \\
& \leq  \prb \left( \frac{1}{n} \abs{L(\bst,\hat{\bst}_{EB}(\by)) - R_2(\bst, \bsth_{EB})}  + \frac{1}{n} \abs{\hat{R}(\bst,\hat{\bst}_{EB}(\by)) - R_1(\bst, \bsth_{EB})} \right. \nonumber \\
& \qquad \qquad  \left. + \frac{1}{n} \abs{R_2(\bst, \bsth_{EB}) - R_1(\bst, \bsth_{EB})} \geq u + \kappa_n  \right)  \nonumber  \\
& \stackrel{(a)}{\leq} \prb \left( \frac{1}{n} \abs{L(\bst,\hat{\bst}_{EB}(\by)) - R_2(\bst, \bsth_{EB})} \geq \frac{u}{2} \right) + 
\prb \left( \frac{1}{n} \abs{\hat{R}(\bst,\hat{\bst}_{EB}(\by)) - R_1(\bst, \bsth_{EB})} \geq \frac{u}{2}  \right) \nonumber \\
& \stackrel{(b)}{\leq} Ke^{-nk\min(u,u^2)},   \label{eq_1_est_risk}
\end{align}
where inequality $(a)$ uses the definition of $\kappa_n$ in \eqref{eq:kappa_n}, and inequality $(b)$ is obtained using \eqref{eq_conc_thm1} and \eqref{eq_conc_thm1a}. Similarly, using \eqref{eq_RST_conc} and \eqref{eq_conc_thm1b}, we obtain for any $u >0$,
\begin{align}
 \label{eq_2_est_risk}
  \mathbb{P}\left( \frac{1}{n}\left \vert L(\bst,\hat{\bst}_{ST}(\by)) - \hat{R}(\bst,\hat{\bst}_{ST}(\by)) \right \vert \geq u \right) \leq 4e^{-nk\min(u,u^2)}.
 \end{align}
Combining \eqref{eq_1_est_risk} and \eqref{eq_2_est_risk}, and using the definition of $L_{sep}$ in \eqref{eq:lsep_def}, we have for $u >0$
\begin{align}
\mathbb{P}\left(\frac{1}{n}\hat{R}(\bst,  \hat{\bst}_{EB}(\by)) - \frac{1}{n}\hat{R}(\bst,  \hat{\bst}_{ST}(\by)) +  \frac{L_{sep}(\bst,\by)}{n}  > 2u +\kappa_n \right)
 & \leq Ke^{-nk\min(u,u^2)}.
 \label{eq_3_est_risk}
\end{align}
Now, choosing $2u = \frac{L_{sep}(\bst,\by)}{n} - \kappa_n $ (which is at least $t$ by the assumption \eqref{eq:lsep_assume}) yields 
\ben
\mathbb{P}\left(\frac{1}{n}\hat{R}(\bst,  \hat{\bst}_{EB}(\by)) - \frac{1}{n}\hat{R}(\bst,  \hat{\bst}_{ST}(\by)) > 0 \right) \leq Ke^{-nk\min(t,t^2)}.
\een
Using this in \eqref{eq:desired_prob}, and noting that the other case ($\hat{\bst}_{ST}$ is the better estimator) can be similarly analysed, we arrive at the concentration result \eqref{eq:hybrid_loss}.

To prove  \eqref{eq:hybrid_risk},  let $X_n \vcentcolon=   \frac{1}{n}\left [ L(\bst,\hat{\bst}_H(\by)) - L_{min}( \boldsymbol{\theta},\by) \right ] \geq 0$. We have,
 \begin{align*}
  \ex \left[ X_n  \right] & = \int_{0}^{\infty}\prb\left( X_n  \geq u \right) du = \int_{0}^{ \kappa_n }\prb\left(  X_n  \geq u \right) du + \int_{\kappa_n }^{\infty}\prb\left( X_n  \geq u \right) du \\
  &\leq \int_{0}^{\kappa_n} du + \int_{\kappa_n}^{\infty}\prb\left(  X_n  \geq u \right) du  = \kappa_n + \int_{0}^{\infty}\prb\left(  X_n  \geq t + \kappa_n \right) dt.
 \end{align*}
Note that $\kappa_n$ is an $\mathcal{O}(1/\sqrt{n})$ term. So, using \eqref{eq:hybrid_loss} and the steps of the proof of Lemma \ref{lem3}, we obtain
 \begin{align*}
 \ex \left[ X_n  \right] \leq \kappa_n  + \int_{0}^{\infty}Ke^{-nk\min(t,t^2)} dt \leq  \kappa_n  + \frac{C}{\sqrt{n}}\left(1 + \frac{1}{\sqrt{n}}\right)   
 \end{align*}
for some positive constant $C$. So,
\begin{align}
 \frac{ R(\bst,\hat{\bst}_H)}{n} \leq  \frac{\ex \left[L_{min}( \boldsymbol{\theta},\by)\right]}{n} + \kappa_n +  \frac{C}{\sqrt{n}}\left(1 + \frac{1}{\sqrt{n}}\right). 
 \label{eq:RbstH_bnd}
\end{align}
It trivially follows that $\ex \left[ L_{min}( \boldsymbol{\theta},\by)\right] \leq \min \{R(\bst,\hat{\bst}_{EB}), R(\bst,\hat{\bst}_{ST})  \}$. Using this in \eqref{eq:RbstH_bnd} and noting that $\kappa_n=\mc{O}(1/\sqrt{n})$ completes the proof of \eqref{eq:hybrid_risk}.


\section*{Funding}
This work was supported  by the European Commission [Marie Curie Career Integration Grant, Agreement Number 631489]; the Isaac Newton Trust; and the Engineering and Physical Sciences Research Council [EP/N013999/1].

\appendix
\vspace{10pt}


\section{Proof of Lemma \ref{lem1}} \label{sec_appendix}
 Let $Z_i \vcentcolon= \frac{y_i^2}{n\left(1 + ce^{-ay_i^2}\right)^p}$. Since $Z_i \geq 0$, from Lemma \ref{lem_chung1}, we have
 \begin{align}\label{eq1_lem1}
   \mathbb{P}\left(f(\mathbf{y}) - \mathbb{E}\left[ f(\mathbf{y})\right]  \leq -t \right) \leq e^{-nt^2/\mathsf{k}}
 \end{align}
where $\mathsf{k} = {2n \sum_{i=1}^n\mathbb{E}[Z_i^2]} \leq {\frac{2}{n}\sum_{i=1}^n\mathbb{E}[y_i^4]} = \frac{2}{n}\sum_{i=1}^n \left(\theta_i^4 + 6\theta_i^2 + 6 \right)$. To prove the upper tail inequality we proceed as follows. For $g : \mathbb{R}^n \to \mathbb{R}$ which is differentiable, we have \cite[Sec. 2.3]{mjwain}
\begin{align}
 \ex\left[ e^{s\{g(\mathbf{y}) - \mathbb{E}[g(\mathbf{y})]\}} \right] \leq \ex \left[e^{{s^2\pi^2 \norm{\nabla g(\mathbf{y})}^2}/{8}} \right].
 \label{eq:gy_mgf}
\end{align}
We take $g(\mathbf{y}) \vcentcolon= \sqrt{f(\mathbf{y})}$, and  obtain an upper tail bound  for $g$ by showing that $\norm{\nabla g(\mathbf{y})}^2$ is bounded. 
Now,
\begin{align*}
 \frac{\partial g(\mathbf{y})}{\partial w_i}  = \frac{ y_i \left[1+ce^{-ay_i^2} +  pacy_i^2e^{-ay_i^2}\right]}{\sqrt{n} \left(1+ce^{-ay_i^2}\right)^{{p}+1}\sqrt{\sum_{j=1}^n{y_j^2}/\left(1+ce^{-ay_j^2}\right)^{p}}} .
\end{align*}
Hence, 
\begin{align*}
 n\norm{\nabla g(\mathbf{y})}^2 &  = \frac{1}{{\sum_{j=1}^n{y_j^2}/\left(1+ce^{-ay_j^2}\right)^{p}}}\sum_{i=1}^n \frac{ y_i^2 \left[1+ce^{-ay_i^2} +  pc\, ay_i^2e^{-ay_i^2}\right]^2}{\left(1+ce^{-ay_i^2}\right)^{{2p}+2}} \\
  & = \frac{1}{{\sum_{j=1}^n{y_j^2}/\left(1+ce^{-ay_j^2}\right)^{p}}} \sum_{i=1}^n \frac{ y_i^2}{\left(1+ce^{-ay_i^2}\right)^{p}}  \frac{\left[1+ce^{-ay_i^2} +  pc \, ay_i^2e^{-ay_i^2}\right]^2}{\left(1+ce^{-ay_i^2}\right)^{{p}+2}} \\
 & \leq \frac{1}{{\sum_{j=1}^n{y_j^2}/\left(1+ce^{-ay_j^2}\right)^{p}}} \sum_{i=1}^n \frac{ y_i^2}{\left(1+ce^{-ay_i^2}\right)^{p}} \left[1 + \frac{ pc\, a y_i^2e^{-ay_i^2}}{\left(1+ce^{-ay_i^2}\right)^{{p/2}+1}}\right]^2.
\end{align*}
Using the bound $a y_i^2e^{-ay_i^2} \leq 1/e$, we obtain
\begin{align*}
 n\norm{\nabla g(\mathbf{y})}^2 \leq C
\end{align*}
where $C \vcentcolon= \left(1 + \frac{pc}{e} \right)^2$. Hence,
\begin{align*}
 \ex\left[ e^{s\{g(\mathbf{y}) - \mathbb{E}[g(\mathbf{y})]\}} \right] \leq e^{{s^2\pi^2C}/{8n}} .
\end{align*}
Hence, $g(\mathbf{y})$ is sub-Gaussian \cite[Sec. 2.3]{boucheron2} with variance factor at most $\frac{\pi^2C}{4n}$. Therefore, using the Cram{\'e}r-Chernoff bound, we 
 have for $t >0$:
\begin{align}
 \mathbb{P}\left(g(\mathbf{y}) \geq \mathbb{E}\left[ g(\mathbf{y})\right] +  t \right) \leq e^{-{2nt^2}/{\pi^2C}}.
 \label{eq:gy_bound}
\end{align}
Recalling that $g(\mbf{y}) = \sqrt{f(\mathbf{y})}$ and using Jensen's inequality, we have $\mathbb{E}\left[f(\mathbf{y}) \right] = \mathbb{E}\left[g^2(\mathbf{y}) \right] \geq \left(\mathbb{E}\left[g(\mathbf{y})\right]\right )^2$. We therefore have 
\begin{align} \nonumber
 \mathbb{P}\left(f(\mathbf{y}) \geq \mathbb{E}\left[f(\mathbf{y})\right] +  t^2 + 2t \mathbb{E}\left[g(\mathbf{y})\right] \right)
 &  \leq \mathbb{P}\left(f(\mathbf{y}) \geq \left(\mathbb{E}\left[g(\mathbf{y})\right]\right )^2 +  t^2 + 2\mathbb{E}\left[g(\mathbf{y})\right]t \right) \\ \nonumber
&  = \mathbb{P}\left( g(\mathbf{y}) \geq \mathbb{E}\left[ g(\mathbf{y})\right] +  t  \right) \\ \label{eq:temp}
&  \leq e^{-{2nt^2}/{\pi^2C}},
 \end{align}
 where the last inequality follows from \eqref{eq:gy_bound}.
Note that 
\begin{align*}
 t^2 + 2t \mathbb{E}[g(\mathbf{y})] \leq \left \{ \begin{array}{cc}
                                                   t^2(1 + 2 \mathbb{E}[g(\mathbf{y})]) & \textrm{when } t \geq 1,\\
                                                   t(1 + 2 \mathbb{E}[g(\mathbf{y})]) & \textrm{otherwise},\\
                                                  \end{array}
 \right.
\end{align*}
 and recall  that $(\mathbb{E}\left[g(\mathbf{y}) \right])^2 \leq \ex f(\mathbf{y}) \leq \Vert\bst\Vert^2/n +1$. Now, setting  $u = \max(t,t^2)(1 + 2\sqrt{1 + \Vert\bst\Vert^2/n})$, from \eqref{eq:temp} we obtain, 
 \begin{align*}
   \mathbb{P}\left(f(\mathbf{y}) \geq \mathbb{E}\left[f(\mathbf{y})\right] + u \right) \leq \mathbb{P}\left(f(\mathbf{y}) \geq \mathbb{E}\left[f(\mathbf{y})\right] +  t^2 + 2t \mathbb{E}\left[g(\mathbf{y})\right] \right) \leq e^{-{2nt^2}/{\pi^2C}}.
 \end{align*}

 \noindent Therefore, we have, for every $u > 0$:
\begin{align}\label{eq2_lem1}
\prb\left(f(\mathbf{y}) - \mathbb{E}\left[ f(\mathbf{y})\right]  \geq u \right) \leq e^{-nk\min(u,u^2)/\max(1,\Vert\bst\Vert^2/n)} 
\end{align}
for a suitable absolute positive constant $k$. Combining \eqref{eq1_lem1} and \eqref{eq2_lem1} completes the proof. \qedhere


\section{Bounds on $\Delta_{1n}, \Delta_{2n}$, $\Delta_{3n}$, $\Delta_{4n}$} \label{app:Delta12}

\textbf{Bound on $\Delta_{1n}$}:
Recall that $a=\frac{b}{b + \e}$ and $a_U= \min \{ \frac{b+u}{b+\e}, 1\}$. Hence, 
$a_U - a \leq u/b$, and it follows that $a_U^2 - a^2 = (a_U - a)(a_U + a) \leq \frac{2}{b}u$.  We therefore have
\begin{align}  \nonumber
 & \Delta_{1n}  = \frac{a_U^2}{n}\sum_{i=1}^n\ex \left[\frac{y_i^2}{\left(1 + c_Le^{-a_Uy_i^2/2}\right)^2}\right] - \frac{a^2}{n}\sum_{i=1}^n\ex \left[\frac{y_i^2}{\left(1 + ce^{-ay_i^2/2}\right)^2}\right] \\ \nonumber 
 & \leq  \frac{(a^2 + (2/b)u)}{n}\sum_{i=1}^n\ex \left[\frac{y_i^2}{\left(1 + c_Le^{-a_Uy_i^2/2}\right)^2}\right] - \frac{a^2}{n}\sum_{i=1}^n\ex \left[\frac{y_i^2}{\left(1 + ce^{-ay_i^2/2}\right)^2}\right]\\ \nonumber
 & = \frac{a^2}{n}\sum_{i=1}^n\ex \left[\frac{y_i^2}{\left(1 + c_Le^{-a_Uy_i^2/2}\right)^2} - \frac{y_i^2}{\left(1 + ce^{-ay_i^2/2}\right)^2}\right] + \frac{(2/b)u}{n}\sum_{i=1}^n\ex \left[\frac{y_i^2}{\left(1 + c_Le^{-a_Uy_i^2/2}\right)^2}\right]
\end{align}
\begin{align}
\nonumber 
 & \leq \frac{a^2}{n}\sum_{i=1}^n\ex \left[y_i^2\left( \frac{\left(ce^{-ay_i^2/2} -c_Le^{-a_Uy_i^2/2}\right) \left( 2 + c_Le^{-a_Uy_i^2/2} + ce^{-ay_i^2/2} \right) }{\left(1 + c_Le^{-a_Uy_i^2/2}\right)^2\left(1 + ce^{-ay_i^2/2}\right)^2}\right)\right] + \frac{2u}{b} \ex \left[\frac{\norm{\by}^2}{n} \right] \\ 
  & \leq \frac{a^2}{n}\sum_{i=1}^n\ex \left[2y_i^2\left( \frac{ce^{-ay_i^2/2}\left(1 -e^{-(a_U-a)y_i^2/2}\right) + \kappa_1 u e^{-a_Uy_i^2/2} }{\left(1 + c_Le^{-a_Uy_i^2/2}\right)\left(1 + ce^{-ay_i^2/2}\right)}\right)\right] + \frac{2(b + 1)u}{b},
 \label{eq10_thm1}
\end{align}
where the last inequality holds  because the definition of $c_L$ in \eqref{eq_cL} implies $c - c_L \leq \kappa_1 u$ . Now, $ce^{-ay_i^2/2}\left(1 -e^{-(a_U-a)y_i^2/2}\right)$ has a maximum when $ e^{-(a_U-a)y_i^2/2} = a/a_U$, and so,
\begin{align} 
 \frac{ce^{-ay_i^2/2}\left(1 -e^{-(a_U-a)y_i^2/2}\right) + \kappa_1 u e^{-a_Uy_i^2/2} }{\left(1 + c_Le^{-a_Uy_i^2/2}\right)\left(1 + ce^{-ay_i^2/2}\right)} & \leq 1 - \frac{a }{a_U} + \kappa_1 u   \leq \frac{1}{ab} u + \kappa_1 u. \label{eq_kappa2}
\end{align}
Using this in \eqref{eq10_thm1}, we obtain 
\begin{align}
\Delta_{1n} \leq \kappa_3 u, 
 \label{eq:au_a_close}
\end{align}
 where $\kappa_3 = 2 \frac{(b + 1)}{b} (1 + a + \kappa_1 a^2 b)u$.
 
 \textbf{Bound on $\Delta_{2n}$}: Recalling that $a=\frac{b}{b + \e}$ and $a_L= \max \{ \frac{b - u}{b+\e}, 0\}$,
it follows that $a - a_L\leq u/b$ and hence $a^2 - a_L^2 = (a - a_L)(a + a_L) \leq 2 u/b$. We therefore have
\begin{align} \nonumber
 & \frac{a^2}{n}\sum_{i=1}^n\ex \left[\frac{y_i^2}{\left(1 + ce^{-ay_i^2/2}\right)^2}\right] -  \frac{a_L^2}{n}\sum_{i=1}^n\ex \left[\frac{y_i^2}{\left(1 + c_Ue^{-a_Ly_i^2/2}\right)^2}\right] \\ \nonumber
 & \leq  \frac{a^2}{n}\sum_{i=1}^n\ex \left[\frac{y_i^2}{\left(1 + ce^{-ay_i^2/2}\right)^2}\right] - \frac{(a^2 - 2u/b)}{n}\sum_{i=1}^n\ex \left[\frac{y_i^2}{\left(1 + c_Ue^{-a_Ly_i^2/2}\right)^2}\right] \\ \nonumber
 & = \frac{a^2}{n}\sum_{i=1}^n\ex \left[\frac{y_i^2}{\left(1 + ce^{-ay_i^2/2}\right)^2} - \frac{y_i^2}{\left(1 + c_Ue^{-a_Ly_i^2/2}\right)^2}\right]  + \frac{2 u/b}{n}\sum_{i=1}^n\ex \left[\frac{y_i^2}{\left(1 + c_Ue^{-a_Ly_i^2/2}\right)^2}\right] \\ \nonumber
 & \leq \frac{a^2}{n}\sum_{i=1}^n\ex \left[y_i^2\left( \frac{\left(c_Ue^{-a_Ly_i^2/2} -ce^{-ay_i^2/2}\right)\left( 2 + c_Ue^{-a_Ly_i^2/2} + ce^{-ay_i^2/2}  \right) }{\left(1 + c_Ue^{-a_Ly_i^2/2}\right)^2\left(1 + ce^{-ay_i^2/2}\right)^2}\right)\right] + \frac{2 u}{b} \ex \left[\frac{\norm{\by}^2}{n} \right] \\ 
  & \leq \frac{a^2}{n}\sum_{i=1}^n\ex \left[2y_i^2\left( \frac{ce^{-a_Ly_i^2/2}\left(1-e^{-(a-a_L)y_i^2/2}\right) + \kappa_1 u e^{-a_Ly_i^2/2} }{\left(1 + c_Ue^{-a_Ly_i^2/2}\right)\left(1 + ce^{-ay_i^2/2}\right)}\right)\right]  + \frac{2 u}{b} \ex \left[\frac{\norm{\by}^2}{n} \right]   \label{eq11_thm1}
\end{align}
where the last inequality holds  because $c_U - c \leq \kappa_1 u$, from the definition of $c_U$ in \eqref{eq_cU}. Now, $ce^{-a_Ly_i^2/2}\left(1-e^{-(a-a_L)y_i^2/2}\right)$ has a maximum when $ e^{-(a-a_L)y_i^2/2} = a_L/a$, and so,
\begin{align*}
\frac{ce^{-a_Ly_i^2/2}\left(1-e^{-(a-a_L)y_i^2/2}\right) + \kappa_1 u e^{-a_Ly_i^2/2} }{\left(1 + c_Ue^{-a_Ly_i^2/2}\right)\left(1 + ce^{-ay_i^2/2}\right)} \leq \left(1- \frac{a_L}{a} \right) + \kappa_1 u \leq \frac{u}{ab} + \kappa_1 u.
\end{align*}
Using this in \eqref{eq11_thm1}, we obtain 
\begin{align}
\Delta_{2n} \leq \kappa_3 u, 
 \label{eq:al_a_close}
\end{align}
 where $\kappa_3 = 2 \frac{(b + 1)}{b} (1 + a + \kappa_1 a^2 b)u$.  Equations \eqref{eq:au_a_close} and \eqref{eq:al_a_close} give the required bounds on $\Delta_{1n}$ and $\Delta_{2n}$, respectively. 
 
 \textbf{Bound on $\Delta_{3n}$ and $\Delta_{4n}$}:
 We have
\begin{align}  \nonumber
 & \Delta_{3n} = \frac{a_{U}}{n}\sum_{i=1}^n\ex \left[\frac{\theta_i y_i\mathsf{1}_{\{\theta_i y_i \geq 0 \}}}{1 + c_{L}e^{-a_Uy_i^2/2}}\right] - \frac{a}{n}\sum_{i=1}^n\ex \left[\frac{\theta_i y_i\mathsf{1}_{\{\theta_i y_i \geq 0 \}}}{1 + ce^{-ay_i^2/2}}\right] \\ \nonumber 
 & \leq  \frac{(a + (1/b)u)}{n}\sum_{i=1}^n\ex \left[\frac{\theta_i y_i\mathsf{1}_{\{\theta_i y_i \geq 0 \}}}{1 + c_{L}e^{-a_Uy_i^2/2}}\right] - \frac{a}{n}\sum_{i=1}^n\ex \left[\frac{\theta_i y_i\mathsf{1}_{\{\theta_i y_i \geq 0 \}}}{1 + ce^{-ay_i^2/2}}\right]  \\ \nonumber 
 & = \frac{a}{n}\sum_{i=1}^n\ex \left[\frac{\theta_i y_i\mathsf{1}_{\{\theta_i y_i \geq 0 \}}}{1 + c_{L}e^{-a_Uy_i^2/2}} - \frac{\theta_i y_i\mathsf{1}_{\{\theta_i y_i \geq 0 \}}}{1 + ce^{-ay_i^2/2}}\right] + \frac{u}{b n}\sum_{i=1}^n\ex \left[\frac{\theta_i y_i\mathsf{1}_{\{\theta_i y_i \geq 0 \}}}{1 + c_{L}e^{-a_Uy_i^2/2}}\right]\\  \nonumber
 &\leq \frac{a}{n}\sum_{i=1}^n\ex \left[\theta_i y_i\mathsf{1}_{\{\theta_i y_i \geq 0 \}} \left( \frac{ce^{-ay_i^2/2} -c_Le^{-a_Uy_i^2/2}}{\left(1 + c_Le^{-a_Uy_i^2/2}\right)\left(1 + ce^{-ay_i^2/2}\right)}\right)\right] + \frac{u}{b n} \sum_{i=1}^n\ex \left[\theta_i y_i\mathsf{1}_{\{\theta_i y_i \geq 0 \}}\right]\\ \nonumber 
 & \stackrel{(1)}{\leq} \frac{a}{n}\sum_{i=1}^n\ex \left[ \theta_i y_i\mathsf{1}_{\{\theta_i y_i \geq 0 \}} \left( \frac{ce^{-ay_i^2/2}\left(1 -e^{-(a_U-a)y_i^2/2}\right) + \kappa_1 u e^{-a_Uy_i^2/2} }{\left(1 + c_Le^{-a_Uy_i^2/2}\right)\left(1 + ce^{-ay_i^2/2}\right)}\right)\right] + \frac{u}{bn}\left(\Vert \bst \Vert \sqrt{\ex \Vert \by \Vert^2}\right)\\ \nonumber
 & \stackrel{(2)}{\leq} \frac{a}{n}\left(\frac{1}{ab} + \kappa_1 \right)  u \left(\Vert \bst \Vert \sqrt{\ex \Vert \by \Vert^2}\right) + \frac{u}{bn} \left(\Vert \bst \Vert \sqrt{\ex \Vert \by \Vert^2}\right) =  \kappa_4 u,
\end{align}
where 
\begin{align*} 
 \kappa_4 \vcentcolon =  \sqrt{\frac{b+1}{b}}\left(2 + \kappa_1 ab  \right).
\end{align*}
In the above bound, inequality $(1)$ is obtained  using  $c - c_L \leq \kappa_1 u$, and inequality $(2)$ using \eqref{eq_kappa2}. Finally, the expression for $\kappa_4$  is obtained by recalling that $b = \Vert \bst \Vert^2/n$.

The bound for $\Delta_{4n}$ is straightforward to obtain in a similar manner and is therefore not detailed here.
\qedhere


\bibliographystyle{ieeetr}
{\small
\bibliography{References}
}

\end{document}